\documentclass{fundam}
\usepackage{amsfonts}

\usepackage{tabularx,lipsum,environ}

\usepackage{url} 
\usepackage[ruled,lined]{algorithm2e}
\usepackage{graphicx}
\usepackage{tikz}
\usetikzlibrary{automata, positioning, arrows}
\usetikzlibrary{graphs,quotes,petri, shapes}
\makeatletter
\newcommand{\oproblemtitle}[1]{\gdef\@oproblemtitle{#1}}
\newcommand{\oprobleminput}[1]{\gdef\@oprobleminput{#1}}
\newcommand{\oproblemsolution}[1]{\gdef\@oproblemsolution{#1}}
\newcommand{\oproblemmeasure}[1]{\gdef\@oproblemmeasure{#1}}
\NewEnviron{optproblem}{
  \oproblemtitle{}\oprobleminput{}\oproblemsolution{}\oproblemmeasure{}
  \BODY
  \par\addvspace{.5\baselineskip}
  \noindent
  \begin{tabularx}{0.97\textwidth}{@{\hspace{\parindent}} l X c}
    \multicolumn{3}{@{\hspace{\parindent}}l}{\@oproblemtitle} \\
    \emph{Input:} & \@oprobleminput \\
    \emph{Solution:} & \@oproblemsolution\\ 
    \emph{Measure:} & \@oproblemmeasure\\
  \end{tabularx}
  \par\addvspace{.5\baselineskip}
}
\makeatother

\makeatletter
\newcommand{\drop}[1]{}
\newcommand{\problemtitle}[1]{\gdef\@problemtitle{#1}}
\newcommand{\probleminput}[1]{\gdef\@probleminput{#1}}
\newcommand{\problemquestion}[1]{\gdef\@problemquestion{#1}}
\NewEnviron{decisionproblem}{
  \problemtitle{}\probleminput{}\problemquestion{}
  \BODY
  \par\addvspace{.5\baselineskip}
  \noindent
  \begin{tabularx}{0.97\textwidth}{@{\hspace{\parindent}} l X c}
    \multicolumn{2}{@{\hspace{\parindent}}l}{\@problemtitle} \\
    \emph{Input:} & \@probleminput \\
    \emph{Question:} & \@problemquestion
  \end{tabularx}
  \par\addvspace{.5\baselineskip}
}
\makeatother

\newlength{\problemoffset}
\setlength{\problemoffset}{0.5in}

\newcommand{\T}{\ensuremath{\mathcal{T}}}

\newcommand{\escale}[1]{\ensuremath{\textbf{\scalebox{0.8}{#1}}}}
\newcommand{\nscale}[1]{\ensuremath{\textbf{\scalebox{0.8}{#1}}}}

\newcommand{\myEdge}[2]{ \tikz[baseline=-3pt]{
\draw[#2,line width=0.5pt] (0,0) -- ++(0.6,0) node[anchor=base, yshift=3pt, pos=0.5] {\escale{$#1$}};
}}

\newcommand{\Edge}[1]{ \tikz[baseline=-1pt]{
\draw[->,line width=0.5pt] (0,0) -- ++(0.6,0) node[anchor=base, yshift=4pt, pos=0.5] {\escale{$#1$}};
}}

\newcommand{\LTS}{LTS}
\newcommand{\edge}[1]{\myEdge{#1}{->}}
\newcommand{\nedge}[1]{\edge{\neg{#1}}}

\newcommand{\U}{\ensuremath{\mathfrak{U}}}
\newcommand{\R}{\ensuremath{\mathfrak{R}}}
\newcommand{\E}{\ensuremath{\mathfrak{E}}}
\newcommand{\K}{\ensuremath{\mathfrak{K}}}
\newcommand{\mS}{\ensuremath{\mathfrak{S}}}
\newcommand{\N}{\ensuremath{\mathbb{N}}}
\newcommand{\eff}{\text{eff}}
\newcommand{\w}{\text{W}}
\newcommand{\opt}{\ensuremath{\texttt{opt}}}
\newcommand{\minnpo}{\textsc{Min~NPO}}
\tikzstyle{place}=[circle,thick,draw=blue!75,fill=blue!20,minimum size=6mm]
  \tikzstyle{red place}=[place,draw=red!75,fill=red!20]
  \tikzstyle{transition}=[rectangle,thick,draw=black!75,
  			  fill=black!20,minimum size=4mm]
			   \tikzstyle{every label}=[red]

\begin{document}

\setcounter{page}{167}
\publyear{22}
\papernumber{2136}
\volume{187}
\issue{2-4}

 \finalVersionForARXIV

\title{Some Basic Techniques Allowing Petri Net Synthesis: \\
                     Complexity and Algorithmic Issues}

\author{Raymond Devillers\\
D\'epartement d'Informatique - Universit\'e Libre de Bruxelles\\
 Boulevard du Triomphe,  B1050 Brussels, Belgium\\
 rdevil@ulb.ac.be
 \and Ronny Tredup\thanks{Address of correspondence:  Institut F\"ur Informatik - Universit\"at Rostock, Albert-Einstein-Stra{\ss}e 22,
                              D18059 Rostock, Germany}
 \\
 Institut F\"ur Informatik - Universit\"at Rostock\\
  Albert-Einstein-Stra{\ss}e 22,   D18059 Rostock, Germany\\
  ronny.tredup@uni-rostock.de
 }

 \runninghead{R. Devillers and R. Tredup}{Some Basic Techniques Allowing Petri Net Synthesis: Complexity...}

\maketitle

\begin{abstract}
In Petri net synthesis we ask whether a given transition system $A$ can be implemented by a Petri net $N$.
Depending on the level of accuracy, there are three ways how $N$ can implement $A$:
an \emph{embedding}, the least accurate implementation, preserves only the diversity of states of $A$;
a \emph{language simulation} already preserves exactly the language of $A$;
a \emph{realization}, the most accurate implementation, realizes the behavior of $A$ exactly.
However, whatever the sought implementation, a corresponding net does not always exist.
In this case, it was suggested to modify the input behavior -- of course as little as possible.
Since transition systems consist of states, events and edges, these components appear as a natural choice for modifications.
In this paper we show that the task of converting an unimplementable transition system into an implementable one by removing
as few states or events or edges as possible is NP-complete --regardless of what type of implementation we are aiming for;
we also show that the corresponding parameterized problems are $W[2]$-hard, where the number of removed components is considered as the parameter;
finally, we show there is no $c$-approximation algorithm (with a polynomial running time) for neither of these problems, for every constant $c\geq 1$.
\end{abstract}

\section{Introduction}\label{sec:introduction}%

Petri nets are a widely accepted framework for modeling and validating concurrent and distributed systems.
In general, there are two ways to deal with the behavior of Petri nets:
\emph{Analysis} starts from a given Petri net and investigates if its behavior satisfies some properties such as, for example, \emph{liveness},  \emph{reachability of deadlocks}, \emph{reachability of stable markings} or the \emph{fireability of a transition}~\cite{mcc:2017}.
In \emph{synthesis}, we deal with the opposite direction: starting from a regular behavior, given as a \emph{labeled transition system} (\LTS, for short), we try to find a Petri net that implements this behavior.

Synthesis of Petri nets has practical applications in numerous areas such as, for example, data and process mining~\cite{daglib/0027363,sac/PedroC16}, digital hardware design~\cite{644602,10.5555/2587933} and discovering of concurrency and distributability~\cite{fac/BadouelCD02,ershov/BestD11}.
On the other hand, Petri net synthesis has also been the subject of theoretical studies that, for example, aim at characterizing the complexity of synthesis~\cite{tapsoft/BadouelBD95} or look for structural properties
that classify an \LTS\  as implementable by subclasses of Petri nets such as, for example, \emph{marked graphs}, and thus allow improved synthesis procedures~\cite{acta/BestD15}.
It also led to various synthesis tools~\cite{BestS15,apt13,synet,CarmonaCK09,petrify}.

{\LTS}s have \emph{states}, \emph{events} and labeled \emph{edges}, i.e., ``source-event-target" triplets: the occurrence of the event at the source triggers a change of state to the target.
They have an \emph{initial state}, from which, triggered by an event-sequence, any other state is reachable.
Petri nets have \emph{places} containing \emph{tokens}, and an overall token distribution is considered as a \emph{marking}
 (i.e., a global state) of the net;
nets have \emph{transitions}, connected with places, which possibly can \emph{fire}:
the token distribution of their connected places may allow the firing.
A firing of a transition (locally) changes the token distribution (of its connected places) and thus (globally) the marking of the net.
They have an \emph{initial marking}, from which, triggered by the firing of a sequence of transitions, any other reachable marking is obtained.
The global behavior of a net is captured by its \emph{reachability graph}, which is a transition system, where reachable markings become states, transitions become events and edges correspond to ``marking-transition-marking"-triplets.
A Petri net $N$ implements an \LTS\  $A$, if the events of $A$ and the transitions of $N$ coincide and, moreover, if (the states of) $A$ and (the states of) the reachability graph of $N$ can be related by a mapping, which satisfies certain requirements.

According to the properties of the mappings, implementations with various degrees of ``accuracy" are possible:
Such a mapping is first required to be a \emph{simulation}, which means that every allowed sequence of events (starting at the initial state) can be simulated by a (fireable) sequence of transitions (starting at the initial marking).
However, finding a net that allows a simulation is not a challenge:
If $N$ is the net without places that has a transition $e$ for every event $e$ of $A$, then $N$ simulates $A$, since it can simply fire every sequence of events of $A$.
Moreover, this net can obviously simulate \emph{every} \LTS\  that has the same events as $A$.
From this point of view, $N$ simulates $A$ with the greatest inaccuracy and every information about the (forbidden) original behavior is lost.
On the other end of the spectrum, $N$ simulates $A$ most accurately if the simulation is an \emph{isomorphism}: then $N$ is an (exact) \emph{realization} of the behavior defined by $A$.
Unfortunately, not every \LTS\  can be realized by a Petri net.
However, this is actually not always necessary, depending on the application.
Therefore, \emph{embedding} and \emph{language simulation} have been discussed as other possible implementations in the literature, which --in a certain sense-- are less accurate, but still acceptable:
an embedding preserves at least the diversity of states, that is, the simulation map is injective;
a language simulation preserves exactly the allowed event sequences of $A$, that is, $A$ and $N$ are language-equivalent.
Unfortunately, although these \mbox{implementations} are less restrictive, they also do not ensure the existence of a corresponding net.
In order to achieve implementability, various techniques have been proposed in the literature that modify the components of the input behavior, i.e., its states, events and edges~\cite{txtcs/BadouelBD15,Verbeek:715007}.
One of the most discussed approaches among them is what is known as \emph{label-splitting}:
events are split into several (new) events and edges are relabeled so that edges that are initially labeled with the same event then are labeled with different events, which originate from the same event by splitting.
This method is relatively well-understood from the practical point of view --in the sense of available algorithms~\cite{topnoc/Carmona12,644602,topnoc/SchlachterW19}-- and from the theoretical point of view as well.
In particular, it was recently shown that achieving implementability by splitting as few events as possible is a problem hard to solve, namely NP-complete, regardless of the implementation kind~\cite{corr/abs-2002-04841,ictcs/Tredup20}.

In a natural way, the question arises whether there are other (small) suitable modifications to obtain implementable behaviors.
The answer is given by the nature of implementations itself:
States of the input are related with reachable markings of the implementing net, and if the behavior is not implementable, then for some of its states no reachable marking may exist.
Hence, the \emph{state removal} of such states (as well as the part of the {\LTS} which becomes unreachable) may lead to an implementable \LTS.
Occurrences of events at (source-) states of the input correspond to the firing of transitions in markings that are associated with the sources.
If the behavior is not implementable, then the firing of the transitions in the corresponding markings may not be possible.
Hence, removing such occurrences (i.e, the corresponding edges) may then yield an implementable behavior.
If the latter is an option, then there are two distinct ways to put the focus on the removal:
On the one hand, the modeler may allow the \emph{edge removal} of several edges that affect several events and, simultaneously, demand that some occurrences of every event remain --if this is possible.
On the other hand, the modeler may come to the conclusion that some events are less interesting than others and thus prefer the complete \emph{event removal} of the former (i.e., all of their occurrences) and the complete preservation of the latter.
Just like label-splitting, the removal of states, events and edges is a powerful transformation, since each of it is able to produce an implementable behavior:
For instance, when $A$ is degenerated to a single state, the resulting behavior is implementable.
Surely, however, this solution is not desirable.
Instead, we are interested in the corresponding optimization problems, that is, given an \LTS\  $A$, we are looking for a modification $A'$ of $A$, such that the number $\kappa$ of removed edges, events or states is as small as possible --depending on the technique applied.
If we turn the number $\kappa$ into a part of the input, then we obtain the corresponding decision version of the optimization problem.
Obviously, if we can solve the optimization problem, then we can solve the decision version as well (with only polynomial overhead).
Hence, the characterization of the computational complexity of the latter problem provides a lower bound of the complexity of the former.

In this paper, which is an extended version of~\cite{apn/Tredup21}, we completely characterize the computational complexity of \emph{state removal}, \emph{event removal} and \emph{edge removal} for all thinkable implementations they can aim for, i.e., embedding, language-simulation and realization.
In particular, we show that all of these decision problems are NP-complete and thus their optimization variant is also hard to solve.

Although the problem of finding an optimal edge, event, or state distance is difficult to solve, we still want to find such implementable modifications of \LTS.
There are at least two common ways to deal with such difficult problems:
First, one can investigate whether the decision variants can be solved exactly by an algorithm in which the exponential explosion of its running time depends functionally on only a (reasonable) parameter of the input, while the input length contributes only by a polynomial factor, i.e., one can ask whether the problem is \emph{fixed-parameter-tractable} (fpt for short).
In this paper, we show that, unfortunately, for all problems and all implementations this question is to be answered negatively
if we consider the number $\kappa$ of removed components as the parameter.
In particular, we show that all our problems parameterized by this natural parameter are at least $W[2]$-hard.

On the other hand, instead of an exact solution, one could also look for an approximate solution.
In this case we wish that such an approximate solution is provably of at least some constant quality.
That is, we look for a so-called c-approximation algorithm, where $c\geq 1$ is a constant, which for any given {\LTS} $A$ outputs in polynomial time an implementable \LTS\ $B$ which emerges from A by deleting at most $c$ times as many components as it would be optimally necessary.
In this paper, for all such problems, and all implementations, we show that such an algorithm does not exist, unless P=NP.

The paper is structured as follows:
Section~\ref{sec:prelis} provides the basic notions and supports them with some examples.
After that, Section~\ref{sec:edge_removal}, Section~\ref{sec:event_removal}, and Section~\ref{sec:state_removal} provide the NP-completeness of \emph{edge removal}, \emph{event removal}, and \emph{state removal}, respectively.
After that, in Section~\ref{sec:para_complex}, we argue that these problems are $W[2]$-hard when parameterized by the number of removed components.
In Section~\ref{sec:inapproximability}, we argue that none of the optimization versions of these problems allows for a c-approximation algorithm.
Finally, Section~\ref{sec:conclusion} briefly concludes and suggests subsequent possible developments.

\section{Preliminaries}\label{sec:prelis}%

This section provides the basic notions that we use throughout the paper and supports them with examples.

\subsection{Transition systems, and their modifications}\label{sec:ts}%

The overall starting point for the synthesis of Petri nets is a behavior that is given by a transition system:
\begin{definition}[Transition System]\label{def:transition_system}
A (deterministic, labeled) \emph{transition system} (LTS, for short) $A=(S,E, \delta,\iota)$ consists of two disjoint sets of \emph{states} $S$ and \emph{events}  $E$, a partial \emph{transition function} $\delta: S\times E \longrightarrow S$ and an \emph{initial state} $\iota\in S$.
An event $e$ \emph{occurs} at state $s$, denoted by $s\edge{e}$, if $\delta(s,e)$ is defined.
By $s\nedge{e}$ we denote that $\delta(s,e)$ is not defined.
We abridge $\delta(s,e)=s'$ by $s\edge{e}s'$ and call the latter an \emph{edge} with \emph{source} $s$ and \emph{target} $s'$.
By $s\edge{e}s'\in A$, we denote that the edge $s\edge{e}s'$ is present in $A$.
A sequence $s_0\edge{e_1}s_1, s_1\edge{e_2}s_2,\dots, s_{n-1}\edge{e_n}s_n$ of edges is called a (directed labeled) \emph{path} (from $s_0$ to $s_n$ in $A$).
$A$ is called \emph{reachable}, if there is a path from $\iota$ to $s$ for every state $s\in S$.
The \emph{language} of $A$ is the set of words $L(A)=\{e_1\dots e_n\in E^*\mid \exists s\in S: \iota\edge{e_1}\dots\edge{e_n}s\}\cup\{\varepsilon\}$, where $\varepsilon$ denotes the empty word.
\end{definition}

In the remainder of this paper, we always assume that LTSs are \emph{reachable}.
In this paper, we relate LTSs with the same set of events by so-called \emph{simulations}:

\begin{definition}
A \emph{simulation} between an \LTS\  $A=(S,E,\delta,\iota)$ and an \LTS\  $B=(S',E,\delta',\iota')$ is a mapping $\varphi: S\rightarrow S'$ such that $\varphi(\iota)=\iota'$ and if $s\edge{e}s'\in A$, then $\varphi(s)\edge{e}\varphi(s')\in B$;
$\varphi$ is called an \emph{embedding}, denoted by $A\hookrightarrow B$, if it is injective, that is, if $s\not=s'$, then $\varphi(s)\not=\varphi(s')$;
$\varphi$ is a \emph{language-simulation}, denoted by $A\triangleright B$, if $s\nedge{e}$ implies $\varphi(s)\nedge{e}$;
$\varphi$ is an \emph{isomorphism}, denoted by $A\cong B$, if it is bijective and $s\edge{e}s'\in A$ if and only if $\varphi(s)\edge{e}\varphi(s')\in B$.
\end{definition}

It is known from the literature that if $A\triangleright B$, then $L(A)=L(B)$~\cite{txtcs/BadouelBD15};
if $A\cong B$, then $A$ and $B$ are basically the same --but possibly for the names of their states.
An \LTS\  describes a behavior that is implementable or not.
In the latter case, we may apply the following \emph{modifications} in order to obtain an implementable LTS:

\begin{definition}[Edge Removal]\label{def:edge_removal}
Let $A=(S,E,\delta,\iota)$ be an \LTS.
An \LTS\  $B=(S', E',\delta',\iota)$ with state set $S'\subseteq S$ and event set $E'\subseteq E$ is an \emph{edge removal} of $A$ if, for all $e\in E'$ and all $s,s'\in S'$, holds: if $s\edge{e}s'\in B$, then $s\edge{e}s'\in A$.
By $\K=\{s\edge{e}s'\in A\mid s\edge{e}s'\not\in B\}$ we refer to the (set of) removed edges.
\end{definition}

\begin{definition}[Event Removal]\label{def:event_removal}
Let $A=(S,E,\delta,\iota)$ be an \LTS.
An \LTS\  $B=(S',E',\delta',\iota)$ with state set $S'\subseteq S$ and event set $E'\subseteq E$ is an \emph{event removal} of $A$ if for all $e\in E'$ the following is true:
$s\edge{e}s'\in B$ if and only if $s\edge{e}s'\in A$ for all $s,s'\in S$ .
By ${\E}=E\setminus E'$ we refer to the (set of) removed events.
\end{definition}

\begin{definition}[State Removal]
Let $A=(S,E,\delta,\iota)$ be an \LTS.
An \LTS\  $B=(S',E',\delta',\iota)$ with states $S'\subseteq S$ and events $E'\subseteq E$ is a \emph{state removal} of $A$ if the following two conditions are satisfied:
(1) $s\edge{e}s'\in B$ if and only if $s\edge{e}s'\in A$ for all $e\in E'$ and all $s,s'\in S'$;
(2) if $s\edge{e}s'\in A$ and $s\edge{e}s'\not\in B$, then $s\not\in S'$ or $s'\not\in S'$.
By ${\mS}=S\setminus S'$ we refer to the (set of) removed states.
\end{definition}

Notice that neither of these modifications is ``functional'', since, generally, there are several \LTS\  that can be considered as a suitable modification of $A$.
Moreover, edge removal is the most general modification introduced, since every event- or state removal is also an edge removal.
However, not every edge removal is an event removal or a state removal, not every event removal is a state removal, and not every state removal is an event removal.
In particular, there are substantial differences between these modifications that focus on different aspects of the LTS:
If $B$ is an edge removal, then there could possibly be an event $e\in E'$ for which there is an edge $s\edge{e}s'$ in $A$ that is not in $B$.
In contrast, if $B$ is an event removal and $e\in E'$, then every $e$-labeled edge of $A$ has to be present in $B$.
Furthermore, if $B$ is a state removal, then an edge $s\edge{e}s'$ of $A$ can only be missing in $B$ if its source $s$ or its target $s'$ is removed.
By contrast, the latter is not necessarily the case if $B$ is an event- or an edge removal.

\begin{example}\label{ex:modifications}
Consider the \LTS\  $A$ of Figure~\ref{fig:A}.
The \LTS\  $B$ of Figure~\ref{fig:state_removal} is a state removal of $A$ resulting by removing the state $s_3$, i.e., ${\mS}=\{s_3\}$.
$B$ is also an edge removal, where $\K=\{s_2\edge{x}s_3\}$.
However, this \LTS\  is not an event removal, since $x$ belongs to $B$, but not all $x$-labeled edges of $A$ are present.
The \LTS\  $C$ of Figure~\ref{fig:event_removal} is an event removal of $A$ such that ${\E}=\{a\}$.
$C$ is also an edge removal and $\K=\{t_0\edge{a}t_1, q_0\edge{a}q_1\}$, but is is not a state removal.
\end{example}

Note that, if $A$ is reachable, this is not always the case for $B$.
For instance, in Figure~\ref{fig:A}, if we remove event $x$, or state $s_1$ or edge $s_0\edge{x}s_1$, the result is not reachable.
In the following, we shall only consider reachable removals, however.
On the contrary, if $A$ is finite, so is $B$; in the following we shall only consider finite LTSs.

\begin{figure}[t!]
\begin{center}
\begin{minipage}{\textwidth}
\begin{center}
\begin{tikzpicture}[new set = import nodes]
\begin{scope}[nodes={set=import nodes}]%
		\node (bot) at (4.5,1) {\nscale{$\bot$}};
		\foreach \i in {0,1,2,3} { \coordinate (s\i) at (\i*1.5cm, 0) ;}
		\foreach \i in {0,1,2,3} { \node (s\i) at (s\i) {\nscale{$s_\i$}};}
		\foreach \i in {0,1} { \coordinate (t\i) at (\i*1.5cm+5.5cm, 0) ;}
		\foreach \i in {0,1} { \node (t\i) at (t\i) {\nscale{$t_\i$}};}
		\foreach \i in {0,1} { \coordinate (q\i) at (\i*1.5cm+8cm, 0) ;}
		\foreach \i in {0,1} { \node (q\i) at (q\i) {\nscale{$q_\i$}};}
\graph {(import nodes);
			s0 ->[pos=0.7,"\escale{$x$}"]s1->["\escale{$y$}"]s2->["\escale{$x$}"]s3;
			t0 ->["\escale{$x$}"]t1;
			t0 ->[ swap, bend right =40, "\escale{$a$}"]t1;
			q0 ->["\escale{$y$}"]q1;
			q0 ->[ swap, bend right =40, "\escale{$a$}"]q1;
			bot ->[swap, bend right =15, "\escale{$u$}"]s0;
			bot ->[ "\escale{$v$}"]t0;
			bot ->[bend left=15, "\escale{$w$}"]q0;
		};
\end{scope}
\end{tikzpicture}
\end{center}
\vspace{-2em}\caption{The \LTS\  $A$.
}\label{fig:A}
\end{minipage}

\vspace{1em}

\begin{minipage}{\textwidth}
\begin{center}
\begin{tikzpicture}[new set = import nodes]
\begin{scope}[nodes={set=import nodes}]%
		\node (bot) at (4.5,1) {\nscale{$\bot$}};
		\foreach \i in {0,1,2} { \coordinate (s\i) at (\i*1.5cm, 0) ;}
		\foreach \i in {0,1,2} { \node (s\i) at (s\i) {\nscale{$s_\i$}};}
		\foreach \i in {0,1} { \coordinate (t\i) at (\i*1.5cm+5.5cm, 0) ;}
		\foreach \i in {0,1} { \node (t\i) at (t\i) {\nscale{$t_\i$}};}
		\foreach \i in {0,1} { \coordinate (q\i) at (\i*1.5cm+8cm, 0) ;}
		\foreach \i in {0,1} { \node (q\i) at (q\i) {\nscale{$q_\i$}};}
\graph {(import nodes);
			s0 ->[pos=0.7, "\escale{$x$}"]s1->["\escale{$y$}"]s2;
			t0 ->["\escale{$x$}"]t1;
			t0 ->[ swap, bend right=40, "\escale{$a$}"]t1;
			q0 ->["\escale{$y$}"]q1;
			q0 ->[swap,  bend right =40, "\escale{$a$}"]q1;
			bot ->[swap, bend right =15, "\escale{$u$}"]s0;
			bot ->[ "\escale{$v$}"]t0;
			bot ->[bend left=15, "\escale{$w$}"]q0;
		};
\end{scope}
\end{tikzpicture}
\end{center}
\vspace{-2em}\caption{The state removal $B$ of $A$ that results by removing the state $s_3$.
}\label{fig:state_removal}
\end{minipage}

\vspace{1em}

\begin{minipage}{\textwidth}
\begin{center}
\begin{tikzpicture}[new set = import nodes]
\begin{scope}[nodes={set=import nodes}]%
		\node (bot) at (4.5,1) {\nscale{$\bot$}};
		\foreach \i in {0,1,2,3} { \coordinate (s\i) at (\i*1.5cm, 0) ;}
		\foreach \i in {0,1,2,3} { \node (s\i) at (s\i) {\nscale{$s_\i$}};}
		\foreach \i in {0,1} { \coordinate (t\i) at (\i*1.5cm+6cm, 0) ;}
		\foreach \i in {0,1} { \node (t\i) at (t\i) {\nscale{$t_\i$}};}
		\foreach \i in {0,1} { \coordinate (q\i) at (\i*1.5cm+9cm, 0) ;}
		\foreach \i in {0,1} { \node (q\i) at (q\i) {\nscale{$q_\i$}};}
\graph {(import nodes);
			s0 ->[pos=0.7,"\escale{$x$}"]s1->["\escale{$y$}"]s2->["\escale{$x$}"]s3;
			t0 ->["\escale{$x$}"]t1;
			q0 ->["\escale{$y$}"]q1;
			bot ->[swap, bend right =15, "\escale{$u$}"]s0;
			bot ->[ "\escale{$v$}"]t0;
			bot ->[bend left=15, "\escale{$w$}"]q0;
		};
\end{scope}
\end{tikzpicture}
\end{center}
\vspace{-1em}\caption{The event removal $C$ of $A$ that results by removing the event $a$.
}\label{fig:event_removal}
\end{minipage}
\end{center}\vspace*{-4mm}
\end{figure}

\subsection{Petri nets, and implementations}\label{sec:pn}%

Petri nets are the target model with which we want to implement \LTS:
\begin{definition}[Petri Nets]\label{def:petri_nets}
A \emph{(weighted) Petri net} $N=(P,T,f,M_0)$ consists of finite and disjoint sets of \emph{places} $P$ and \emph{transitions} $T$, a (total) \emph{flow} $f: ((P \times T) \cup (T \times P)) \rightarrow \mathbb{N}$ and an \emph{initial marking} $M_0: P \rightarrow \mathbb{N}$.
A transition $t\in T$ can \emph{fire} or \emph{occur} in a marking $M:P\rightarrow \mathbb{N}$, denoted by $M\edge{t}$, if $M(p)\geq f(p,t) $ for all places $p\in P$.
The firing of $t$ in marking $M$ leads to the marking $M'(p)=M(p)-f(p,t)+f(t,p)$ for all $p\in P$, denoted by $M\edge{t}M'$.
This notation extends to sequences $w \in T^*$ and the \emph{reachability set} $RS(N)=\{M \mid \exists w\in T^*: M_0\edge{w}M \}$ contains all the reachable markings of $N$.
The \emph{reachability graph} of $N$ is the \LTS\  $A_N=(RS(N), T,\delta, M_0)$, where, for every reachable marking $M$ of $N$ and transition $t \in T$, the transition function $\delta$ of $A_N$ is defined by $\delta(M,t) = M'$ if and only if $M \edge{t} M'$.
\end{definition}

Simulations between $A$ and $A_N$ define how a net $N$ implements an \LTS\  $A$:
\begin{definition}[Implementations]\label{def:implementation}
If $A$ is an \LTS\  and $N$ is a Petri net, then $N$ is an \emph{embedding} of $A$ if $A\hookrightarrow A_N$;
$N$ is a \emph{language-simulation} of $A$, if $A\triangleright A_N$, and
$N$ is a \emph{realization} of $A$, if $A\cong A_N$.
We say $N$ \emph{implements} $A$, if it is an embedding or a language-simulation or a realization of $A$.
\end{definition}

\subsection{Regions, separation properties, and synthesized nets}\label{sec:regions}%

If a Petri net $N$ implements an \LTS\  $A$, then the events of $A$ are the transitions of $N$.
We obtain the remaining components of $N$, that is, places, flow and initial marking, by regions of $A$:
\begin{definition}[Region]\label{def:region}
A \emph{region} $R=(sup, con, pro)$ of an \LTS\  $A=(S, E, \delta, \iota)$ consists of the mappings \emph{\underline{sup}port} $sup:S \rightarrow \mathbb{N}$ and \emph{\underline{con}sume} and \emph{\underline{pro}duce} $con, pro:E \rightarrow \mathbb{N}$ such that if $s \edge{e} s'$ is an edge of $A$, then $con(e)\leq sup(s)$ and $sup(s')=sup(s)-con(e)+pro(e)$.
\end{definition}

\begin{remark}\label{rem:implicit}
A region $R=(sup, con, pro)$ is \emph{implicitly} completely defined by $sup(\iota)$, $con$ and $pro$:
Since $A$ is reachable, there is a path $\iota\edge{e_1}\dots \edge{e_n}s_n$ such that $s=s_n$  for every state $s\in S$.
Consequently, we inductively obtain $sup(s_{i+1})$ by $sup(s_{i+1})=sup(s_i) -con(e_{i+1}) +pro(e_{i+1})$ for all $i\in \{0,\dots, n-1\}$ and $s_0 = \iota$.
Hence, for the sake of simplicity, we often present regions only implicitly, since $sup$ and thus $R$ can be obtained from $sup(\iota)$, $con$ and $pro$ (one has to check however that, if two edges lead to the same state, the yielded support is the same).
For an even more compact presentation, for $c,p\in\mathbb{N}$, we group events with the same ``behavior'' together by $\T_{c,p}^R=\{e\in E\mid con(e)=c\text{ and } pro(e)=p\}$.
\end{remark}

\begin{definition}[Effect]\label{def:effect}
Let $A=(S,E,\delta, \iota)$ be a LTS, and $(sup, con, pro)$ be a region of $A$.
If $e\in E$, then we say that $\eff(e)= - con(e) + pro(e)$ (in $\mathbb{Z}$) is the \emph{effect} of $e$.
\end{definition}

\begin{lemma}\label{lem:effect_on_a_path}
Let $A=(S,E,\delta, \iota)$ be a LTS, and $(sup, con, pro)$ be a region of $A$.
If $s_0\edge{e_1}\dots\edge{e_n}s_n$ is a path in $A$, then it holds $sup(s_n)=sup(s_0)+\sum_{i=0}^n \eff(e_i)$.
\end{lemma}
\begin{proof}
Since $R$ is a region, we have $sup(s_{i+1})=sup(s_i)-con(e_{i+1})+pro(e_{i+1})$, and thus $sup(s_{i+1})=sup(s_i)+\eff(e_{i+1})$ for all $i\in \{0,\dots, n-1\}$.
Hence, the claim follows easily by induction.
\end{proof}

If there is an implementing net $N$ for $A$, then each place correspond to a region $R=(sup, con, pro)$ of $A$:
$con(e)$ and $pro(e)$ model $f(R,e)$ and $f(e,R) $ for all transitions $e$, respectively, and $sup(\iota)$ models the initial marking $M_0(R)$.
In particular, every set of regions defines a \emph{synthesized net}:
\begin{definition}[Synthesized Net]\label{def:synthesized_net}
A set $\mathcal{R}$ of regions of \LTS\  $A=(S,E,\delta, \iota)$ defines the \emph{synthesized net} $N^{\mathcal{R}}_A=(\mathcal{R}, E, f, M_0)$, where  $f(R,e)=con(e)$ and $f(e,R)=pro(e)$ and $M_0(R)=sup(\iota)$ for all $R=(sup, con,pro)\in \mathcal{R}$ and $e\in E$.
\end{definition}

If the synthesized net is an embedding or a realization of $A$, then distinct states of $A$ correspond to distinct markings of the net.
The net $N_A^{\mathcal{R}}$ satisfies this requirement if the set $\mathcal{R}$ of regions present the \emph{state separation property}:
\begin{definition}[State Separation Property]\label{def:ssp}
A pair $(s, s')$ of distinct states of \LTS\  $A=(S,E,\delta, \iota)$ defines a \emph{states separation atom} (SSA).
A region $R=(sup, con, pro)$ \emph{solves} $(s,s')$ if $sup(s)\not=sup(s')$.
We say a state $s$ is \emph{solvable} if, for every $s'\in S\setminus \{s\}$, there is a region that solves the SSA $(s,s')$.
If every SSA or, equivalently, every state of $A$ is solvable, then $A$ has the \emph{state separation property} (SSP).
\end{definition}

If the net is a language-simulation or a realization, then the firing of a transition $e$ must be inhibited at a marking $M$ whenever the event $e$ does not occur at the state $s$ that correspond to $M$ via $\varphi$.
This is ensured if $\mathcal{R}$ witnesses the \emph{event/state separation property}:
\begin{definition}[Event/State Separation Property]\label{def:essp}
A pair $(e,s)$ of event $e$ and state $s$ of \LTS\  $A=(S,E,\delta, \iota)$ such that $s\nedge{e}$ defines an \emph{event/state separation atom} (ESSA).
A region $R=(sup, con, pro)$ \emph{solves} $(e,s)$ if $sup(s) < con(e) $.
We say an event $e$ is \emph{solvable} if, for every $s\in S$ with $s\nedge{e}$, there is a region that solves the ESSA $(e,s)$.
If every ESSA or, equivalently, every event of $A$ is solvable, then $A$ has the \emph{event/state separation property} (ESSP).
\end{definition}

A set $\mathcal{R}$ of regions of $A$ is called a \emph{witness} for the SSP or the ESSP (of $A$) if, for every SSA or ESSA, there is a region in $\mathcal{R}$ that solves it.
The next lemma is based on~\cite[p.~162]{txtcs/BadouelBD15} and~\cite[p.~214 ff.]{txtcs/BadouelBD15} and states in which case the existence of a witness and the existence of an implementation are equivalent; this will allow us to formulate our decision problems rather on the notion of witnesses than on the notion of implementations (notice that Petri nets correspond to the type of nets $\tau_{PT}$ in \cite[p.~130]{txtcs/BadouelBD15}):
\begin{lemma}[\cite{txtcs/BadouelBD15}]\label{lem:badouel}
Let $A$ be an \LTS\  and $N$ a Petri net.
\begin{enumerate}
\item
$A\hookrightarrow A_N$ if and only if there is a witness $\mathcal{R}$ for the SSP of $A$ and $N=N^{\mathcal{R}}_A$;
\item
$A\triangleright A_N$ if and only if there is a witness $\mathcal{R}$ for the ESSP of $A$ and $N=N^{\mathcal{R}}_A$;
\item
$A\cong A_n$  if and only if there is a witness $\mathcal{R}$ for both the SSP and the ESSP of $A$ and $N=N^{\mathcal{R}}_A$.
\item
Whether $A$ has the SSP or the ESSP can be decided and, in case of a positive decision, a witness can be computed in polynomial time.
\end{enumerate}
\end{lemma}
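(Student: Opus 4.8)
The plan is to treat the four items essentially by quoting and specializing the general region-theoretic results from~\cite{txtcs/BadouelBD15} to the concrete type $\tau_{PT}$ of place/transition nets. The conceptual backbone is the correspondence between places of a net implementing $A$ and regions of $A$: given a net $N$ with $A \hookrightarrow A_N$ (or $\triangleright$, or $\cong$) via a simulation map $\varphi$, each place $p$ of $N$ induces the triple $(sup_p, con_p, pro_p)$ with $sup_p(s) = M_{\varphi(s)}(p)$, $con_p(e) = f(p,e)$, $pro_p(e) = f(e,p)$, and one checks directly from Definition~\ref{def:petri_nets} and Definition~\ref{def:region} that this is a region of $A$ (the edge condition of a region is exactly the firing rule pulled back along $\varphi$, using that $A$ is reachable so that $sup_p$ is well-defined and matches $M_{\varphi(s)}(p)$). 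Conversely, Definition~\ref{def:synthesized_net} turns any set $\mathcal{R}$ of regions into a net $N_A^{\mathcal{R}}$, and the map $s \mapsto (sup(\iota) + \sum \eff(\cdot))$ computed along a path from $\iota$ to $s$ — which by Lemma~\ref{lem:effect_on_a_path} equals $(sup(s))_{R \in \mathcal{R}}$ — is a simulation from $A$ into $A_{N_A^{\mathcal{R}}}$.

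For item~1, I would argue: if $A \hookrightarrow A_N$, then $N$ is (isomorphic to) $N_A^{\mathcal{R}}$ for $\mathcal{R}$ the set of region induced by its places, and injectivity of $\varphi$ forces, for every pair $s \neq s'$, some place to separate the corresponding markings, i.e. some region in $\mathcal{R}$ solves the SSA $(s,s')$; hence $\mathcal{R}$ is an SSP-witness. Conversely, if $\mathcal{R}$ is an SSP-witness and $N = N_A^{\mathcal{R}}$, the canonical simulation $\varphi$ above is injective because any two distinct states differ on the support of some region, i.e. on the token count of some place — so $A \hookrightarrow A_N$. Item~2 is the same template with ESSP in place of SSP: the defining inequality $sup(s) < con(e)$ of an ESSA-solving region is precisely the statement that the place disables transition $e$ at the marking $\varphi(s)$, which is what a language-simulation requires at every state $s$ with $s \nedge{e}$ (and conversely, a language-simulation must disable $e$ there, forcing such a place/region to exist). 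Item~3 follows by conjunction: an isomorphism is simultaneously an injective simulation and a language-simulation (plus the ``only if'' direction on edges, which is automatic once both separation properties hold and $\varphi$ is a bijective simulation), so $A \cong A_N$ iff $\mathcal{R}$ witnesses both SSP and ESSP.

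For item~4, the decidability-and-polynomial-time claim, I would invoke the standard fact that regions of $A$ form (the lattice points of) a polyhedral cone: fixing $sup(\iota)$, the conditions ``$con(e) \le sup(s)$'' and ``$sup(s') = sup(s) - con(e) + pro(e)$'' over a spanning set of paths are linear, so for each individual SSA $(s,s')$ or ESSA $(e,s)$ one solves a linear program (over $\mathbb{Q}$, then rescales to integers) of size polynomial in $|A|$ to decide whether a separating region exists; iterating over all $O(|S|^2)$ SSAs and $O(|S|\cdot|E|)$ ESSAs and collecting the solutions gives a witness in polynomial time. The one place that needs a word of care — and the main (mild) obstacle — is the well-definedness caveat already flagged in Remark~\ref{rem:implicit}: when an LTS has two distinct paths to the same state, an implicitly specified region must yield the same support along both, so the linear system must include the ``cycle'' equations (differences of parallel paths have zero total effect); I would note that these are finitely many linear constraints obtainable from a spanning tree plus the non-tree edges of $A$, so the polynomial bound is unaffected. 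Beyond this, items~1--3 are genuinely just a dictionary translation of~\cite[pp.~162, 214ff.]{txtcs/BadouelBD15} into the present notation, so the bulk of the proof is citation plus the routine verification that a place and its induced region carry the same separating power.
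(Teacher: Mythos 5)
The paper offers no proof of Lemma~\ref{lem:badouel} at all: it is imported verbatim from \cite{txtcs/BadouelBD15} (pp.~162 and 214\,ff.), and your proposal takes the same route, namely citing that source and sketching the standard place--region dictionary (places of an implementing net induce regions via $sup_p(s)=M_{\varphi(s)}(p)$, $con_p=f(p,\cdot)$, $pro_p=f(\cdot,p)$, and conversely the canonical map $s\mapsto(sup_R(s))_{R\in\mathcal{R}}$ is a simulation into $A_{N_A^{\mathcal{R}}}$) together with the linear-programming treatment of individual separation atoms for item~4, which is exactly the argument of the cited book. The only point at sketch level that deserves an explicit word is item~3: surjectivity of the canonical simulation onto $RS(N_A^{\mathcal{R}})$ is not automatic from injectivity plus language simulation but follows by induction along firing sequences using the ESSP witness (every firing sequence of the synthesized net is matched by a path of $A$), after which the ``only if'' edge condition follows as you indicate.
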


\begin{example}\label{ex:embedding}
Let $A=(Z,E,\delta,\bot)$ be the \LTS\  of Figure~\ref{fig:A}.
The following implicitly defined region $R=(sup, con, pro)$ solves all SSA of $A$:
$sup(\bot)=8$ and $\T_{5,0}^{R}=\{v\}$, $\T_{7,0}^{R}=\{w\}$ and $\T_{1,0}^{R}=E\setminus \{v,w\}$.
According to Remark~\ref{rem:implicit}, one obtains $R$ explicitly: $sup(s_i)=7-i$ for all $i\in \{0,\dots,3\}$ and $sup(t_0)=3$, $sup(t_1)=2$, $sup(q_0)=1$ and $sup(q_1)=0$.
The set $\mathcal{R}=\{R\}$ witnesses the SSP of $A$ and the net $N=N_A^{\mathcal{R}}$ is an embedding of $A$.
Figure~\ref{fig:net} shows $N$ (top) and its reachability graph $A_N$ (bottom).
The injective simulation map $\varphi$ is defined by $\varphi(\bot)=8$, $\varphi(s_i)=7-i$ for all $i\in \{0,\dots,3\}$ and $\varphi(t_0)=3$, $\varphi(t_1)=2$, $\varphi(q_0)=1$ and $\varphi(q_1)=0$.
\end{example}

\begin{example}\label{ex:not_essp}
The \LTS\  $A=(Z,E,\delta,\bot)$ of Figure~\ref{fig:A} does not have the ESSP, since the ESSA $\alpha=(x,s_1)$ is not solvable.
This can be seen as follows:
Assume $R=(sup, con,pro)$ is a region that solves $\alpha$, that is, $con(x)> sup(s_1)$.
Since $x$ occurs at $s_0$, we have $con(x)\leq sup(s_0)$.
By $sup(s_1)=sup(s_0)-con(x)+pro(x)$ and $con(x)> sup(s_1)$, this implies $con(x)> pro(x)$.
By $t_0\edge{x}t_1$, this also implies $sup(t_0) > sup(t_1)$ and thus $con(a)> pro(a)$ by $t_0\edge{a}t_1$.
On the other hand, $x$ occurs at $s_2$, which implies $con(x)\leq sup(s_2)$.
By $con(x)>sup(s_1)$ and $s_1\edge{y}s_2$, this is only possible if $con(y) < pro(y)$, which implies $sup(q_0) < sup(q_1)$ and thus $con(a) < pro(a)$.
This is a contradiction.
Hence, $\alpha$ is not solvable.
\end{example}

\begin{example}\label{ex:removals}
The \LTS\  of Figure~\ref{fig:A} does not have the ESSP, since the ESSA $\alpha=(x,s_1)$ is not solvable, as we just saw.
However, for the \LTS\  $B$ of Figure~\ref{fig:state_removal}, which is a state removal for $A$, there is a region $R=(sup, con, pro)$ that solves the ESSA $(x,s_1)$, which is implicitly defined as follows:
$sup(\bot)=2$ and $\T_{2,1}^{R}=\{x\}$ and $\T_{1,0}^{R}=\{a,y\}$ and $\T_{0,0}^{R}=E\setminus \{a,y,x\}$.
One finds out that the remaining ESSA of $B$ are also solvable.
Hence, $B$ has the ESSP and the SSP.
If the modeler comes to the conclusion that the events $x$ and $y$, their corresponding edges and all of their sources and targets are essential for the modeled behavior, then a realizable behavior can also be obtained as the event removal $C$ of $A$ as defined in Figure~\ref{fig:event_removal}.
A Region $R'=(sup', con', pro')$ solving $(x,s_1)$ in $C$, is then implicitly defined as follows:
$sup(\bot)=1$ and $\T_{1,0}^{R'}=\{x\}$ and $\T_{0,1}^{R'}=\{y\}$ and $\T_{0,0}^{R'}=E\setminus \{x,y\}$.
\end{example}

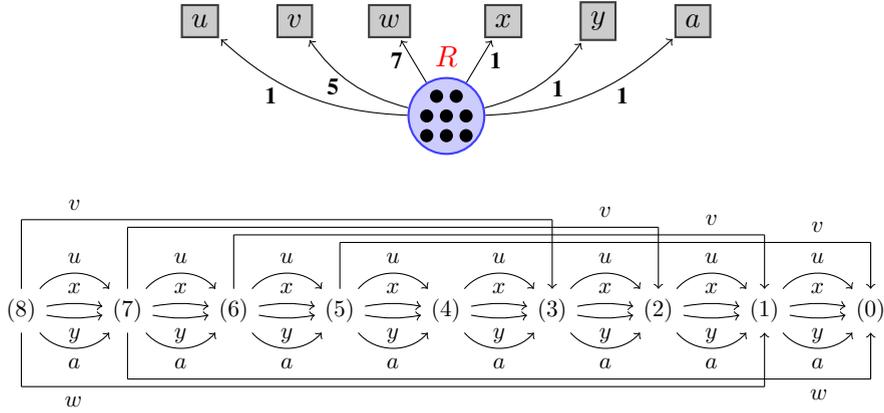
\begin{figure}[t!]
\begin{center}
\begin{minipage}{\textwidth} 
\begin{center}
\tikzstyle{place}=[circle,thick,draw=blue!75,fill=blue!20,minimum size=10mm]
\tikzstyle{red place}=[place,draw=red!75,fill=red!20]
\tikzstyle{transition}=[rectangle,thick,draw=black!75,
  			  fill=black!20,minimum size=4mm]
			   \tikzstyle{every label}=[red]
\begin{tikzpicture}[new set = import nodes]
\begin{scope}[node distance=1.25cm,bend angle=45,auto]
\node (m) at (0,0) {};
\node [transition, node distance=0.75cm] (w)[left of=m]{$w$};
\node [transition] (v)[left of=w]{$v$};
\node [transition] (u)[left of=v]{$u$};
\node [transition, node distance=0.75cm] (x)[right of=m]{$x$};
\node [transition] (y)[right of=x]{$y$};
\node [transition] (a)[right of=y]{$a$};
\node [place] (R)[tokens=8, below of=m, label=above: $R$]{}
      edge [post,left] node {$\nscale{7}$}  (w)
      edge [post,below, pos=0.7, bend left= 20] node {$\nscale{5}$} (v)
      edge [post,below, pos=0.7, bend left= 20] node {$\nscale{1}$}  (u)
      edge [post,right] node {$\nscale{1}$}  (x)
      edge [post,below, pos=0.7, bend right= 20] node {$\nscale{1}$}  (y)
      edge [post,below, pos=0.7, bend right= 20] node {$\nscale{1}$}  (a);
\end{scope}
\end{tikzpicture}
\end{center}

\begin{center}
\begin{tikzpicture}[new set = import nodes]
\begin{scope}[yshift=-3cm, nodes={set=import nodes}]
		\foreach \i in {8,...,0} { \coordinate (\i) at (11.2cm-\i*1.4cm, 0) ;}
		\foreach \i in {8,...,0} { \node (\i) at (\i) {\nscale{$(\i)$}};}
		\coordinate (h00) at (0,1.2);
		\coordinate (h01) at (7,1.2);
		\coordinate (h10) at (1.4,1.1);
		\coordinate (h11) at (8.4,1.1);
		\coordinate (h20) at (2.8,1);
		\coordinate (h21) at (9.8,1);
		\coordinate (h30) at (4.2,0.9);
		\coordinate (h31) at (11.2,0.9);
		\coordinate (g00) at (0,-1);
		\coordinate (g01) at (9.8,-1);
		\coordinate (g10) at (1.4,-0.9);
		\coordinate (g11) at (11.2,-0.9);
\graph {(import nodes);
	
			8 ->[bend left=10, "\escale{$x$}"]7 ->[bend left=10, "\escale{$x$}"]6 ->[bend left=10, "\escale{$x$}"]5 ->[bend left=10, "\escale{$x$}"]4 ->[bend left=10, "\escale{$x$}"]3 ->[bend left=10, "\escale{$x$}"]2 ->[bend left=10, "\escale{$x$}"]1 ->[bend left=10, "\escale{$x$}"]0;
			8 ->[bend left=50, "\escale{$u$}"]7->[bend left=50, "\escale{$u$}"]6->[bend left=50, "\escale{$u$}"]5->[bend left=50, "\escale{$u$}"]4->[bend left=50, "\escale{$u$}"]3->[bend left=50, "\escale{$u$}"]2->[bend left=50, "\escale{$u$}"]1->[bend left=50, "\escale{$u$}"]0;
			8 ->[swap, bend right=10, "\escale{$y$}"]7 ->[swap, bend right=10, "\escale{$y$}"]6 ->[swap, bend right=10, "\escale{$y$}"]5 ->[swap, bend right=10, "\escale{$y$}"]4 ->[swap, bend right=10, "\escale{$y$}"]3 ->[swap, bend right=10, "\escale{$y$}"]2 ->[swap, bend right=10, "\escale{$y$}"]1 ->[swap, bend right=10, "\escale{$y$}"]0;
			8 ->[swap, bend right=50, "\escale{$a$}"]7->[swap, bend right=50, "\escale{$a$}"]6->[swap, bend right=50, "\escale{$a$}"]5->[swap, bend right=50, "\escale{$a$}"]4->[swap, bend right=50, "\escale{$a$}"]3->[swap, bend right=50, "\escale{$a$}"]2->[swap, bend right=50, "\escale{$a$}"]1->[swap, bend right=50, "\escale{$a$}"]0;
			8--h00--[pos=0.1,"\escale{$v$}"]h01->3;
			7--h10--[pos=0.9,"\escale{$v$}"]h11->2;
			6--h20--[pos=0.9,"\escale{$v$}"]h21->1;
			5--h30--[pos=0.9,"\escale{$v$}"]h31->0;
			8--g00--[swap,pos=0.07, "\escale{$w$}"]g01->1;
			7--g10--[pos=0.93,swap, "\escale{$w$}"]g11->0;
			};
\end{scope}
\end{tikzpicture}
\end{center}\vspace*{-6mm}
\caption{The net $N=N_A^{\mathcal{R}}$ and its reachability graph $A_N$ according to Example~\ref{ex:embedding}.}\label{fig:net}
\end{minipage}
\end{center}\vspace*{-5mm}
\end{figure}

\section{The complexity of  Edge Removal}\label{sec:edge_removal}%

According to Lemma~\ref{lem:badouel}, the question whether a particular implementation for a given \LTS\  exists is equivalent to the question whether the \LTS\  has the separation properties that correspond to the implementation.
In this section, we are interested in modifying an \LTS\  into an implementable one by the removal of a bounded number of edges.
In particular, we are interested in the computational complexity of the following decision problems:

\noindent
\fbox{\begin{minipage}[t][1.8\height][c]{0.97\textwidth}
\begin{decisionproblem}
  \problemtitle{\textsc{Edge Removal for Embedding}}
  \probleminput{An \LTS\  $A=(S,E,\delta,\iota)$, a natural number $\kappa$.}
  \problemquestion{Does there exist an edge removal $B$ for $A$ by $\K$ that has the SSP and satisfies $\vert\K\vert\leq \kappa$?}
\end{decisionproblem}
\end{minipage}}
\smallskip

\noindent
\fbox{\begin{minipage}[t][1.8\height][c]{0.97\textwidth}
\begin{decisionproblem}
  \problemtitle{\textsc{Edge Removal for Language-Simulation}}
  \probleminput{An \LTS\  $A=(S,E,\delta,\iota)$, a natural number $\kappa$.}
  \problemquestion{Does there exist an edge removal $B$ for $A$ by $\K$ that has the ESSP and satisfies $\vert\K\vert\leq \kappa$?}
\end{decisionproblem}
\end{minipage}}
\smallskip

\noindent
\fbox{\begin{minipage}[t][1.8\height][c]{0.97\textwidth}
\begin{decisionproblem}
  \problemtitle{\textsc{Edge Removal for Realization}}
  \probleminput{An \LTS\  $A=(S,E,\delta,\iota)$, a natural number $\kappa$.}
  \problemquestion{Does there exist an edge removal $B$ for $A$ by $\K$ that has the ESSP and the SSP and satisfies $\vert\K\vert\leq \kappa$?}
\end{decisionproblem}
\end{minipage}}
\smallskip

\subsection{Edge Removal aiming at language-simulation or realization}\label{sec:edge_removal_realization}%

The following theorem characterizes the complexity of both {Edge Removal for Language-simulation}, and {Edge Removal for realization}:
\begin{theorem}\label{the:edge_removal_realization}
{Edge Removal for Language-simulation} as well as {Edge Removal for realization} are NP-complete.
\end{theorem}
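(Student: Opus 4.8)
The plan is to establish membership in NP and then NP-hardness by a polynomial reduction from a known NP-complete problem.

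Membership in NP is the easy direction. A certificate is simply the edge set $\K$ to be removed, together with a witness $\mathcal{R}$ for the relevant separation property (ESSP, or both ESSP and SSP) of the resulting LTS $B$. Checking $|\K|\le\kappa$, checking that $B$ is a legitimate (reachable) edge removal, and checking that $\mathcal{R}$ solves every ESSA/SSA of $B$ are all polynomial-time tasks; this is exactly what Lemma~\ref{lem:badouel} buys us. In fact, by part~(4) of Lemma~\ref{lem:badouel} one does not even need $\mathcal{R}$ in the certificate, since once $\K$ is guessed, the existence of a witness can be decided (and a witness computed) in polynomial time.

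For NP-hardness, I would reduce from a standard covering-type problem, most naturally \textsc{Vertex Cover} or \textsc{Hitting Set} (these are the usual sources when, as hinted by the later $W[2]$-hardness and inapproximability claims, the natural parameter is the number of removed components). The construction should take a graph $G$ (or a set system) and build an LTS $A$ together with a budget $\kappa$ so that: (i) $A$ has, for each vertex/element, one ``selectable'' edge whose removal corresponds to putting that vertex into the cover; (ii) for each edge/constraint of $G$ there is an unsolvable ESSA (an obstruction analogous to the $\alpha=(x,s_1)$ obstruction of Example~\ref{ex:not_essp}) that is forced to disappear precisely when at least one of the two incident selectable edges has been removed; and (iii) no other ESSA or SSA of $A$ is problematic, or any that is can be repaired within budget without interfering with the gadgets. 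Then $A$ admits an edge removal of size $\le\kappa$ with the ESSP (resp. ESSP+SSP) if and only if $G$ has a vertex cover of size $\le\kappa$. The same LTS should work for both the language-simulation variant and the realization variant: one engineers the gadgets so that the only genuine obstruction is of ESSP type while SSP can always be achieved (as in Examples~\ref{ex:embedding} and~\ref{ex:removals}, where a single support-counting region separates all states), so that ``has ESSP'' and ``has ESSP and SSP'' coincide on every edge removal of $A$ that stays within budget.

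The main obstacle is step (iii): controlling \emph{all} separation atoms simultaneously. Removing edges to kill one unsolvable ESSA can create new unreachable states (so one must pass to a reachable sub-LTS, changing which atoms even exist) and can destroy regions that were solving other atoms; conversely, the gadget edges must be arranged so that removing a ``wrong'' set of edges never cheaply fixes an obstruction that should have cost a cover element. Concretely I expect the heart of the proof to be a pair of lemmas: a \emph{soundness} lemma showing that from a size-$k$ vertex cover one can exhibit an explicit edge removal together with an explicitly defined witness family of support-counting regions (in the compact $\T_{c,p}^R$ notation of Remark~\ref{rem:implicit}), and a \emph{completeness} lemma showing, by an argument in the style of Example~\ref{ex:not_essp} on the effects $\eff(e)=-con(e)+pro(e)$ along the cycles/paths of the gadgets, that any edge removal achieving the separation property must have removed, for every edge of $G$, at least one of its two designated edges — hence projects to a vertex cover of size $\le\kappa$. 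Verifying that the reachable part of $B$ retains no unintended obstruction, and that the budget $\kappa$ is tight, is where the bookkeeping will be heaviest.
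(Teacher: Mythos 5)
Your overall architecture matches the paper's: NP-membership exactly as you describe (guess $\K$, then invoke Lemma~\ref{lem:badouel}), and NP-hardness by one reduction from \textsc{Hitting Set} in which each element of $\U$ gets a designated removable edge, each set $M_i$ produces an ESSP-obstruction that vanishes exactly when an edge of one of its elements is removed, and the same construction serves both variants because the forward direction uses only the ESSP while the converse produces a removal enjoying both the ESSP and the SSP (this is precisely the asymmetry of Lemmas~\ref{lem:edge_removal_essp_implies_hs} and~\ref{lem:edge_removal_hs_implies_essp_and_ssp}).

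However, what you give is a plan, not a proof, and the part you defer --- your step (iii), which you yourself call the heart of the argument --- is left unresolved, and two concrete ideas needed to resolve it are absent. First, the coupling mechanism: in the paper each element $X_i$ lives in a two-state gadget $F_i$ with the edge $f_{i,0}\edge{X_i}f_{i,1}$ and auxiliary edges labelled $a_0,\dots,a_\kappa$ that are \emph{shared by all} the $F_i$, while each set $M_i$ becomes a path on which its elements occur once except $X_{i_0}$, which occurs at both ends; it is the shared $a$-events together with this double occurrence that make the effect argument (in the spirit of Example~\ref{ex:not_essp}) show that the ESSA $(X_{i_0},t_{i,j,1})$ is solvable only if some edge $f_{i_\ell,0}\edge{X_{i_\ell}}f_{i_\ell,1}$ with $X_{i_\ell}\in M_i$ has been removed. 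Second, and more critically, budget control: nothing in your sketch prevents a removal from spending its $\kappa$ edges on mutilating the constraint gadgets themselves (or the edges leaving $\iota$), destroying the obstructions without ``selecting'' any element. The paper neutralizes this by redundancy: $\kappa+1$ copies $T_{i,0},\dots,T_{i,\kappa}$ of every constraint path and $\kappa+1$ parallel $a_\ell$-edges in every $F_i$, so that at least one copy of everything survives any removal of at most $\kappa$ edges; this underlies the ``without loss of generality only edges $f_{i,0}\edge{X_i}f_{i,1}$ are removed'' normalization opening the proof of Lemma~\ref{lem:edge_removal_essp_implies_hs}, and it also disposes of your reachability worry. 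Without an explicit construction realizing these two devices (or substitutes for them), the equivalence between a size-$\kappa$ edge removal with the ESSP and a size-$\lambda$ hitting set is not established, so the hardness half of the theorem remains open in your write-up.
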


It is easy to see that the addressed problems are in NP:
If there is an adequate edge removal for $A$, then a Turing machine can guess $\K$ by a non-deterministic computation in a time polynomial in the size of the input.
After that, the machine can deterministically (and polynomially)
compute $B$ and, since the size of $B$ is bounded by the size of $A$, it can compute a witness for the relevant property of $B$ in a time polynomial in the size of the input by Lemma~\ref{lem:badouel}.
Hence, in order to complete the NP-completeness part of Theorem~\ref{the:edge_removal_realization}, it remains to prove the NP-hardness of the problems.
This proof is based on a polynomial reduction of the problem \textsc{Hitting Set}:

\noindent
\fbox{\begin{minipage}[t][1.7\height][c]{0.97\textwidth}
\begin{decisionproblem}
  \problemtitle{\textsc{Hitting Set (HS)}}
  \probleminput{A triple $({\U},M,\lambda)$ that consist of a finite set ${\U}$, a set $M=\{M_0,\dots, M_{m-1}\}$ of subsets of ${\U}$ and a natural number $\lambda$.}
  \problemquestion{Does there exist a hitting set $Z$ for $({\U},M)$, that is, $Z \subseteq {\U}$ and $Z \cap M_i\not=\emptyset$ for all $i\in \{0,\dots, m-1\}$, that satisfies $\vert Z \vert \leq \lambda$?}
\end{decisionproblem}
\end{minipage}}

\begin{example}\label{ex:hs}
The instance $({\U}, M, 4)$ such that ${\U}=\{X_0,\dots,X_5\}$ and $M=\{M_0,\dots, M_8\}$, where
$M_0=\{X_0,X_1\}$,
$M_1=\{X_0,X_3\}$,
$M_2=\{X_0,X_5\}$,
$M_3=\{X_1,X_2\}$,
$M_4=\{X_1,X_5\}$,
$M_5=\{X_2,X_3\}$,
$M_6=\{X_2,X_4\}$,
$M_7=\{X_3,X_4\}$,
$M_8=\{X_4,X_5\}$, has, for example, the hitting set
$Z=\{X_0,X_2,X_3,X_5\}$ and thus allows a positive decision.
\end{example}

Without loss of generality, we may assume that $\lambda \leq \vert \U\vert$, and, for all $i\in \{0,\dots, m-1\}$, $|{\U}|>|M_i|>1$.
Indeed, if $M_i=\emptyset$ we know there is no solution,  if $M_i=\{x\}$ we know $x$ must be in $Z$ and the problem may be reduced to another, smaller, one.

 \begin{theorem}[\cite{coco/Karp72}]\label{the:hitting_set_np}
{Hitting Set} is NP-complete.
\end{theorem}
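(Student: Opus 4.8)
The plan is to recognize that \textsc{Hitting Set} is a classical result due to Karp~\cite{coco/Karp72}, so proving it reduces to two standard steps: establishing membership in NP, and establishing NP-hardness by a polynomial reduction from a problem already known to be NP-complete. I would reduce from \textsc{Vertex Cover}, which is among Karp's original list and yields the cleanest correspondence.

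Membership in NP is immediate. Given an instance $(\U,M,\lambda)$, a nondeterministic machine guesses a subset $Z\subseteq\U$ and then deterministically checks, in time linear in the size of the instance, that $\vert Z\vert\leq\lambda$ and that $Z\cap M_i\neq\emptyset$ for every $i\in\{0,\dots,m-1\}$. Both verifications are clearly polynomial, so the problem lies in NP.

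For the NP-hardness, I would start from an undirected graph $G=(V,F)$ together with an integer $k$, and construct the instance $(\U,M,\lambda)$ by taking $\U=V$, introducing for each edge $\{u,v\}\in F$ the two-element set $\{u,v\}$ as a member of $M$, and setting $\lambda=k$. This transformation is computable in polynomial time. Its correctness rests on the elementary observation that a set $Z\subseteq V$ meets every such two-element set exactly when, for each edge $\{u,v\}\in F$, at least one of $u,v$ belongs to $Z$ — which is precisely the condition that $Z$ is a vertex cover of $G$. Hence $G$ has a vertex cover of size at most $k$ if and only if $(\U,M,\lambda)$ has a hitting set of size at most $\lambda$, and since \textsc{Vertex Cover} is NP-complete, so is \textsc{Hitting Set}.

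Finally, I would check compatibility with the normalizing assumptions recorded just after Example~\ref{ex:hs}: every set produced by the reduction has cardinality $2>1$, and for any graph on at least three vertices one has $\vert\U\vert=\vert V\vert>2$, so the constructed instances already satisfy $\vert\U\vert>\vert M_i\vert>1$. Because the result is classical, I expect no genuine obstacle here; the only point requiring attention is verifying the cover-versus-hit equivalence and confirming that the reduction respects these size restrictions, after which the argument is entirely routine.
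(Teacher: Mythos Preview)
Your argument is correct and entirely standard: membership in NP via guess-and-check, and hardness via the obvious reduction from \textsc{Vertex Cover} (each edge becomes a two-element set). There is no gap.

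However, there is nothing to compare against: the paper does not give its own proof of this theorem. It is stated with a citation to Karp~\cite{coco/Karp72} and used as a black box; the paper's effort goes entirely into the reductions \emph{from} \textsc{Hitting Set} to the various removal problems. So your write-up supplies a proof where the paper simply invokes the classical result. Your closing remark about the normalizing assumptions $\vert\U\vert>\vert M_i\vert>1$ is a nice touch, though note that the paper justifies those assumptions independently (by observing that degenerate sets can be eliminated by preprocessing), not by restricting the source of the reduction.
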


In the following, unless explicitly stated otherwise, let $({\U}, M, \lambda)$ be an arbitrary but fixed input of \textsc{HS} such that  ${\U}=\{X_0,\dots, X_{n-1}\}$ and $M=\{M_0,\dots,M_{m-1}\}$, where $M_i=\{X_{i_0},\dots, X_{i_{m_i-1}}\}$ (and thus $\vert M_i\vert =m_i$) for all $i\in \{0,\dots, m-1\}$.
For technical reasons, we assume without loss of generality that $i_0 <\dots < i_{m_i-1}$ for the elements  $X_{i_0},\dots, X_{i_{m_i-1}}$ of the set $M_i$ for all $i\in \{0,\dots, m-1\}$.

\paragraph{The General Idea of the Reduction.}
We reduce $(\U,M,\lambda)$ to a pair $(A,\kappa)$ of \LTS\  $A$ and a natural number $\kappa$ in such a way that the following implications are satisfied:
If there is an edge removal $B$ of $A$ that removes at most $\kappa$ edges and has the ESSP, then there is a hitting set with at most $\lambda$ elements for $(\U,M)$.
Conversely, if there is a hitting set with $\lambda$ elements for $(\U,M)$, then there is an edge removal $B$ of $A$ that removes at most $\kappa$ edges and has both the ESSP and the SSP.
Notice that such a reduction proves the hardness of both \textsc{Edge Removal for Language-Simulation} and \textsc{Edge Removal for Realization}.
The announced \LTS\ $A$ consists of several components.
Just as it is common in the world of reductions, we refer to these components as \emph{gadgets}.

\paragraph{The Reduction.}
For a start, we define $\kappa=\lambda$.
Moreover, for every $i\in \{0,\dots, m-1\}$, and for every $j\in \{0,\dots, \kappa\}$, the \LTS\  $A$ has the following gadget $T_{i,j}$, that uses the elements of $M_i=\{ X_{i_0},\dots, X_{m_i-1}\}$ as events:

\begin{center}
\begin{tikzpicture}[new set = import nodes]
\begin{scope}[nodes={set=import nodes}]
	
	\node (t) at (-0.9,0){$T_{i,j}=$};
	\foreach \i in {0,...,2} {\coordinate (\i) at (\i*2cm,0);}
	\foreach \i in {3} {\coordinate (\i) at (\i*2cm+6,0);}
	\foreach \i in {4} {\coordinate (\i) at (8.75,0);}
	\foreach \i in {0,1} {\node (\i) at (\i) {\nscale{$t_{i,j,\i}$}};}
	\node (2) at (2) {$\dots$};
	\node (3) at (3) {\nscale{$t_{i,j,m_i}$}};
	\node (4) at (4) {\nscale{$t_{i,j,m_i+1}$}};
	\graph {
	(import nodes);
			0->["\escale{$X_{i_0}$}"]1->["\escale{$X_{i_1}$}"]2->["\escale{$X_{i_{m_i-1}}$}"]3->["\escale{$X_{i_0}$}"]4;
			
			};
\end{scope}
\end{tikzpicture}
\end{center}

Notice that $X_{i_0}$ is both the first, and the last event along the path $T_{i,j}$, while all the other events of $M_i\setminus\{X_{i_0}\}$ occur exactly once.
Moreover, since there are essentially $\kappa+1$ copies of the same sequence of events $X_{i_0}\dots X_{m_i-1}X_{i_0}$, after the removal of at most $\kappa$ edges, there is a least one of these sequences left.\\

For every $i\in \{0,\dots, n-1\}$, the \LTS\  has the following gadget $F_i$ that has $X_i$ as event, as well as an $a_\ell$-labeled edge, for all $\ell\in \{0,\dots, \kappa\}$, with the same direction:

\begin{center}
\begin{tikzpicture}[new set = import nodes]
\begin{scope}[xshift=7.5cm, nodes={set=import nodes}]
	\node (F) at (-0.75, 0) {$F_i=$};
	\coordinate (0) at (0,0);
	\coordinate (1) at (3,0);
	\node (0) at (0) {\nscale{$f_{i,0}$}};
	\node (1) at (1) {\nscale{$f_{i,1}$}};
	\node (dots) at (1.5,0.4) {\nscale{$\vdots$}};
	\graph {
	(import nodes);
			
			0->[ swap, bend right=90,  "\escale{$X_i$}"]1;
			0->[swap, bend right =35, "\escale{$a_0$}"]1;
			0->[ bend right  =5, swap,"\escale{$a_1$}"]1;
			0->[swap, bend left =35,swap , "\escale{$a_{\kappa-1}$}"]1;
			0->[swap, bend left =90, swap, "\escale{$a_\kappa$}"]1;

			};
\end{scope}
\end{tikzpicture}
\end{center}

Finally, the \LTS\ $A$ has the initial state $\iota$, and uses the following edges to connect  $\iota$ to the just introduced gadgets:
\begin{itemize}
\itemsep=0.9pt
\item
For all $i\in \{0,\dots, m-1\}$, and all $j\in \{0,\dots, \kappa\}$, the \LTS\  $A$ has the edge $\iota\Edge{u_i^j}t_{i,j,0}$.
\item
For all $i\in \{0,\dots, n-1\}$, the \LTS\  $A$ has the edge $\iota\Edge{v_i}f_{i,0}$.
\end{itemize}

The resulting \LTS\  is $A=(S,E,\delta,\iota)$.
It is easy to check that $A$ is reachable.
Figure~\ref{fig:reduction_edge_removal} provides (essentially) the gadgets of the LTS $A$ that are based on the input
of Example~\ref{ex:hs}.
In what follows, we prove that $A$ satisfies the conditions stated above in the description of the general idea of reduction.
(Recall that $\K$ refers to the set of edges removed from $A$, where the edge removal is clear from the context).

\begin{figure}[htb]
\vspace*{-1mm}
\begin{center}
\begin{tikzpicture}[new set = import nodes]
\begin{scope}[nodes={set=import nodes}]

		\foreach \j / \i in {
		0/0,0/1,0/2,0/3,  1/0,1/1,1/2,1/3,  2/0,2/1,2/2,2/3,  3/0,3/1,3/2,3/3,  4/0,4/1,4/2,4/3} {\coordinate (t\j\i) at (\i*1.5cm,-\j*1cm);}
	\foreach \j / \i in {
	0/0,0/1,0/2,0/3,  1/0,1/1,1/2,1/3,  2/0,2/1,2/2,2/3,  3/0,3/1,3/2,3/3,  4/0,4/1,4/2,4/3}  {\node (t\j\i) at (t\j\i) {\nscale{$t_{\j,j,\i}$}};}
	\graph {
	(import nodes);
		t00->["\escale{$X_0$}"]t01->["\escale{$X_1$}"]t02->["\escale{$X_0$}"]t03;
		t10->["\escale{$X_0$}"]t11->["\escale{$X_3$}"]t12->["\escale{$X_0$}"]t13;
		t20->["\escale{$X_0$}"]t21->["\escale{$X_5$}"]t22->["\escale{$X_0$}"]t23;
		t30->["\escale{$X_1$}"]t31->["\escale{$X_2$}"]t32->["\escale{$X_1$}"]t33;
		t40->["\escale{$X_1$}"]t41->["\escale{$X_5$}"]t42->["\escale{$X_1$}"]t43;
			};
\end{scope}
\begin{scope}[xshift=6cm,nodes={set=import nodes}]

		\foreach \j / \i in {
		5/0,5/1,5/2,5/3,  6/0,6/1,6/2,6/3,  7/0,7/1,7/2,7/3,  8/0,8/1,8/2,8/3} {\coordinate (t\j\i) at (\i*1.5cm,-\j*1cm+5cm);}
	\foreach \j / \i in {
	5/0,5/1,5/2,5/3,  6/0,6/1,6/2,6/3,  7/0,7/1,7/2,7/3,  8/0,8/1,8/2,8/3}  {\node (t\j\i) at (t\j\i) {\nscale{$t_{\j,j,\i}$}};}
	\graph {
	(import nodes);
		t50->["\escale{$X_2$}"]t51->["\escale{$X_3$}"]t52->["\escale{$X_2$}"]t53;
		t60->["\escale{$X_2$}"]t61->["\escale{$X_4$}"]t62->["\escale{$X_2$}"]t63;
		t70->["\escale{$X_3$}"]t71->["\escale{$X_4$}"]t72->["\escale{$X_3$}"]t73;
		t80->["\escale{$X_4$}"]t81->["\escale{$X_5$}"]t82->["\escale{$X_4$}"]t83;
		
			};
\end{scope}
\end{tikzpicture}
\end{center}

\begin{center}
\begin{tikzpicture}[new set = import nodes]
\begin{scope}[nodes={set=import nodes}]
		
	\foreach \j / \i in {0/0, 0/1,   1/0,1/1} { \coordinate (f\j\i) at (\i*3cm, -\j*3cm); }
	
	\foreach \j / \i in {0/0, 0/1,   1/0,1/1}  { \node (f\j\i) at (f\j\i) {\nscale{$f_{\j, \i}$}}; }
	\graph {
	(import nodes);
			f01<-[dashed,  "\escale{$X_0$}"]f00;
			f11<-["\escale{$X_1$}"]f10;
			\foreach \j in {0,1}  {
			f\j0->[bend left=20, , "\escale{$a_0$}"]f\j1;
			f\j0->[bend left=45, , "\escale{$a_1$}"]f\j1;
			f\j0->[bend left=90, , "\escale{$a_2$}"]f\j1;
			f\j0->[swap, bend right =35,  "\escale{$a_3$}"]f\j1;
			f\j0->[swap, bend right =70,  "\escale{$a_4$}"]f\j1;
			}
			
			};
\end{scope}
\begin{scope}[xshift=4cm, nodes={set=import nodes}]%
		
	\foreach \j / \i in {2/0,2/1, 	3/0,3/1} { \coordinate (f\j\i) at (\i*3cm, -\j*3cm+6cm); }
	
	\foreach \j / \i in {2/0,2/1, 	 3/0,3/1}  { \node (f\j\i) at (f\j\i) {\nscale{$f_{\j, \i}$}}; }
	\graph {
	(import nodes);
			f21<-[dashed,"\escale{$X_2$}"]f20;
			f31<-[dashed, "\escale{$X_3$}"]f30;
			\foreach \j in {2,3}  {
			f\j0->[bend left=20,  "\escale{$a_0$}"]f\j1;
			f\j0->[bend left=45,  "\escale{$a_1$}"]f\j1;
			f\j0->[bend left=90,  "\escale{$a_2$}"]f\j1;
			f\j0->[swap, bend right =35,  "\escale{$a_3$}"]f\j1;
			f\j0->[swap, bend right =70,  "\escale{$a_4$}"]f\j1;
			}
			
			};
\end{scope}
\begin{scope}[xshift=8cm, nodes={set=import nodes}]
		
	\foreach \j / \i in { 4/0,4/1,   5/0,5/1} { \coordinate (f\j\i) at (\i*3cm, -\j*3cm+12cm); }
	
	\foreach \j / \i in {4/0,4/1,   5/0,5/1}  { \node (f\j\i) at (f\j\i) {\nscale{$f_{\j, \i}$}}; }
	\graph {
	(import nodes);
			f41<-["\escale{$X_4$}"]f40;
			f51<-[dashed, "\escale{$X_5$}"]f50;
			\foreach \j in {4,5}  {
			f\j0->[bend left=20, , "\escale{$a_0$}"]f\j1;
			f\j0->[bend left=45, , "\escale{$a_1$}"]f\j1;
			f\j0->[bend left=90, , "\escale{$a_2$}"]f\j1;
			f\j0->[swap, bend right =35,  "\escale{$a_3$}"]f\j1;
			f\j0->[swap, bend right =70,  "\escale{$a_4$}"]f\j1;
			}
			
			};
\end{scope}
\end{tikzpicture}
\end{center}\vspace*{-5mm}
\caption{For a fixed $j\in \{0,\dots, 3\}$, the gadgets $T_{0,j},\dots, T_{8,j}$ and $F_0,\dots, F_5$ of the LTS $A$ (Section~\ref{sec:edge_removal}) based on Example~\ref{ex:hs}.
Dashed lines correspond to edges that are removed in accordance to the edge removal $B$ defined for the proof of Lemma~\ref{lem:edge_removal_hs_implies_essp_and_ssp} and correspond to the hitting set $\{X_0,X_2,X_3, X_5\}$.}\label{fig:reduction_edge_removal}
\end{figure}
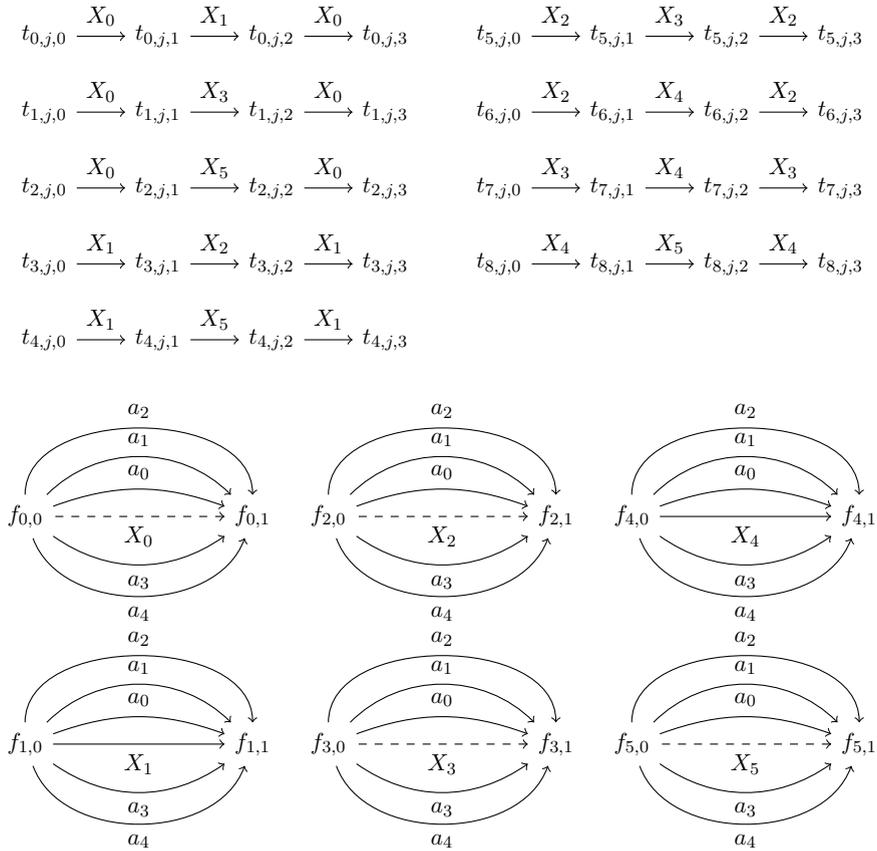

\begin{lemma}\label{lem:edge_removal_essp_implies_hs}
If there is an edge removal $B$ of $A$ by $\K$ that satisfies $\vert \K\vert\leq \kappa$ and has the ESSP, then there is a hitting set of size at most $\lambda=\kappa$ for $(\U,M)$.
\end{lemma}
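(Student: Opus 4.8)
The plan is to read a hitting set directly off the deleted edges. I would set
\[
Z=\{\,X_l\mid f_{l,0}\edge{X_l}f_{l,1}\in\K\,\},
\]
the set of elements whose ``$X$-edge'' inside the gadget $F_l$ has been removed. Distinct elements of $\U$ yield distinct such edges, so $|Z|\le|\K|\le\kappa=\lambda$ is immediate, and the whole work is to prove that $Z$ meets every $M_i$. I would argue by contradiction: fix $i\in\{0,\dots,m-1\}$ and assume $Z\cap M_i=\emptyset$. Then for every $X_l\in M_i$ the edge $f_{l,0}\edge{X_l}f_{l,1}$ still belongs to $B$, hence $f_{l,0},f_{l,1}\in S'$, and since $B$ is reachable these states carry a well-defined support in every region of $B$.

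The first step is to locate an \emph{untouched} copy of the $i$-th $T$-gadget. Among the $\kappa+1$ copies $T_{i,0},\dots,T_{i,\kappa}$, each removed edge touches at most one of them -- it is either an internal edge of a single $T_{i,j}$ or a single entry edge $\iota\edge{u_i^j}t_{i,j,0}$, and neither touches a copy with another index -- so from $|\K|\le\kappa$ at least one copy $T_{i,j}$ survives entirely: $\iota\edge{u_i^j}t_{i,j,0}$ and all internal edges of $T_{i,j}$ lie in $B$. In that copy $t_{i,j,0},t_{i,j,1},t_{i,j,m_i}$ are reachable in $B$, the event $X_{i_0}$ occurs at $t_{i,j,0}$ and at $t_{i,j,m_i}$, while $t_{i,j,1}$ has in $A$ (hence in $B$) only the outgoing edge labelled $X_{i_1}\ne X_{i_0}$; thus $(X_{i_0},t_{i,j,1})$ is an ESSA of $B$ and, by the ESSP, is solved by some region $R=(sup,con,pro)$ of $B$, i.e.\ $sup(t_{i,j,1})<con(X_{i_0})$. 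Mimicking Example~\ref{ex:not_essp}: from $con(X_{i_0})\le sup(t_{i,j,0})$ and $sup(t_{i,j,1})=sup(t_{i,j,0})+\eff(X_{i_0})$ I get $\eff(X_{i_0})<0$, and, using Lemma~\ref{lem:effect_on_a_path} along the intact path $t_{i,j,0}\edge{X_{i_0}}\dots\edge{X_{i_{m_i-1}}}t_{i,j,m_i}$ together with $con(X_{i_0})\le sup(t_{i,j,m_i})$,
\[
sup(t_{i,j,0})+\sum_{k=0}^{m_i-1}\eff(X_{i_k})=sup(t_{i,j,m_i})\ge con(X_{i_0})>sup(t_{i,j,1})=sup(t_{i,j,0})+\eff(X_{i_0}),
\]
whence $\sum_{k=1}^{m_i-1}\eff(X_{i_k})>0$ (here $m_i\ge2$ is used). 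So some $k$ with $1\le k\le m_i-1$ satisfies $\eff(X_{i_k})>0$; fix it, and note $X_{i_0}\ne X_{i_k}$, so $F_{i_0}$ and $F_{i_k}$ are distinct gadgets.

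Now I would play $F_{i_0}$ and $F_{i_k}$ against each other. By the standing assumption $Z\cap M_i=\emptyset$ both keep their $X$-edge in $B$, so in $R$ the difference $sup(f_{i_0,1})-sup(f_{i_0,0})$ equals $\eff(X_{i_0})<0$ and also equals $\eff(a_\ell)$ for every $\ell$ whose edge $f_{i_0,0}\edge{a_\ell}f_{i_0,1}$ survives; symmetrically, $\eff(a_\ell)=\eff(X_{i_k})>0$ for every $\ell$ whose edge survives in $F_{i_k}$. Each of $F_{i_0},F_{i_k}$ carries $\kappa+1$ $a$-edges, and writing $r_0,r_1$ for the numbers deleted from them we have $r_0+r_1\le|\K|\le\kappa$, so the two sets of surviving labels have total size $(\kappa+1-r_0)+(\kappa+1-r_1)\ge\kappa+2$ inside $\{0,\dots,\kappa\}$; by pigeonhole some label $a_{\ell^*}$ survives in both gadgets. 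Then $\eff(a_{\ell^*})$ would be at once negative and positive -- a contradiction. Hence $Z\cap M_i\ne\emptyset$ for every $i$, i.e.\ $Z$ is a hitting set with $|Z|\le\lambda$, as required.

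I expect the only genuinely delicate points to be bookkeeping ones: making sure the chosen copy $T_{i,j}$ is really reachable in $B$ after the removal (which is exactly why the entry edges $u_i^j$ must be counted among the deletions that can ``touch'' a copy), and the counting that forces a common surviving label $a_{\ell^*}$ in the two $F$-gadgets. It is precisely in this last step that the $\kappa+1$ parallel $a$-edges together with the budget $\kappa$ interact -- the one extra parallel edge beyond $\kappa$ is essential -- while everything else is the region arithmetic of Definition~\ref{def:region} and Lemma~\ref{lem:effect_on_a_path}, parallel to Example~\ref{ex:not_essp}.
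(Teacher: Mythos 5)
Your proof is correct, and its core is the same as the paper's: solve the ESSA $(X_{i_0},t_{i,j,1})$ in an intact copy of $T_{i,j}$, use the region arithmetic of Lemma~\ref{lem:effect_on_a_path} to force $\eff(X_{i_0})<0$ and $\eff(X_{i_k})>0$ for some $k\geq 1$, and then derive a contradiction through a shared $a$-event linking $F_{i_0}$ and $F_{i_k}$. Where you genuinely deviate is in the bookkeeping around deletions that do not lie on the $f_{l,0}\edge{X_l}f_{l,1}$ edges: the paper first normalizes $B$ by a ``without loss of generality'' step, reinserting all removed edges of the $T_{i,j}$'s, all $a_\ell$- and $v_i$-edges, and arguing that the regions witnessing the ESSP extend to the enlarged LTS, after which it can always invoke the single label $a_0$ as present in every $F$-gadget. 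You instead work with $B$ exactly as given and replace that normalization by two pigeonhole counts: one to find a fully surviving copy $T_{i,j}$ (including its entry edge), and one pairwise count showing that the two relevant $F$-gadgets, carrying $2(\kappa+1)$ $a$-edges against a budget of $\kappa$ deletions, must share a surviving label $a_{\ell^*}$. Your route avoids having to justify that regions of the old $B$ extend to the reinserted edges (the delicate part of the paper's WLOG), at the price of a slightly more local counting argument; the paper's normalization, in exchange, yields a cleaner canonical form for $\K$ that it reuses verbatim in the later state-removal lemmas. Your size bound $\vert Z\vert\leq\vert\K\vert\leq\kappa=\lambda$ and the use of $\vert M_i\vert>1$ are both handled correctly.
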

\begin{proof}
Let $B=(S',E',\delta',\iota)$ be an edge removal of $A$ by $\K$
that satisfies $\vert \K\vert\leq \kappa$ and has the ESSP.
First, we may observe that, without loss of generality, we may assume that all the members of $\K$ are of the form
$f_{i,0}\edge{X_i}f_{i,1}$, so that in particular $B$ remains reachable.

Indeed, for each $i\in \{0,\dots, m-1\}$, since there are more than $\kappa$ copies of $T_{i,0}$, at least one of them is
completely present in $B$ (including the edge from $\iota$). Let us assume it is $T_{i,j}$ for $j\in \{0,\dots, \kappa\}$.
Then we may remove from $\K$ all the members of the form $t_{i,k,h}\edge{X_\ell}t_{i,k,h+1}$ or $\iota\Edge{u_i^k}t_{i,k,0}$ and reinsert them in $T_{i,k}$. This reduces the size of $\K$ as well as the set of ESSAs and,
by choosing $con(u_i^k)=con(u_i^j)$, and $pro(u_i^k)=pro(u_i^j)$,
we get regions that solve the same ESSAs in $T_{i,k}$ as in $T_{i,j}$.

Similarly, since the set $\{a_0,\dots, a_\kappa\}$ contains $\kappa+1$ elements and $\vert \K\vert\leq\kappa$, there is an index $h\in \{0,\dots, \kappa\}$ such that all $a_h$-labeled edges and thus particularly the edge $f_{i,0}\edge{a_h}f_{i,1}$ is present in $B$ for all $i\in \{0,\dots, n-1\}$.
Hence, if we reinsert in $B$ all the missing $a_\ell$-labeled edges, $\ell\in \{0,\dots, \kappa\}$, as well as the missing $v_i$-labeled ones, each region of the old $B$ extends immediately in a region of the new one, solving all the needed remaining ESSAs.

In the following, we show that this implies that the set $Z=\{X_i\in {\U} \mid f_{i,0}\edge{X_i}f_{i,1} \in \K\}$ defines a hitting set, with at most $\lambda=\kappa$ elements, for $(\U,M)$.

Since $B$ has the ESSP,  for each $i\in \{0,\dots, m-1\}$ and $j\in \{0,\dots, \kappa\}$,
there is a region that solves $\alpha=(X_{i_0}, t_{i,j,1})$.
Let $R=(sup, con, pro)$ be such a region.
We argue that there is some $\ell\in \{0,\dots, m_i-1\}$ such that the $X_{i_\ell}$-labeled edge
$f_{i_\ell,0}\Edge{X_{i_\ell}}f_{i_\ell,1}\in \K$, i.e., is not present in $B$.

Assume, for a contradiction, that $f_{i_\ell,0}\Edge{X_{i_\ell}}f_{i_\ell,1}\in B$ for all $\ell\in \{0,\dots, m_i-1\}$.
By $t_{i,j,0}\Edge{X_{i_0}}$, we have $con(X_{i_0}) \leq sup(t_{i,j,0})$;
since $R$ solves $\alpha$, we have that $con(X_{i_0}) > sup(t_{i,j,1})$.
This implies $con(X_{i_0}) > pro(X_{i_0})$, and thus $sup(f_{i_0,0}) > sup(f_{i_0,1})$.
Moreover, by $t_{i,j,m_i}\Edge{X_{i_0}}$, we have that $con(X_{i_0}) \leq sup(t_{i,j,m_i})$, which, by $con(X_{i_0}) > sup(t_{i,j,1})$, implies
\[sup(t_{i,j,1}) <  sup(t_{i,j,m_i}) = sup(t_{i,j,1}) +\sum_{\ell=1}^{m_i-1}\eff(X_{i_\ell})\]
 and thus $\sum_{\ell=1}^{m_i-1}\eff(X_{i_\ell})> 0$.
In particular, there is an $\ell\in \{1,\dots, m_i-1\}$ (so that $i_\ell\neq i_0$)
such that $con(X_{i_\ell}) < pro(X_{i_\ell})$, which implies $sup(f_{i_\ell,0}) < sup(f_{i_\ell,1})$.

Since the edges $f_{i_0,0}\edge{a_0}f_{i_0,1}$ and $f_{i_\ell,0}\edge{a_0}f_{i_\ell,1}$ are both present in $B$,
 by $sup(f_{i_0,0})>sup(f_{i_0,1})$, we obtain $con(a_0) > pro(a_0)$ and, by $sup(f_{i_\ell,0}) < sup(f_{i_\ell,1})$, we get $con(a_0) < pro(a_0)$, which is a contradiction.
Consequently, $\{f_{i_0,0}\Edge{X_{i_0}}f_{i_0,1}, f_{i_\ell,0}\Edge{X_{i_\ell}}f_{i_\ell,1} \}\cap \K\not=\emptyset$ and thus $M_i\cap Z\not=\emptyset$.
Since $i$ was arbitrary, we have $M_i\cap Z\not=\emptyset$ for all $i\in \{0,\dots, m-1\}$ and
 $Z$ defines a hitting set for $(\U,M)$.
\end{proof}

Conversely, we show:

\begin{lemma}\label{lem:edge_removal_hs_implies_essp_and_ssp}
If there is a hitting set with at most $\lambda$ elements for $(\U,M)$, then there is an edge removal $B$ of $A$ with $\vert \K\vert \leq \lambda$ that has the ESSP and the SSP.
\end{lemma}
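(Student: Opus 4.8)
The plan is to start from a hitting set $Z\subseteq\U$ with $|Z|\leq\lambda$ and define the edge removal $B$ of $A$ by setting $\K=\{f_{i,0}\Edge{X_i}f_{i,1}\mid X_i\in Z\}$, i.e.\ we delete exactly the $X_i$-labeled edge inside the gadget $F_i$ for each $X_i\in Z$ (this matches the dashed edges in Figure~\ref{fig:reduction_edge_removal}). Since each $F_i$ still retains its $\kappa+1$ many $a_\ell$-edges, $B$ remains reachable, and clearly $|\K|=|Z|\leq\lambda=\kappa$. It then remains to exhibit, for every SSA and every ESSA of $B$, a region solving it. I would organize the regions into a small number of families indexed by the structural features of $A$: (i) a "distance from $\iota$" region assigning support growing along every path, which alone separates $\iota$ from all other states and, more generally, separates states lying at different depths; (ii) for each $T_{i,j}$-chain, regions of the shape used in Example~\ref{ex:removals} (support $2$ at $\iota$, $\con(X_{i_0})=2,\pro(X_{i_0})=1$, the remaining events of $M_i$ getting effect that keeps the support high enough before the two $X_{i_0}$-occurrences and low enough in between) that solve the ESSAs $(X_{i_0},t_{i,j,1})$ and the SSAs inside the chain; (iii) for each $F_i$ with $X_i\notin Z$, a region distinguishing $f_{i,0}$ from $f_{i,1}$ and handling the $a_\ell$ and $v_i$ events there; and (iv) "selector" regions used to separate states lying in different gadgets.

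The crucial point — the one the whole reduction hinges on — is the family (ii). For a fixed $i$, since $Z\cap M_i\neq\emptyset$ we may pick some $X_{i_\ell}\in Z$, so the edge $f_{i_\ell,0}\Edge{X_{i_\ell}}f_{i_\ell,1}$ is absent in $B$; this frees us to give $X_{i_\ell}$ an effect in the region that would have been impossible in $A$ (where the $a_0$-edges in $F_{i_0}$ and $F_{i_\ell}$ forced all the $X_{i_k}$ inside $M_i$ to have the same sign of effect, cf.\ the contradiction in Lemma~\ref{lem:edge_removal_essp_implies_hs}). Concretely I would set $\sup(\iota)$ large, give $X_{i_0}$ effect $0$ but $\con(X_{i_0})$ just above $\sup(t_{i,j,1})$, give the events of $M_i\setminus\{X_{i_0},X_{i_\ell}\}$ effect $0$, and give $X_{i_\ell}$ a negative effect so that $\sup(t_{i,j,1})<\con(X_{i_0})$ while $\sup(t_{i,j,m_i})\ge\con(X_{i_0})$; all events outside $M_i$ and all the $u,v,a_\ell$ events get whatever constant behavior (typically $\con=\pro=0$, or a small produce at the connecting edge) is needed to stay consistent and keep supports nonnegative. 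One checks that this region also solves the ESSAs $(X_{i_0},s)$ at the other states where $X_{i_0}$ does not occur, by the usual monotonicity (Lemma~\ref{lem:effect_on_a_path}).

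The main obstacle I expect is bookkeeping rather than conceptual: one must verify that the finitely many region families jointly cover \emph{all} ESSAs (for every event $e$ and every state $s$ with $s\nedge{e}$) and \emph{all} SSAs (every pair of distinct states), while simultaneously checking that each proposed region is genuinely a region — i.e.\ that when two edges lead to the same state the induced support agrees (Remark~\ref{rem:implicit}) and that no support goes negative. In particular the events $X_{i_0}$ that recur as both first and last event of a chain, and events $X_i$ occurring in several $M_i$'s simultaneously, need care: the constant-effect choices in the "other" gadgets must be compatible with the nonzero-effect choice in the distinguished gadget. I would handle this by making the effect of every $X_i$ equal to $0$ in \emph{all} these regions except the single region tailored to the $M_i$ witnessed by an element of $Z$, which is consistent precisely because of the removed edge. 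Finally, since $B$ is shown to have both the SSP and the ESSP, Lemma~\ref{lem:badouel} gives that $B$ is realizable (hence in particular language-simulation-implementable), completing the proof.
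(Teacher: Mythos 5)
Your construction of $B$ (delete exactly the edge $f_{i,0}\edge{X_i}f_{i,1}$ for each $X_i$ in the hitting set) and your key insight -- that this removal frees the effect of a hitting-set element from being tied to the effect of the $a$-events -- are exactly the paper's approach. However, the concrete region you sketch for the crucial family (ii) does not work, and the failure is not mere bookkeeping. To solve the ESSA $(X_{i_0},t_{i,j,1})$ you need $\mathit{con}(X_{i_0})>sup(t_{i,j,1})$ while the edge $t_{i,j,0}\edge{X_{i_0}}t_{i,j,1}$ (which is never removed) forces $\mathit{con}(X_{i_0})\leq sup(t_{i,j,0})$; together these force $\mathit{con}(X_{i_0})>\mathit{pro}(X_{i_0})$, i.e.\ $X_{i_0}$ must have \emph{strictly negative} effect, so your choice ``$X_{i_0}$ has effect $0$'' is impossible for any solving region. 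Moreover, since $X_{i_0}$ also occurs at $t_{i,j,m_i}$, you need $sup(t_{i,j,m_i})\geq \mathit{con}(X_{i_0})>sup(t_{i,j,1})$, hence the total effect of the events strictly between the two $X_{i_0}$-occurrences must be \emph{positive}; giving the freed event $X_{i_\ell}$ a \emph{negative} effect and everything else effect $0$ yields $sup(t_{i,j,m_i})<sup(t_{i,j,1})$, a contradiction. So your sign pattern is exactly backwards: the paper's regions $R_1,R_2$ give $X_{i_0}$ (and all other non-freed events, including the $a_h$) effect $-1$ and give the freed $X_{i_\ell}\in Z$ a large positive effect (resp.\ the mirrored choice), with the entry supports of the untouched gadgets chosen ``big enough''.

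This also invalidates your consistency plan of making ``the effect of every $X_i$ equal to $0$ in all these regions except the single tailored one'': inside the tailored region, every $X_{i'}\notin Z$ whose gadget $F_{i'}$ is intact must have the same effect as the $a$-events (parallel edges), and that common effect must equal the necessarily negative effect of $X_{i_0}$, so it cannot be $0$; only the event whose $F$-edge was removed may deviate. Two further points the paper handles that your plan omits: the case $X_{i_0}\in Z$ needs its own (easier) region, since then $X_{i_0}$ itself is the freed event; and the ESSAs $(X_{i_\ell},s)$ for the non-initial events of each chain, as well as the ESSAs $(e,f_{i,1})$, still have to be solved by additional explicit regions ($R_0,R_3,R_4$ in the paper). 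With the signs corrected and these cases added, your outline becomes essentially the paper's proof.
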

\begin{proof}
Let $Z=\{X_{j_0},\dots, X_{j_{\lambda-1}}\}$, where $j_0,\dots, j_{\lambda-1}\in \{0,\dots, n-1\}$, be a hitting set with $\lambda$ elements for $(\U,M)$.
(Notice that every hitting set with at most $\lambda$ elements for $(\U,M)$ can be easily extended to a hitting set with exactly $\lambda$ elements.)
Moreover, let $B=(S,E,\delta', \iota)$ be the \LTS\  that originates from $A$ by removing, for all $\ell\in \{0,\dots, \lambda-1\}$, the edges $f_{j_\ell,0}\edge{X_{j_\ell}}f_{j_\ell,1}$, and nothing else. 
Obviously, $B$ is an edge removal of $A$ such that $\vert \K\vert \leq\lambda=\kappa$.
In the following, we argue that $B$ has the SSP and the ESSP.
Let $I=\bigcup_{i=0}^{m-1}(\bigcup_{j=0}^\kappa \{t_{i,j,0}\}) \cup \bigcup_{i=0}^{n-1}\{f_{i,0}\}$ be the set of the initial states of the gadgets of $B$, and let $E'=\U\cup \{a_0,\dots, a_\kappa\}$ be the set of the events of them.

For a start, it is easy to see that if $s$ and $s'$ are states of different gadgets (or if one of it equals $\iota$), then the SSA $(s,s')$ is solvable.
Likewise, one easily finds out that if $e$ is an event and $s$ is a state that do not belong to the same gadget of $B$,
then the ESSA $(e,s)$ is solvable.
Hence, in the following, we restrict our attention to the solvability of separation atoms whose components belong to the same gadgets.
Notice that we define the following regions implicitly according to Remark~\ref{rem:implicit}.
For an even more compact representation, until explicitly stated otherwise, we assume $\iota$ to be mapped to $0$, and we define additionally only the support of the initial states of the gadgets, and $con$, and $pro$ for the events of the gadgets.
According to Remark~\ref{rem:implicit}, one computes the support values of the remaining states of the gadgets.
Moreover, since the $u_i^j$'s, and the $v_i$'s are unique, it is easy to see that there $con$ and $pro$ values can always be chosen suitably.

The following region $R_0=(sup_0, con_0, pro_0)$ solves all the remaining SSA and,
for all $i\in \{0,\dots, n-1\}$ and all $e\in \U \cup \{a_0,\dots, a_{\kappa}\}$, the ESSA $(f_{i,1},e)$:
$\;sup_0(\iota)=0$;
for all $i\in \{0,\dots, m-1\}$, $\T_{0,m_i}^{R_0}=\{u_i^0,\dots, u_i^\kappa\}$;
$\T_{0,1}^{R_0}=\{v_0,\dots, v_{n-1}\}$
and $\T_{1,0}=\U\cup \{a_0, \dots, a_{\kappa}\}$.

Hence, it remains to consider the ESSA of the $T_{i,j}$'s.
Let $i\in \{0,\dots, m-1\}$ and $j\in \{ 0,\dots, \kappa \}$ be arbitrary but fixed.

In the following, we investigate the cases $X_{i_0}\in Z$ and $X_{i_0}\not\in Z$ separately.

\medskip\noindent
\textbf{Case $X_{i_0}\not\in Z$:}
Unless explicitly stated otherwise, let $\ell \in \{1,\dots, m_i-1\}$ be the smallest index such that $X_{i_\ell}\in Z$ (which exists since $Z$ is a hitting set.)
Notice that there are $\ell-1$ variable events between the first occurrence of $X_{i_0}$, at the beginning of  $T_{i,j}$,
and $X_{i_\ell}$, and that there are $m_i-(\ell+1)$ variable events between $X_{i_\ell}$ and the second occurrence of $X_{i_0}$ at the end of $T_{i,j}$.

\medskip
The following region $R_1=(sup_1, con_1, pro_1)$ solves $(X_{i_0}, s)$ for all $s\in \{t_{i,j,1},\dots, t_{i,j,\ell}, \}\cup\{t_{i,j,m_i+1}\}$:
for all $s\in I$,
if $s=t_{i,j,0}$, then $sup(t_{i,j,0})=\ell$, otherwise $sup(s)=\ell\cdot \sum_{i=0}^{m-1} m_i$ (that is, $sup(s)$ is just chosen ``big enough'') and $\T_{\ell,\ell-1}^{R_1}=\{X_{i_0}\}$ (this implies $sup(t_{i,j,1})=\ell-1$);
$\T_{1,0}^{R_1}=\U\setminus \{X_{i_\ell}\}\cup \{a_0,\dots, a_\kappa\}$ and  $\T_{0,m_i-1}^{R_1}=\{X_{i_\ell}\}$.

The following region $R_2=(sup_2, con_2, pro_2)$ solves $(X_{i_0}, s)$ for all $s\in \{t_{i,j,\ell+1},\dots, t_{i,j,m_i-1}\}$:
for all $s\in I$,
if $s=t_{i,j,0}$, then $sup(t_{i,j,0})=m_i-\ell-1$, otherwise $sup_2(s)=\ell\cdot \sum_{i=0}^{m-1} m_i$ (again, $sup(s)$ is ``big enough'') and $\T_{m_i-\ell-1, m_i-\ell}^{R_2}=\{X_{i_0}\}$;
$\T_{0,1}^{R_2}= \U\setminus \{X_{i_\ell}\}\cup \{a_0,\dots, a_\kappa\}$ and
$\T_{m_i, 0}^{R_2}= \{X_{i_\ell}\}$.
Notice that $R_2$ implies $sup_2(t_{i,j,\ell+1})=0$ by $m_i-\ell-1 + (\ell-1)=m_i$ and thus $sup_2(t_{i,j,m_i})=m_i-\ell-1$.

This proves the solvability of $X_{i_0}$.
In the following, we argue for the solvability of the other variables events in $T_{i,j}$.
Let $\ell\in \{1,\dots, m-1\}$ be arbitrary but fixed (that is, $X_{i_\ell}$ is not necessarily in $Z$).
Notice that $X_{i_\ell}$ is preceded by exactly $\ell$ variable events and followed by $m_i - \ell $ variable events (since $X_{i_0}$ occurs twice).

The following region $R_3=(sup_3, con_3, pro_3)$ solves $(X_{i_\ell}, s )$ for all $s\in \{t_{i,j,0},\dots, t_{i,j,\ell-1} \}$:
for all $s\in I$,
if $s=t_{i,j,0}$, then $sup(t_{i,j,0})=0$, otherwise $sup_3(s)=\ell\cdot \sum_{i=0}^{m-1} m_i$;
$\T_{\ell, \ell+1}^{R_3}=\{X_{i_\ell}\}$ and $\T_{0,1}^{R_3}=E'\setminus \{X_{i_\ell}\}$.

The following region $R_4=(sup_4, con_4, pro_4)$ solves $(X_{i_\ell}, s )$ for all $s\in \{t_{i,j,\ell+1},\dots, t_{i,j,m_i+1}\}$:
for all $s\in I$,
if $s=t_{i,j,0}$, then $sup(t_{i,j,0})=m_i+1$, otherwise $sup_4(s)=\ell\cdot \sum_{i=0}^{m-1} m_i$;
$\T_{m_i-\ell+1, m_i-\ell}^{R_4}=\{X_{i_\ell}\}$ and $\T_{1,0}^{R_4}=E'\setminus \{X_{i_\ell}\}$.

Since $\ell$ was arbitrary, this proves that all ESSA of $T_{i,j}$ are solvable.

\smallskip\noindent
\textbf{Case $X_{i_0}\in Z$:}
Recall that there are $m_i-1$ events between the two occurrences of $X_{i_0}$ in $T_{i,j}$.

\medskip
The following region $R_5=(sup_5, con_5, pro_5)$ solves $(X_{i_0}, s)$ for all $s\in \{t_{i,j,1},\dots, t_{i,j,m_i-1}\}\cup\{t_{i,j,m_i+1}\}$:
for all $s\in I$, if $s=t_{i,j,0}$, then $sup_5(s)=m_i-1$, otherwise $sup_4(s)=\ell\cdot \sum_{i=0}^{m-1} m_i$;
$\T_{m_i-1, 0}^{R_5}=\{X_{i_0}\}$ and $\T_{0,1}^{R_5}=E'\setminus \{X_{i_0}\}$.

Similar to the former case, one proves, for all $(e,s)$ with $e\in \{X_{i_1},\dots, X_{i_{m_i-1}}\}$ and $s\in S(T_{i,j})$, if $(e,s)$ is an ESSA of $T_{i,j}$, then $(e,s)$ is solvable.

\medskip
Altogether, by the arbitrariness of $i$ and $j$, we conclude that $B$ has the ESSP.\\
This completes the proof.
\end{proof}

\subsection{Edge Removal aiming at embedding}\label{sec:edge_removal_embedding}%

In the conclusion of~\cite{corr/abs-2002-04841}, Schlachter~and~Wimmel mention en passant that their reduction-technique imply that {Edge Removal for embedding} is NP-complete.
The following theorem re-formulates this claim, and is followed by a full (and new) proof.

\begin{theorem}\label{the:edge_removal_embedding}
\textsc{Edge Removal} aiming at embedding is NP-complete.
\end{theorem}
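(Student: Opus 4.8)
The plan is the usual two steps. Membership of \textsc{Edge Removal for Embedding} in NP is obtained exactly as for Theorem~\ref{the:edge_removal_realization}: a machine guesses the polynomially many deleted edges $\K$, computes the reachable part $B$ of the resulting \LTS, and by Lemma~\ref{lem:badouel} decides in polynomial time whether $B$ has the SSP. For NP-hardness I would again reduce from \textsc{Hitting Set}, as in Section~\ref{sec:edge_removal_realization} but with a new \LTS\ $A$ whose only unsolvable separation problems are ``set-conflicts'' that can all be removed precisely by freeing one event per element of a hitting set; I set $\kappa=\lambda$.

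The \LTS\ $A$ uses two kinds of gadget and one globally fixed fresh event $a$. For each $X_j\in\U$ there is a \emph{pin gadget} $F_j$: two states $f_{j,0},f_{j,1}$ and the parallel edges $f_{j,0}\edge{X_j}f_{j,1}$ and $f_{j,0}\edge{a}f_{j,1}$; while both are present the reconvergence forces $\eff(X_j)=\eff(a)$ in every region of $A$, whereas deleting only $f_{j,0}\edge{X_j}f_{j,1}$ (which leaves $f_{j,1}$ reachable through $a$) releases $\eff(X_j)$. For each $M_i=\{X_{i_0},\dots,X_{i_{m_i-1}}\}$ and each $j\in\{0,\dots,\kappa\}$ there is a \emph{conflict gadget} $T_{i,j}$ with a fresh private event $b_{i,j}$: an $X$-branch $c_{i,j}\edge{X_{i_0}}\cdots\edge{X_{i_{m_i-1}}}\edge{b_{i,j}}d_{i,j}$ and an $a$-branch $c_{i,j}\edge{b_{i,j}}h_{i,j,1}\edge{a}\cdots\edge{a}e_{i,j}$ which traverses $b_{i,j}$ once and $a$ exactly $m_i$ times. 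The $\kappa+1$ copies ensure that $\le\kappa$ deletions leave one copy of each family entirely present. Finally $\iota$ is connected to every $f_{j,0}$ and every $c_{i,j}$ by distinct fresh edges, making $A$ reachable, finite and deterministic. A Parikh-image count (using Lemma~\ref{lem:effect_on_a_path}) shows that placing $b_{i,j}$ at the \emph{end} of the $X$-branch and at the \emph{start} of the $a$-branch makes the two occurrences of $b_{i,j}$ cancel in $\eff$ while keeping every other pair of states of $T_{i,j}$ solvable, so that the only unsolvable SSAs of $A$ are the pairs $(d_{i,j},e_{i,j})$, for which every region satisfies $sup(d_{i,j})-sup(e_{i,j})=\sum_{k=0}^{m_i-1}\eff(X_{i_k})-m_i\,\eff(a)$; this is $0$ in every region exactly when all of $X_{i_0},\dots,X_{i_{m_i-1}}$ remain pinned.

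The two implications are then short. If $B$ is an edge removal with $|\K|\le\kappa=\lambda$ that has the SSP, then for every $i$ some copy $T_{i,j^{*}}$ is entirely present in $B$, so $(d_{i,j^{*}},e_{i,j^{*}})$ is an SSA of $B$; no region of $B$ can solve it as long as every $F_{i_k}$ (the pin gadget of the $k$-th element of $M_i$) is entirely present, since then $\eff(X_{i_k})=\eff(a)$ is forced, hence at least one $F_{i_k}$ is not entirely present in $B$. Letting $Z$ be the set of those $X_j$ for which $F_j$ is not entirely present in $B$, we get a hitting set with $|Z|\le|\K|\le\lambda$. Conversely, from a hitting set $Z$ with $|Z|\le\lambda$ delete exactly the edges $f_{j,0}\edge{X_j}f_{j,1}$ for $X_j\in Z$: nothing becomes unreachable, so $B$ has the states of $A$ and every region of $A$ is a region of $B$, already solving every SSA except the pairs $(d_{i,j},e_{i,j})$; these are all solved at once by the single region with $con\equiv0$, $\eff(a)=1$, $\eff(X_j)=1$ for $X_j\notin Z$ and $\eff(X_j)=2$ for $X_j\in Z$ (all remaining effects $1$), for which $sup(d_{i,j})-sup(e_{i,j})=|M_i\cap Z|\ge1$.

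Two points will need genuine care. First, the lemma ``the only unsolvable SSAs of $A$ are the $(d_{i,j},e_{i,j})$'' requires the full case analysis — pairs inside a pin gadget, pairs inside a conflict gadget (proper-prefix pairs, $d$-versus-$h$ pairs, the critical pair) and pairs across gadgets — and this is exactly where the positions of $a$, of the $b_{i,j}$ and of the connector events are used, so it deserves its own statement rather than a hand-wave. Second, in the hardness direction one must make rigorous the normalisation that ``only damage to a pin gadget counts'': deleting a connector edge, an internal conflict-gadget edge, or the $a$-edge of some $F_j$ should be re-attributable to pin edges without losing the SSP (re-inserting such an edge, re-choosing its $con$ and $pro$, and adjusting the effects of the now-unique events on it), and it is precisely the $\kappa+1$ copies that make this go through. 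Reachability, determinism and the polynomial size of $A$ are immediate.
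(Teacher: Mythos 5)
Your proposal is correct, and while its overall strategy coincides with the paper's (membership via Lemma~\ref{lem:badouel}; hardness by reduction from \textsc{Hitting Set} with $\kappa=\lambda$; pin gadgets $F_j$ with parallel $X_j$- and $a$-labelled edges forcing $\eff(X_j)=\eff(a)$; $\kappa+1$ copies to survive $\kappa$ deletions), your conflict gadget is genuinely different. The paper uses, for each $(i,j)$, a path $T_{i,j}$ labelled $k_i X_{i_0}\cdots X_{i_{m_i-1}}$ together with a separate cycle $D_{i,j}$ labelled $a^{m_i}k_i$; the cycle forces $\eff(k_i)=-m_i\cdot\eff(a)$, the critical SSA is the endpoint pair of the path, and the hardness direction needs a fully surviving copy of both $T_{i,\cdot}$ and $D_{i,\cdot}$, which the paper obtains via a normalisation of $\K$. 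You instead fold everything into a single branching gadget with a private event $b_{i,j}$, placed at the end of the $X$-branch and at the start of the $a$-branch, so that its effect cancels between the branch tips and the critical SSA becomes $(d_{i,j},e_{i,j})$ with $sup(d_{i,j})-sup(e_{i,j})=\sum_{k}\eff(X_{i_k})-m_i\cdot\eff(a)$. This buys you a cycle-free, self-contained gadget without the shared auxiliary event $k_i$, and in fact your second ``point of care'' is dispensable: no normalisation is needed, since the pigeonhole argument yields one intact copy per $i$ and each damaged pin gadget contributes its own edge to $\K$, giving $\vert Z\vert\leq\vert\K\vert$ directly. The price is the deferred lemma that all non-critical SSAs of $A$ are solvable; this does go through, the only delicate case being the equal-depth pairs on the two branches, which are separated by any region with all variable effects and $\eff(a)$ equal to $1$ and $\eff(b_{i,j})\neq 1$ (the tips then remain equal, as intended), while cross-gadget pairs are handled through the fresh connector events --- this verification is the analogue of the explicit region constructions in the paper's Lemma~\ref{lem:edge_removal_hs_implies_ssp}. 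Your completeness region ($con\equiv 0$, effects $1$ and $2$) is a valid region of $B$ and solves all critical pairs simultaneously since $\vert M_i\cap Z\vert\geq 1$, matching the role of the paper's single region with $\eff(k_i)=-m_i$ and a large effect on hitting-set elements.
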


Like in the previous subsection, it is easy to see that Edge Removal aiming at embedding is in NP.

\paragraph{The General Idea of the Reduction.}
To prove hardness, in the following, we reduce $(\U,M,\lambda)$ polynomially to a pair $(A,\kappa)$ such that there is hitting set of size at most $\lambda$ for $(\U, M)$ if the removal of at most $\lambda$ edges of $A$ yields an \LTS\  $B$ that has the SSP.
Conversely, if $(\U,M,\lambda)$ has a fitting hitting set, then there are at most $\kappa$ edges of $A$ whose removal yields an \LTS\  that has the SSP.

\paragraph{The Reduction.}
Again, we shall have $\kappa=\lambda$.

For every $i\in \{0,\dots, m-1\}$, and for every $j\in \{0,\dots, \kappa\}$, the \LTS\  $A$ has the following gadget $T_{i,j}$, that uses the event $k_i$, and the elements of $M_i=\{ X_{i_0},\dots, X_{m_i-1}\}$ as events:\vspace*{-1mm}

\begin{center}
\begin{tikzpicture}[new set = import nodes]
\begin{scope}[nodes={set=import nodes}]
	
	\node (t) at (-1,0){$T_{i,j}=$};
	\foreach \i in {0,...,2} {\coordinate (\i) at (\i*2cm,0);}
	\foreach \i in {3} {\coordinate (\i) at (\i*2cm+6,0);}
	\foreach \i in {4} {\coordinate (\i) at (8.75,0);}
	\foreach \i in {0,1} {\node (\i) at (\i) {\nscale{$t_{i,j,\i}$}};}
	\node (2) at (2) {$\dots$};
	\node (3) at (3) {\nscale{$t_{i,j,m_i}$}};
	\node (4) at (4) {\nscale{$t_{i,j,m_i+1}$}};
	\graph {
	(import nodes);
			0->["\escale{$k_i$}"]1->["\escale{$X_{i_0}$}"]2->["\escale{$X_{i_{m_i-2}}$}"]3->["\escale{$X_{i_{m_i-1}}$}"]4;
			
			};
\end{scope}
\end{tikzpicture}\vspace*{-1mm}
\end{center}

Furthermore, for every $i\in \{0,\dots, m-1\}$, and for every $j\in \{0,\dots, \kappa\}$, the \LTS\  $A$ has the following directed cycle $D_{i,j}$, where the event $a$ occurs $m_i$ times in a row followed by $k_i$, which closes the cycle:\vspace*{-1mm}

\begin{center}
\begin{tikzpicture}[new set = import nodes]
\begin{scope}[nodes={set=import nodes}]
	
	\node (d) at (-1,0){$D_{i,j}=$};
	\node (d0) at (0,0) {\nscale{$d_{i,j,0}$}};
	\node (d1) at (2,0) {\nscale{$d_{i,j,1}$}};
	\node (dots) at (4,0) {$\dots$};
	\node (d2) at (6,0) {\nscale{$d_{i,j,m_i-1}$}};
	\node (d3) at (8.5,0) {\nscale{$d_{i,j,m_i}$}};
	\coordinate (d40) at (8.5,-1);
	\coordinate (d41) at (0,-1);
	\graph {
	(import nodes);
			d0->["\escale{$a$}"]d1->["\escale{$a$}"]dots->["\escale{$a$}"]d2->["\escale{$a$}"]d3;
			d3--d40--["\escale{$k_i$}"]d41->d0;
			
			};
\end{scope}
\end{tikzpicture}\vspace*{-1mm}
\end{center}

For all $i\in \{0,\dots, n-1\}$, the \LTS\  $A$ also has following gadget $F_i$ that synchronizes the event $a$ with the variable
event $X_i$:\vspace*{-2mm}

\begin{center}
\begin{tikzpicture}[new set = import nodes]
\begin{scope}[nodes={set=import nodes}]
	
	\node (f) at (-0.75,0){$F_i=$};
	\node (f0) at (0,0) {\nscale{$f_{i,0}$}};
	\node (f1) at (2,0) {\nscale{$f_{i,1}$}};
	
	\graph {
	(import nodes);
			f0->[bend left = 15,"\escale{$a$}"]f1;
			f0->[bend right = 15,swap, "\escale{$X_i$}"]f1;
			};
\end{scope}
\end{tikzpicture}\vspace*{-2mm}
\end{center}

Finally, in order to connect the introduced gadgets, and to ensure reachability, the \LTS\  $A$ has the initial state $\iota$, and the following
edges:
\begin{itemize}
\itemsep=0.9pt
\item
for all $i\in \{0,\dots, m-1\}$, and all $j\in \{0,\dots, \kappa\}$, the \LTS\  $A$ has the edges $\iota\Edge{u_i^j}t_{i,j,0}$, and $\iota\Edge{v_i^j}d_{i,j,0}$.
\item
For all $i\in \{0,\dots, n-1\}$, the \LTS\  $A$ has the edge $\iota\edge{w_i}f_{i,0}$.
\end{itemize}

The result is the reachable \LTS\  $A=(S,E,\delta, \iota)$.

\begin{lemma}\label{lem:edge_removal_ssp_implies_hs}
If there is an edge removal $B$ of $A$ that has the SSP and satisfies $\vert \K\vert\leq \kappa$, then there is a hitting set with at most $\lambda=\kappa$ elements for $(\U,M)$.
\end{lemma}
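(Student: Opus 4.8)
The plan is to mimic the structure of the proof of Lemma~\ref{lem:edge_removal_essp_implies_hs}: first argue that, without loss of generality, the removed edge set $\K$ consists only of edges of the form $f_{i,0}\edge{X_i}f_{i,1}$, and then show that the set $Z=\{X_i\in\U\mid f_{i,0}\edge{X_i}f_{i,1}\in\K\}$ is a hitting set of size at most $\lambda=\kappa$.

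First I would do the normalization step. Since for each $i\in\{0,\dots,m-1\}$ there are $\kappa+1$ copies $T_{i,0},\dots,T_{i,\kappa}$ of the same gadget (and likewise $\kappa+1$ copies $D_{i,0},\dots,D_{i,\kappa}$), and $\vert\K\vert\leq\kappa$, at least one copy of each survives intact in $B$ together with its connecting edge from $\iota$. As in the previous lemma, any edge of $\K$ lying inside some $T_{i,k}$ or $D_{i,k}$ or on a $u_i^k$- or $v_i^k$-edge can be reinserted and the corresponding edge in an intact copy removed instead, keeping $B$ reachable and only shrinking the set of SSAs that must be solved; a surviving region extends to the modified $B$ by copying the $con,pro$ values of the duplicated events. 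After this we may assume all of $\K$ sits among the $F_i$-gadgets; each $F_i$ has only the two parallel edges $f_{i,0}\edge{a}f_{i,1}$ and $f_{i,0}\edge{X_i}f_{i,1}$, and removing the $a$-edge would disconnect nothing but can again be traded (since all $F_i$ share the event $a$, and we may instead argue the $a$-edges are never the ``useful'' ones to drop), so WLOG $\K\subseteq\{f_{i,0}\edge{X_i}f_{i,1}\mid i\}$ and $\vert\K\vert\leq\kappa$, hence $\vert Z\vert\leq\kappa=\lambda$.

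Next I would show $Z$ hits every $M_i$. Fix $i$ and a surviving copy $T_{i,j}$ with its edge $\iota\edge{u_i^j}t_{i,j,0}$, and a surviving copy $D_{i,j'}$ with $\iota\edge{v_i^{j'}}d_{i,j',0}$ present. The key SSA to exploit is one forced by the $k_i$-edges and the $a$-cycle. In $D_{i,j'}$ the event $k_i$ appears once and closes a cycle preceded by $m_i$ occurrences of $a$, so the support $sup$ along any region must return to its start after $m_i$ occurrences of $a$ and one of $k_i$, giving $m_i\cdot\eff(a)+\eff(k_i)=0$. Meanwhile in $T_{i,j}$ the event $k_i$ is followed by $X_{i_0},\dots,X_{i_{m_i-1}}$, which are exactly the elements of $M_i$. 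I would pick a state-separation atom — say $(t_{i,j,0},t_{i,j,1})$ or one separating the two endpoints of the surviving $T_{i,j}$ from each other, or from the appropriate $D$-states — that any solving region $R$ must make nontrivial, and then show that if every $X_{i_\ell}$-edge in every $F_{i_\ell}$ ($\ell=0,\dots,m_i-1$) survived in $B$, the constraints $\eff(X_{i_\ell})=\eff(a)$ (from $F_{i_\ell}$), $\sum_{\ell}\eff(X_{i_\ell})$ relating to $\eff(k_i)$ (from $T_{i,j}$), and $m_i\eff(a)+\eff(k_i)=0$ (from $D_{i,j'}$) together force $sup$ to collapse on the separated pair, contradicting that $R$ solves it. Hence some $X_{i_\ell}\in Z$, i.e.\ $Z\cap M_i\neq\emptyset$; by arbitrariness of $i$, $Z$ is a hitting set.

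The main obstacle I anticipate is the second step: since we are working with the SSP rather than the ESSP, there is no ``$con(e)>sup(s)$'' inequality to play with, only support inequalities from distinct states, so I must identify precisely which SSA of $A$ is genuinely unsolvable unless an $X$-edge is deleted. The role of the $D_{i,j}$ gadgets is exactly to pin down $\eff(a)$ and $\eff(k_i)$ tightly enough that, combined with the $F_i$-synchronization $\eff(X_i)=\eff(a)$, the gadget $T_{i,j}$ loses all state-distinguishing power on some pair — mirroring how the $a_\ell$-edges rigidified $con(a_0)$ versus $pro(a_0)$ in the ESSP proof. Getting the counting right ($m_i$ copies of $a$, the doubled/ordered occurrences of events in $T_{i,j}$) is the delicate part; everything else is a routine adaptation of Lemma~\ref{lem:edge_removal_essp_implies_hs}.
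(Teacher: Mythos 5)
Your overall route is the same as the paper's: normalize the removal so that all deleted edges sit in the $F$-gadgets, then use a surviving cycle $D_{i,j'}$ to force $\eff(k_i)=-m_i\cdot\eff(a)$, the intact gadgets $F_{i_\ell}$ to force $\eff(X_{i_\ell})=\eff(a)$, and conclude that some SSA inside a surviving copy $T_{i,j}$ becomes unsolvable. However, there is one genuine gap in the plan as written: you reduce to the case $\K\subseteq\{f_{i,0}\edge{X_i}f_{i,1}\mid i\}$, i.e.\ you assume no $a$-labelled edge of any $F_i$ is removed, and the only justification offered is that ``the $a$-edges are never the useful ones to drop''. This is not an argument, and your second step really depends on it: the constraint $\eff(X_{i_\ell})=\eff(a)$ needs \emph{both} edges of $F_{i_\ell}$ present, so if the given removal deletes $f_{i_\ell,0}\edge{a}f_{i_\ell,1}$ rather than the $X_{i_\ell}$-edge, your $Z$ misses $X_{i_\ell}$ and the contradiction evaporates. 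The swap you invoke (trade a removed $a$-edge of $F_i$ for the $X_i$-edge while preserving the SSP) can in fact be proved, using the freedom in $pro(w_i)$ and the existence of a region with $\eff(a)\neq 0$, but that needs to be argued; the paper sidesteps the issue entirely by putting $X_i$ into $Z$ whenever \emph{either} edge of $F_i$ lies in $\K$, which still gives $\vert Z\vert\leq\vert\K\vert\leq\kappa$ and makes the contradiction argument go through verbatim.

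Two smaller points. First, you must commit to the right separation atom: it is the endpoint pair $(t_{i,j,0},t_{i,j,m_i+1})$ of a fully surviving copy, along which the total effect is $\eff(k_i)+\sum_{\ell=0}^{m_i-1}\eff(X_{i_\ell})=-m_i\,\eff(a)+m_i\,\eff(a)=0$; your first candidate $(t_{i,j,0},t_{i,j,1})$ cannot work, since $\eff(k_i)=-m_i\cdot\eff(a)$ need not vanish (take $\eff(a)=1$), so that atom remains solvable even when every $F_{i_\ell}$ is intact. Second, the normalization is carried out by \emph{reinserting} the missing $T$-, $D$- and $\iota$-edges (not by swapping them against edges of an intact copy, which would just move the damage around) and observing that the newly created SSAs are solved by copying the region values used for the intact copy of the same gadget; reinsertion enlarges, not shrinks, the set of atoms to be solved, so this check is where the ``more than $\kappa$ copies'' argument is actually used.
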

\begin{proof}
Let $B$ be an edge removal of $A$ that has the SSP, and satisfies $\vert \K\vert\leq \kappa$.
Like in the proof of Lemma~ \ref{lem:edge_removal_ssp_implies_hs}, without loss of generality, we may assume that no edge in $\K$ arises from $\iota$, nor from some $T_{i,j}$, nor from some $D_{i,j}$, and if one arises from some $F_i$ the other edge from $F_i$ is not in $\K$, so that in particular $B$ remains reachable.
If some edge from $\iota$ has been removed, we may reintroduce it and choose its effect in order to keep the SSA we had before.
Since there are more than $\kappa$ copies $T_{i,j}$, at least one is completely in $B$;
then we may complete the other copies in order to keep their SSAs.
The same is true for $D_{i,j}$.
Now, if both $a$ and $X_i$ are missing in some $F_i$, we may reintroduce $a$;
indeed, when we separate $d_{i,j,0}$ from $d_{i,j,1}$, the effect of $a$ is non-null, then $f_{i,0}$ will be separated from $f_{i,1}$ (it may be necessary to choose $pro(w_i)$ high enough in order to allow $a$ in $f_{i,0}$).

\medskip
In the following, we argue that
\[
Z=\{X_i \mid i\in \{0,\dots, n-1\} \text{ and } \{f_{i,0}\edge{X_i}f_{i,1}, f_{i,0}\edge{a}f_{i,1}\}\cap \K\not=\emptyset\}
\]
is a hitting set for $(\U,M)$.

\medskip
Let $i\in \{0,\dots, m-1\}$ be arbitrary but fixed. From the hypothesis on $B$ above, $T_{i,0}$, and $D_{i,0}$ are completely present in $B$.
By the SSP of $B$, there is a region that solves the SSA $\alpha=(t_{i,0,0}, t_{i,0,m_i+1})$.
Let $R=(sup, sig)$ be such a region.
We argue that $Z\cap M_i\not=\emptyset$:
Assume, for a contradiction, that the opposite is true, that is, $Z\cap M_i=\emptyset$.
Then, for all $\ell\in \{0,\dots, m_i-1\}$, the gadget $F_\ell$ is completely present in $B$.
By Lemma~\ref{lem:effect_on_a_path}, we have that $sup(d_{i,0,m_i})=sup(d_{i,0,0})+m_i\cdot \eff(a)$, and $sup(d_{i,0,0})=sup(d_{i,0,m_i})+ \eff(k_i)$, which implies $\eff(k_i)=-m_i\cdot \eff(a)$.
Moreover, since the gadget $F_\ell$ is present in $B$, we have that $\eff(X_{i_\ell})=\eff(a)$ for all $\ell\in \{0,\dots, m_i-1\}$.
Finally, again by Lemma~\ref{lem:effect_on_a_path}, we have\vspace*{-2mm}
\begin{align*}
sup(t_{i,0,m_i+1}) & =sup(t_{i,0,0})+\eff(k_i)+\sum_{\ell=0}^{m_i-1}\eff(X_{i_\ell})\\
			&= sup(t_{i,0,0})-m_i\cdot\eff(a) + m_i\cdot\eff(a)\\
			&= sup(t_{i,0,0}),
\end{align*}
which contradicts that $R$ solves $\alpha$.
Hence, there is an $\ell\in \{0,\dots, m_i-1\}$, such that $f_{i_\ell,0}\Edge{X_{i_\ell}}f_{i_\ell,1}\in \K$ or $f_{i_\ell,0}\edge{a}f_{i_\ell,1}\in \K$, which implies $Z\cap M_i\not=\emptyset$.
Since $i$ was arbitrary, this is simultaneously true for all sets of $M$ and $Z$ is a hitting set for $(\U,M)$, which completes the proof.
\end{proof}

Conversely, we show:

\begin{lemma}\label{lem:edge_removal_hs_implies_ssp}
If there is a hitting set with at most $\lambda$ elements for $(\U,M)$, then there is an edge removal $B$ of $A$ that has the SSP, and satisfies $\vert \K\vert \leq \lambda$.
\end{lemma}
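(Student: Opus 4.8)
The plan is to take a hitting set $Z$ for $(\U,M)$, enlarged to exactly $\lambda$ elements (as in the proof of Lemma~\ref{lem:edge_removal_hs_implies_essp_and_ssp}), and define $B$ to be the \LTS\ obtained from $A$ by removing precisely the edges $\K=\{f_{i,0}\edge{X_i}f_{i,1}\mid X_i\in Z\}$. Then $|\K|=\lambda=\kappa$, and $B$ is still reachable since each $f_{i,1}$ remains reachable via $\iota\edge{w_i}f_{i,0}\edge{a}f_{i,1}$. It then remains to show that $B$ has the SSP, i.e.\ that every SSA of $B$ is solved by some region of $B$. All regions below are presented implicitly in the style of Remark~\ref{rem:implicit}: I fix $sup(\iota)$ and the effect $\eff(e)=pro(e)-con(e)$ of every event, compute supports along paths by Lemma~\ref{lem:effect_on_a_path}, and then check the few constraints $con(e)\le sup(s)$ at states $s$ with $s\edge{e}$. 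I will use repeatedly that the connecting events $u_i^j,v_i^j,w_i$ are pairwise distinct and each occurs exactly once, so their $con/pro$ may be chosen freely; that $a$ does not occur in any $T_{i,j}$; that every cycle $D_{i,j}$ forces $\eff(k_i)=-m_i\cdot\eff(a)$; and that a gadget $F_i$ with $X_i\notin Z$ (still complete in $B$) forces $\eff(X_i)=\eff(a)$, whereas for $X_i\in Z$ the value $\eff(X_i)$ is unconstrained.

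First I would dispose of all SSAs $(s,s')$ whose states lie in different gadgets, as well as those with one state equal to $\iota$. For each single gadget $G\in\{T_{i,j},D_{i,j},F_i\}$ consider the \emph{isolation region} $R_G$ in which every internal event $k_{i'},X_{i'},a$ and every connecting event has effect $0$ except the unique connecting event entering $G$, which has effect $1$; take $sup(\iota)=0$ and all $con$-values $0$. Then the cycle and $F$-constraints reduce to $0=0$, so $R_G$ is a valid region of $B$, and it assigns support $1$ to every state of $G$ and support $0$ to $\iota$ and to every state outside $G$. Hence any two states lying in distinct gadgets, and $\iota$ versus any other state, are separated by some isolation region.

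Next I would treat the SSAs internal to the gadgets with two further families of regions. The first is a single region $R_1$ with $\eff(a)=1$, $\eff(X)=1$ for every variable event $X$ (consistent with the $F$-gadgets), $\eff(k_i)=-m_i$ for every $i$ (forced by the cycles), and every connecting event given one and the same ``big enough'' effect $C$ (say $C=\sum_i m_i$) so that all supports stay nonnegative and $con(k_i)=m_i$ is admissible everywhere. Under $R_1$ the supports along $D_{i,j}$ are $sup(d_{i,j,0}),\dots,sup(d_{i,j,m_i})$ strictly increasing, so all states of $D_{i,j}$ are pairwise separated; the two states of $F_i$ are separated because $\eff(a)\ne0$; and along $T_{i,j}$ one gets $sup(t_{i,j,0})=C$ and $sup(t_{i,j,p})=C-m_i+(p-1)$ for $1\le p\le m_i+1$, which are pairwise distinct except for the single coincidence $sup(t_{i,j,0})=sup(t_{i,j,m_i+1})$.

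Finally, for this last pair I would invoke the hitting-set property at exactly the point where Lemma~\ref{lem:edge_removal_ssp_implies_hs} shows it is needed: for each $i\in\{0,\dots,m-1\}$ pick $X_{i_\ell}\in Z\cap M_i$; since $X_{i_\ell}\in Z$, its edge in $F_{i_\ell}$ has been removed, so $\eff(X_{i_\ell})$ is no longer tied to $\eff(a)$. Let $R_2^i$ have $\eff(a)=0$ (hence $\eff(k_{i'})=0$ for all $i'$, $\eff(X)=0$ for all $X\notin Z$, and set $\eff(X)=0$ also for the remaining members of $Z$), but $\eff(X_{i_\ell})=1$; take $sup(\iota)=0$ and all connecting events of effect $0$. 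One checks $R_2^i$ is a valid region of $B$: all supports lie in $\{0,1\}$, and $X_{i_\ell}$ (which now occurs only inside $T$-gadgets) can be given $con(X_{i_\ell})=0$. Since $X_{i_\ell}$ occurs exactly once along $T_{i,j}$ and every other event on that path has effect $0$ in $R_2^i$, we get $sup(t_{i,j,m_i+1})=sup(t_{i,j,0})+1$, so $R_2^i$ separates $t_{i,j,0}$ from $t_{i,j,m_i+1}$ for all $j$. Together with the isolation regions and $R_1$, every SSA of $B$ is solvable, so $B$ has the SSP, which completes the proof. The only genuinely delicate step is this last one: recognising that a hitting set is precisely what is required to break the support coincidence $sup(t_{i,j,0})=sup(t_{i,j,m_i+1})$ that is otherwise forced by the chain of equalities $\eff(k_i)=-m_i\eff(a)$ and $\eff(X_{i_\ell})=\eff(a)$; everything else is the routine ``big enough support'' bookkeeping already seen in Lemma~\ref{lem:edge_removal_hs_implies_essp_and_ssp}.
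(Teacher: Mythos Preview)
Your proof is correct, and the construction of $B$ is identical to the paper's. The difference lies only in how you witness the SSP inside the gadgets. The paper builds a \emph{single} region that separates all intra-gadget pairs in one stroke: it sets $\eff(a)=1$, $\eff(X)=1$ for $X\notin Z$, $\eff(k_i)=-m_i$, but gives every $X\in Z$ a large effect $\sum_i m_i$; with $sup(t_{i,j,0})=m_i$ this forces the supports along $T_{i,j}$ to jump strictly above $m_i$ at the first $Z$-event, so $t_{i,j,0}$ is automatically separated from $t_{i,j,m_i+1}$.

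You instead give \emph{all} variable events effect $1$ in your region $R_1$, which leaves exactly the pair $(t_{i,j,0},t_{i,j,m_i+1})$ unresolved, and then you patch that pair with an extra family $R_2^i$ whose only nonzero effect is on a single hitting element $X_{i_\ell}\in Z\cap M_i$. This is a genuinely different (and equally valid) decomposition: the paper's single region is more economical, while your two-step version isolates precisely the place where the hitting-set hypothesis is consumed and avoids any ``big enough'' calibration between $m_i$ and $\sum_i m_i$. Your isolation regions $R_G$ for cross-gadget SSAs are exactly what the paper sweeps under ``it is easy to see''.
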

\begin{proof}
Let $Z=\{X_{\ell_0},\dots, X_{\ell_{\lambda-1}}\}$, where $\ell_0,\dots, \ell_{\lambda-1}\in \{0,\dots, n-1\}$, be a hitting set with $\lambda$ elements for $(\U, M)$ (which exists whenever there is a hitting set with at most $\lambda$ elements).
Moreover, let $B=(S,E,\delta',\iota)$ be the \LTS\  that originates from $A$ by removing, for all $j\in \{0,\dots, \lambda-1\}$, the edge $f_{\ell_j,0}\Edge{X_{\ell_j}}f_{\ell_j,1}$.
One easily verifies that $B$ is a well-defined edge removal of $A$ that satisfies $\vert \K\vert\leq \lambda=\kappa$,  and that $B$ is reachable.
The definition of $B$ implies that $B$ and $A$ differ (only) with respect to the gadgets $F_{\ell_0},\dots, F_{\ell_{\lambda-1}}$.
However, by a little abuse of notation, and for the sake of readability, in the following, we will refer to $F_0,\dots, F_{n-1}$ also as the gadgets of $B$, where we always keep in mind that the edge $f_{\ell_j,0}\edge{X_{\ell_j}}f_{\ell_j,1}$ is not present in $F_{\ell_j}$ for all $j\in \{0,\dots, \lambda-1\}$.

\medskip
First of all, it is easy to see, that if $G$ and $G'$ are two distinct gadgets of $A$, then,
for all states $s\in S(G)\cup\{\iota\}$ and $s'\in S(G')$, the SSA $(s,s')$ is solvable.
Hence, it remains to argue that every SSA $(s,s')$ of $B$ is solvable if $s$ and $s'$ belong to the same gadget of $B$.

The following region $R=(sup, con, pro)$ solves all the corresponding SSAs in one blow:
For all $i\in \{0,\dots, m-1\}$, and all $j\in \{0,\dots, n-1\}$, let $sup(d_{i,j,0})=0$, and $sup(t_{i,j,0})=m_i$;
for all $i\in \{0,\dots, n-1\}$, let $sup(f_{i,0})=0$;
for all $e\in \{a\}\cup  (\U\setminus Z)$, let $(con(a), pro(a))=(0,1)$;
for all $i\in \{0,\dots, m-1\}$, let $(con(k_i), pro(k_i))=(m_i,0)$;
for all $j\in \{0,\dots, \lambda-1\}$, let $(con(X_{\ell_j}), pro(X_{\ell_j}))=(0,\sum_{i=0}^{m-1} m_i)$.
Obviously, $R$ solves all SSA of the $D_{i,j}$'s, and the $F_i$'s.

Let $i\in \{0,\dots, m-1\}$, and $j\in \{0,\dots, \kappa\}$ be arbitrary but fixed and let $h\in \{0,\dots, m_i-1\}$ be the smallest index such that $X_{i_h}$ belongs to the hitting set $Z$, that is, $X_{i_h}\in Z$.
Then the definition of $R$ ensures that $sup(d_{i,j,1})< \dots < sup(d_{i,j,i_h+1}) < sup(d_{i,j,0}) < sup(d_{i,j,i_h+2})<  \dots <sup(d_{i,j,m_i+1}) $.
By the arbitrariness of $i$, and $j$, this completes the proof.
\end{proof}

\section{The complexity of Event Removal}\label{sec:event_removal}%

In this section, we are looking for implementable event removals that remove only a bounded number of events.
By the connection between implementations and separation properties stated by Lemma~\ref{lem:badouel}, the following problems arise:

\noindent
\fbox{\begin{minipage}[t][1.8\height][c]{0.97\textwidth}
\begin{decisionproblem}
  \problemtitle{\textsc{Event Removal for Embedding}}
  \probleminput{An \LTS\  $A=(S,E,\delta,\iota)$, a natural number $\kappa$.}
  \problemquestion{Does there exist an event removal $B$ for $A$ by ${\E}$ that has the SSP and satisfies $\vert{\E}\vert\leq \kappa$?}
\end{decisionproblem}
\end{minipage}}
\smallskip

\noindent
\fbox{\begin{minipage}[t][1.8\height][c]{0.97\textwidth}
\begin{decisionproblem}
  \problemtitle{\textsc{Event Removal for Language-Simulation}}
  \probleminput{An \LTS\  $A=(S,E,\delta,\iota)$, a natural number $\kappa$.}
  \problemquestion{Does there exist an event removal $B$ for $A$ by ${\E}$ that has the ESSP and satisfies $\vert{\E}\vert\leq \kappa$?}
\end{decisionproblem}
\end{minipage}}
\smallskip

\noindent
\fbox{\begin{minipage}[t][1.8\height][c]{0.97\textwidth}
\begin{decisionproblem}
  \problemtitle{\textsc{Event Removal for Realization}}
  \probleminput{An \LTS\  $A=(S,E,\delta,\iota)$, a natural number $\kappa$.}
  \problemquestion{Does there exist an event removal $B$ for $A$ by ${\E}$ that has the ESSP and the SSP and satisfies $\vert{\E}\vert\leq \kappa$?}
\end{decisionproblem}
\end{minipage}}
\medskip

The following theorem states the main result of this section:
\begin{theorem}\label{the:event_removal}
{Event Removal for embedding}, {Event Removal for Language Simulation}, and {Event Removal for realization} are NP-complete.
\end{theorem}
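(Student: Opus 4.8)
The plan is to mirror Section~\ref{sec:edge_removal}. Membership in NP is routine: a Turing machine guesses the removed event set $\mathcal{E}$ with $\vert\mathcal{E}\vert\le\kappa$, builds the reachable part of the event removal $B$ deterministically in polynomial time, and, since $\vert B\vert\le\vert A\vert$, decides the relevant separation property of $B$ and computes a witness in polynomial time via Lemma~\ref{lem:badouel}. So only NP-hardness needs work, and I would prove it by a polynomial reduction from \textsc{Hitting Set}, treating \textsc{Event Removal for Language-Simulation} and \textsc{Event Removal for Realization} with a single reduction — the forward implication assuming only the ESSP and the backward implication producing both the ESSP and the SSP — and \textsc{Event Removal for Embedding} with a second, parallel reduction phrased via the SSP (in the spirit of Lemmas~\ref{lem:edge_removal_ssp_implies_hs} and~\ref{lem:edge_removal_hs_implies_ssp}).

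Given $(\mathcal{U},M,\lambda)$ I would set $\kappa=\lambda$ and build an \LTS\ $A$ whose events are $\mathcal{U}\cup\{a_0,\dots,a_\kappa\}$ together with private connector events, so that deleting the event $X_i$ models selecting $X_i$ into the hitting set. The gadgets are: for each $i$ and each $j\in\{0,\dots,\kappa\}$ a membership gadget $T_{i,j}$ that threads the events of $M_i$ along a path exactly as in Section~\ref{sec:edge_removal} (so that a designated atom of $T_{i,j}$ — an ESSA like $(X_{i_0},t_{i,j,1})$, respectively an SSA between the endpoints in the embedding variant — can, by the effect arithmetic of Lemma~\ref{lem:effect_on_a_path}, only be solved if $\sum_{\ell\ge1}\eff(X_{i_\ell})>0$); and, for each variable $X_i$, a gadget $F_i$ that runs $X_i$ in parallel with all of $a_0,\dots,a_\kappa$ between two fresh states, pinning $\eff(X_i)=\eff(a_h)$ whenever $X_i$ and some $a_h$ both survive in $F_i$. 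The gadgets are wired to $\iota$ by private connectors, and — this is the point where event removal differs from edge removal — the wiring must be \emph{robust}: each $F_i$ is reached from $\iota$ through $\kappa+1$ parallel private connectors and each membership constraint is present in the $\kappa+1$ copies $T_{i,0},\dots,T_{i,\kappa}$, so that no removal of at most $\kappa$ events can disconnect a forcing gadget or destroy every copy $T_{i,j}$ without deleting an $M_i$-event outright.

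For the forward direction I would take an event removal $B$ with $\vert\mathcal{E}\vert\le\kappa$ that has the ESSP and put $Z=\mathcal{E}\cap\mathcal{U}$; then $\vert Z\vert\le\kappa=\lambda$, and it remains to show $Z\cap M_i\ne\emptyset$ for every $i$. If no $M_i$-event lies in $\mathcal{E}$, then by the $\kappa+1$-fold redundancy some copy $T_{i,j}$ survives intact, some $a_h$ survives, and the gadgets $F_{X_{i_\ell}}$ ($0\le\ell\le m_i-1$) are all present with their $a_h$-edges; solving the designated ESSA of $T_{i,j}$ forces $\eff(X_{i_0})<0$ and $\eff(X_{i_\ell})>0$ for some $\ell\in\{1,\dots,m_i-1\}$, while the two $F$-gadgets force $\eff(X_{i_0})=\eff(a_h)=\eff(X_{i_\ell})$ — a contradiction. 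Hence $Z$ hits every $M_i$. The backward direction starts from a hitting set $Z$ with exactly $\lambda$ elements, deletes precisely the events of $Z$ (the result $B$ is a reachable event removal with $\vert\mathcal{E}\vert=\lambda=\kappa$), and then exhibits, implicitly in the style of Remark~\ref{rem:implicit}, region families witnessing both the SSP and the ESSP of $B$: atoms whose two components lie in different gadgets are immediate, while atoms inside some $T_{i,j}$ are handled by a case distinction on whether $X_{i_0}\in Z$ and on the position of the removed $M_i$-event, reusing the shape of the regions $R_0,\dots,R_5$ from the proof of Lemma~\ref{lem:edge_removal_hs_implies_essp_and_ssp} with supports of gadget-initial states chosen ``big enough'' and connector values chosen to fit. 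The embedding variant follows from the same template with the designated atom an SSA instead of an ESSA.

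I expect the backward direction to be the main obstacle: checking that deleting exactly a hitting set's worth of variable events leaves \emph{every} surviving SSA and ESSA solvable, and writing down the uniform region families that certify this across all sets $M_i$ and all copies $j$. A secondary, event-removal-specific difficulty is that deleting an event erases all of its occurrences at once, so one must make sure the adversary cannot escape a membership constraint by deleting connector or helper events rather than an $M_i$-event; the $\kappa+1$-fold redundancy of the copies $T_{i,j}$, of the helpers $a_0,\dots,a_\kappa$, and of the connectors into the $F_i$ is exactly what rules this out and keeps the forward direction airtight.
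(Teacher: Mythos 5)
Your NP membership argument and the forward direction of your reduction are fine (indeed, for event removal the $\kappa+1$ copies of $T_{i,j}$ are largely superfluous, since an adversary cannot touch a single occurrence of a variable event without deleting the event everywhere, and the reachability requirement already prevents deleting a lone connector). The genuine gap is in your backward direction, and it is exactly the point where event removal differs from edge removal and where the paper uses a \emph{different} construction. If $Z$ is a hitting set and you delete the events of $Z$, then inside every copy $T_{i,j}$ \emph{all} occurrences of the deleted variable event disappear, severing the path in the middle. By Definition~\ref{def:event_removal}, every edge labelled by a \emph{kept} event must remain in $B$, so the states downstream of the cut (which are incident to kept events of $M_i$, recall $\vert M_i\vert>1$) cannot be dropped -- yet they are no longer reachable from $\iota$, since your wiring attaches each $T_{i,j}$ to $\iota$ only at its initial state. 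Hence the $B$ you construct is not a reachable event removal, and the paper explicitly restricts attention to reachable removals (this is also needed for Remark~\ref{rem:implicit} and Lemma~\ref{lem:badouel}). Your ``robust wiring'' ($\kappa+1$ parallel connectors into each $F_i$, $\kappa+1$ copies of $T_{i,j}$) protects the forward direction against deletion of helpers, but does nothing for this reachability failure; and even after patching it, the regions $R_0,\dots,R_5$ of Lemma~\ref{lem:edge_removal_hs_implies_essp_and_ssp} do not transfer, because there the $T$-gadgets stay intact and only $F$-edges vanish, whereas here the $T$-gadgets themselves are mutilated.

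The paper's own reduction is shaped precisely to avoid this: it uses, per set $M_i$, a single path $T_i$ (with a fresh event $k_i$ at both ends) together with a directed cycle $D_i$ on which the events of $M_i$ occur once each, and it adds a connector edge from $\iota$ to \emph{every} state of every gadget, so that any event removal automatically leaves all surviving fragments reachable. The cycle pins $\sum_{j}\eff(X_{i_j})=0$ (Lemma~\ref{lem:effect_on_a_path}) whenever $M_i$ survives intact, which contradicts solvability of the ESSA $(k_i,t_{i,1})$ \emph{or} of the SSA $(t_{i,1},t_{i,m_i+1})$ -- one construction thus covers embedding, language simulation and realization simultaneously, with no $F_i$ gadgets, no helper events $a_0,\dots,a_\kappa$, and no duplicated copies. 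In the converse direction, after deleting a hitting set the non-connector part of $B$ decomposes into simple directed paths on which every event occurs at most once, and the witnessing regions are built from that decomposition rather than from the edge-removal regions you propose to reuse. So your overall strategy (reduce from \textsc{Hitting Set} with $\kappa=\lambda$, delete exactly the hitting-set events) is in the right spirit, but the gadgetry must be redesigned along these lines for the backward direction to go through.
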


Obviously, the problems addressed by Theorem~\ref{the:event_removal} belong to NP:
If there is a fitting event removal $B$ for $A$, then a Turing machine $T$ can guess ${\E}$ non-deterministically in time polynomial in the size of $A$, since $\vert {\E}\vert \leq \vert E\vert$.
After that, $T$ can deterministically compute $B$ and a witness for the property in question in polynomial time by Lemma~\ref{lem:badouel}, since the size of $B$ is bounded by the size of $A$.

Hence, to complete the proof of the NP-completeness part of Theorem~\ref{the:event_removal}, it remains to prove the hardness-part.
In order to do that, we provide a reduction of  \textsc{Hitting Set} that reduces the input $(\U, M,\lambda)$ to an instance $(A,\kappa)$ with \LTS\ $A$ and natural number $\kappa$ such that is a hitting set of size at most $\lambda$ for $(\U,M)$ if and only if $A$ can be made implementable by removing at most $\kappa$ events.

\medskip
For a start, we first define $\kappa=\lambda$.
Moreover, for every $i\in \{0,\dots, m-1\}$, the \LTS\  $A$ has the following directed path on which the elements of $M_i=\{X_{i_0},\dots, X_{m_i-1}\}$ occur as event, which are encompassed by the event $k_i$:
\begin{center}
\begin{tikzpicture}[new set = import nodes]
\begin{scope}[nodes={set=import nodes}]
	
	\node (t) at (-0.75,0){$T_i=$};
	\foreach \i in {0,...,2} {\coordinate (\i) at (\i*2cm,0);}
	\foreach \i in {3} {\coordinate (\i) at (\i*2cm+6,0);}
	\foreach \i in {4} {\coordinate (\i) at (8.5,0);}
	\foreach \i in {0,1} {\node (\i) at (\i) {\nscale{$t_{i,\i}$}};}
	\node (2) at (2) {$\dots$};
	\node (3) at (3) {$t_{i,m_i+1}$};
	\node (4) at (4) {$t_{i,m_i+2}$};
	\graph {
	(import nodes);
			0->["\escale{$k_i$}"]1->["\escale{$X_{i_0}$}"]2->["\escale{$X_{i_{m_i-1}}$}"]3->["\escale{$k_i$}"]4;
			};
\end{scope}
\end{tikzpicture}
\end{center}

Moreover, for every $i\in \{0,\dots, m-1\}$, the \LTS\  $A$ has the following directed cycle at which the elements of $M_i=\{X_{i_0},\dots, X_{m_i-1}\}$ occur consecutively as events:

\eject
\hbox{}
\vspace*{-11mm}
\begin{center}
\begin{tikzpicture}[new set = import nodes]
\begin{scope}[nodes={set=import nodes}]
	
	\node (d) at (-0.75,0){$D_i=$};
	\node (d0) at (0,0) {\nscale{$d_{i,0}$}};
	\node (d1) at (2,0) {\nscale{$d_{i,1}$}};
	\node (dots) at (2.75,0) {$\dots$};
	\node (d2) at (3.75,0) {\nscale{$d_{i,m_i-2}$}};
	\node (d3) at (6.25,0) {\nscale{$d_{i,m_i-1}$}};
	\coordinate (d40) at (6.25,-1.2);
	\coordinate (d41) at (0,-1.2);
	\graph {
	(import nodes);
			d0->["\escale{$X_{i_0}$}"]d1;
			d2->["\escale{$X_{i_{m_i-2}}$}"]d3;
			d3--d40--["\escale{$X_{i_{m_i-1}}$}"]d41->d0;
			
			};
\end{scope}
\end{tikzpicture}\vspace*{-2mm}
\end{center}

In order to connect the gadgets, and to ensure that every (reasonable) event removal of $A$ is a reachable  LTS,
that is, every state is reachable from the initial state $\iota$ by a directed  path, we add the following edges:
\begin{itemize}
\itemsep=0.85pt
\item
for every $i\in \{0,\dots, m-1\}$, and every $j\in \{0,\dots, m_i+2\}$, the \LTS\  $A$ has the edge $\iota\Edge{u_i^j}t_{i,j}$;\vspace*{-1mm}
\item
for every $i\in \{0,\dots, m-1\}$, and every $j\in \{0,\dots, m_i-1\}$, the \LTS\  $A$ has the edge $\iota\Edge{v_i^j}d_{i,j}$.
\end{itemize}

The result is the \LTS\  $A=(S,E,\delta, \iota)$.
In the following, we prove the functionality of $A$.
(Recall that ${\E}$ refers to the events removed from $A$.)

\begin{lemma}\label{lem:event_removal_essp_or_ssp_implies_hs}
If there is an event removal $B=(S,E',\delta',\iota)$ of $A$  by ${\E}$ that has the ESSP or the SSP, and satisfies $\vert \E\vert \leq \kappa$, then there is a hitting set of size at most $\lambda$ for $(\U,M)$.
\end{lemma}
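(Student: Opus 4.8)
The plan is to read a hitting set directly off any suitable event removal. So suppose $B$ is an event removal of $A$ by $\E$ with $\vert\E\vert\le\kappa$ that has the ESSP or the SSP, and set
\[
Z=\{X_i\in\U\mid X_i\in\E\}\cup\{X_{i_0}\mid i\in\{0,\dots,m-1\}\text{ and }k_i\in\E\}.
\]
First I would verify $\vert Z\vert\le\lambda=\kappa$ by building an injection $Z\hookrightarrow\E$: send $X\in Z$ to itself when $X\in\E$, and otherwise — in which case $X=X_{i_0}$ for some $i$ with $k_i\in\E$ — to one such $k_i$. This is injective because the variable events and the events $k_i$ are pairwise distinct. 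The connecting events $u_i^j,v_i^j$ need not be discussed: they occur only once, only on edges leaving $\iota$, and will not appear in any separation atom used below.

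Next I would show $Z$ hits every $M_i$. Fix $i\in\{0,\dots,m-1\}$. If $k_i\in\E$, then $X_{i_0}\in Z\cap M_i$ and we are done, so assume $k_i\notin\E$ and, for a contradiction, $\E\cap M_i=\emptyset$. Since the events occurring in $T_i$ and in $D_i$ are exactly $k_i$ together with the elements of $M_i$, both gadgets survive unchanged in $B$. Let $R=(sup,con,pro)$ be an arbitrary region of $B$. Applying Lemma~\ref{lem:effect_on_a_path} once around the cycle $D_i$ gives $\sum_{\ell=0}^{m_i-1}\eff(X_{i_\ell})=0$, and applying it along the sub-path $t_{i,1}\edge{X_{i_0}}\cdots\edge{X_{i_{m_i-1}}}t_{i,m_i+1}$ of $T_i$ then yields $sup(t_{i,m_i+1})=sup(t_{i,1})$ — and this holds for \emph{every} region of $B$.

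It remains to turn this support equality into a contradiction with the assumed property; this is where the ESSP/SSP split enters. If $B$ has the SSP, then $t_{i,1}$ and $t_{i,m_i+1}$ are distinct states (as $m_i>1$), both present in $B$ because event removal keeps all states and $B$ is reachable, so $(t_{i,1},t_{i,m_i+1})$ is an SSA that no region solves — contradiction. If $B$ has the ESSP, then since $k_i\notin\E$ the two $k_i$-edges of $T_i$ are present, hence $k_i$ occurs at $t_{i,0}$ and at $t_{i,m_i+1}$ but not at $t_{i,1}$, so $(k_i,t_{i,1})$ is an ESSA of $B$; a region solving it would satisfy $sup(t_{i,1})<con(k_i)$, while $t_{i,m_i+1}\edge{k_i}$ forces $con(k_i)\le sup(t_{i,m_i+1})=sup(t_{i,1})$ — contradiction again. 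Thus $\E\cap M_i\ne\emptyset$, so $Z\cap M_i\ne\emptyset$; since $i$ was arbitrary, $Z$ is a hitting set of size at most $\lambda$ for $(\U,M)$.

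The one delicate point I anticipate is the handling of the events $k_i$: one \emph{cannot} simply assume ``without loss of generality no $k_i$ lies in $\E$'', because deleting $k_i$ genuinely alters $T_i$ and, in the ESSP case, destroys the atom $(k_i,t_{i,1})$ on which the argument rests. This is precisely why $Z$ must carry a representative $X_{i_0}$ for each removed $k_i$ and why its cardinality has to be controlled through an explicit injection into $\E$. Everything else — that the connecting gadgets play no part, and that a single support equality simultaneously refutes an SSA under the SSP and an ESSA under the ESSP — is routine.
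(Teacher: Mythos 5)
Your proof is correct, and its core argument is the same as the paper's: assuming $M_i\cap\E=\emptyset$, the intact cycle $D_i$ forces $\sum_{\ell=0}^{m_i-1}\eff(X_{i_\ell})=0$, hence $sup(t_{i,1})=sup(t_{i,m_i+1})$ in every region of $B$, which refutes the SSA $(t_{i,1},t_{i,m_i+1})$ under the SSP and the ESSA $(k_i,t_{i,1})$ under the ESSP. Where you genuinely diverge is in the treatment of the framing events $k_i$ and in the definition of the hitting set: the paper takes $Z=\U\cap\E$ (after a without-loss-of-generality step reinstating any removed $u_i^j,v_i^j$) and argues with the atom $(k_i,t_{i,1})$, which tacitly presupposes $k_i\notin\E$ in the ESSP case --- precisely the point you flag, since a removed $k_i$ cannot simply be reinstated without creating new, possibly unsolvable, atoms. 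Your version makes this case explicit by putting $X_{i_0}$ into $Z$ whenever $k_i\in\E$ and controlling $\vert Z\vert$ through an injection into $\E$, and it also dispenses with the reinstatement argument for the connecting events, since none of the atoms you use involve them and all states survive an event removal as stated in the lemma. The payoff is a slightly longer but more self-contained argument that covers a corner case the paper's write-up passes over; the paper's route is shorter but would need exactly your charging of a removed $k_i$ to $X_{i_0}$ (or a separate argument excluding $k_i\in\E$) to be fully watertight in the ESSP case.
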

\begin{proof}
In the following, we argue that the set $Z=\U\cap\E$ defines a hitting set with at most $\lambda$ elements for $(\U,M)$.
In order to do that, we show that, for every $i\in \{0,\dots, m-1\}$, there has to be at least one event $X\in M_i$ that does not occur in $B$, and thus is removed from $A$.
Note first that we may assume that no $u_i^j$ or $v_i^j$ belongs to ${\E}$, since otherwise it is always possible to reintroduce them in $B$, with adequate $con/pro$, while keeping the regions solving the ESSAs and SSAs.
We may thus assume that $B$ is reachable.

Assume that $B$ has the ESSP (the SSP).
Let $i\in \{0,\dots, m-1\}$ be arbitrary but fixed.
Since $B$ has the ESSP (the SSP), there is a region that solves the ESSA $\alpha=(k_i,t_{i,1})$ (the SSA $\beta=(t_{i,1}, t_{i,m_i+1})$).
Let $R=(sup, con, pro)$ be such a region.
Assume, for a contradiction, that $M_i\cap\E=\emptyset$.
By Lemma~\ref{lem:effect_on_a_path}, we have that $sup(d_{i,0})=sup(d_{i,0})+\sum_{j=0}^{m_{i}-1}\eff(X_{i_j})$, which implies $\sum_{j=0}^{m_{i}-1}\eff(X_{i_j})=0$.
Since $R$ solves $\alpha$ (solves $\beta$), and $k_i$ occurs at $t_{i,m_i+1}$, we have that $con(k_i)>sup(t_{i,1})$, and $con(k_i)\leq sup(t_{i,m_i+1})$ (we have $sup(t_{i,1})\not=sup(t_{i,m_i+1})$).
Hence, we have that $sup(t_{i,m_i+1})=sup(t_{i,1})+ \sum_{j=0}^{m_{i}-1}\eff(X_{i_j})  \not= sup(t_{i,1})$, and thus $\sum_{j=0}^{m_{i}-1}\eff(X_{i_j}) \not= 0$, which is a contradiction.
Hence, we have that $M_i\cap\E\not=\emptyset$.
Since $i$ was arbitrary, and $\vert Z\vert = \vert \U\cap\E\vert \leq \vert \E\vert \leq \kappa=\lambda $, this implies that $Z$ is a hitting set of size at most $\lambda$ for $(\U,M)$.
The claim follows.
\end{proof}

\begin{lemma}\label{lem:event_removal_hs_implies_essp_and_ssp}
If there is a hitting set of size at most $\lambda$ for $(\U,M)$, then there is an event removal $B=(S,E',\delta',\iota)$ of $A$
 by ${\E}$ that has the ESSP and the SSP, and satisfies $\vert \E\vert \leq \kappa$.
\end{lemma}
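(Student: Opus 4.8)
The plan is to reuse the template of the proof of Lemma~\ref{lem:edge_removal_hs_implies_essp_and_ssp}, i.e. to exhibit a concrete event removal $B$ together with an explicit witness of regions for its SSP and ESSP. Starting from a hitting set of size at most $\lambda$ for $(\U,M)$, first enlarge it to a hitting set $Z=\{X_{\ell_0},\dots,X_{\ell_{\lambda-1}}\}\subseteq\U$ of size exactly $\lambda$, which is possible since $\lambda\le|\U|$. Let $B=(S,E',\delta',\iota)$ be the event removal of $A$ that deletes exactly the events of $Z$, so that $E'=E\setminus Z$ and $\E=Z$. Since every state $t_{i,j}$ is the target of the dedicated edge $\iota\Edge{u_i^j}t_{i,j}$ and every state $d_{i,j}$ of $\iota\Edge{v_i^j}d_{i,j}$, and since none of the $u_i^j,v_i^j$ lies in $Z$, the LTS $B$ is still reachable on the full state set $S$, and $|\E|=\lambda=\kappa$. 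The structural fact that makes this $B$ work is that, $Z$ being a hitting set, for every $i$ at least one element of $M_i$ disappears; consequently in $B$ the cycle $D_i$ is cut into a disjoint union of simple directed paths whose edges carry pairwise distinct events, and in $T_i$, writing $h$ for the least index with $X_{i_h}\in Z$, the $X_{i_h}$-edge $t_{i,h+1}\edge{X_{i_h}}t_{i,h+2}$ is absent, so that $t_{i,1}$ and $t_{i,m_i+1}$ end up in different path-segments of $T_i$ with no directed $B$-path between them.

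I would then dispatch first the separation atoms whose two components do not lie in one common gadget (including those involving $\iota$). These are handled exactly as in the edge-removal proof: for each gadget $G$ there are regions ``concentrated'' on $G$ -- the states of $G$ carry the supports forced by the edges of $G$ surviving in $B$, all remaining states and $\iota$ carry support $0$, and the unique connecting events of the other gadgets absorb whatever $con/pro$ values are needed -- and the freedom provided by these unique connecting events makes every such region legal on the whole of $A$, hence on $B$. So it remains to solve the atoms whose components both lie inside a single $T_i$ or $D_i$.

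Inside a $D_i$ this is routine: $D_i$ is a union of simple paths carrying the pairwise distinct events of $M_i\setminus Z$, so a single region giving all of them effect $+1$ (and consume value $0$), with the path starts pinned through the $v_i^j$-edges, gives $d_{i,0},\dots,d_{i,m_i-1}$ pairwise distinct supports and settles all SSAs of $D_i$ at once; the ESSAs $(X_{i_j},d_{i,r})$ with $r\ne j$ are solved by a companion region that makes $con(X_{i_j})$ large and forces the support to exceed it only at the unique source $d_{i,j}$ of the $X_{i_j}$-edge. The genuine work is $T_i$. Because $t_{i,1}$ and $t_{i,m_i+1}$ are no longer joined by a directed path in $B$, their supports are no longer coupled through $\sum_j\eff(X_{i_j})$, so one can build a region with $\eff(k_i)<0$ and $con(k_i)$ strictly above $sup(t_{i,1})$ but still at most $sup(t_{i,0})$ and $sup(t_{i,m_i+1})$; this region solves the critical ESSA $(k_i,t_{i,1})$ -- precisely the atom that Lemma~\ref{lem:event_removal_essp_or_ssp_implies_hs} shows to be unsolvable when $M_i$ is left intact. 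Moving the ``low'' support to the various segments of the cut path $T_i$ (the analogues of regions $R_1$--$R_5$ of the previous section) then solves the remaining ESSAs $(k_i,t_{i,r})$ and $(X_{i_j},t_{i,r})$, and one region giving every event of $T_i$ effect $+1$, with the segment starts pinned through the $u_i^j$-edges, makes all supports of $T_i$ pairwise distinct, hence solves all its SSAs, including $\beta=(t_{i,1},t_{i,m_i+1})$.

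I expect the only real obstacle to be the bookkeeping, not any conceptual point: for each family of atoms of $T_i$ one must write down an explicit region that at once (a) respects every edge of $T_i$ still present in $B$ with non-negative supports, despite the gaps opened by the removed events, and (b) extends to a legal region on all of $A$ through the unique connecting events $u_i^j,v_i^j$ without forcing a negative support or an under-enabled transition on the other gadgets. As in Section~\ref{sec:edge_removal_realization}, organising this along the case distinction $X_{i_0}\in Z$ versus $X_{i_0}\notin Z$ -- that is, according to which of the two $k_i$-edges is adjacent to the first gap -- keeps the region definitions manageable. Once all these regions are collected, the arbitrariness of $i$ (and of $j$ where it occurs) yields that $B$ has both the ESSP and the SSP, completing the proof.
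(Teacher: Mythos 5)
Your proposal is correct and takes essentially the same route as the paper's own proof: you construct the identical event removal $B$ (delete exactly the events of a hitting set of size $\lambda$), use the same structural observation that this cuts every cycle $D_i$ and separates $t_{i,1}$ from $t_{i,m_i+1}$ in $T_i$, so that everything outside the star of unique connecting events $u_i^j, v_i^j$ decomposes into simple directed paths with non-repeating events, and you then witness the SSP and ESSP by explicit regions (a staggered effect-$+1$ region for the SSAs, concentrated regions and local consume-$1$ regions for the ESSAs) that extend globally because the connecting events occur exactly once. The paper merely organizes the verification uniformly over the resulting simple paths instead of gadget-by-gadget, so no case distinction on $X_{i_0}\in Z$ is needed; this is a difference of bookkeeping, not of substance.
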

\begin{proof}
Let $U=\{u_i^j \mid i\in \{0,\dots, m-1\}, j\in \{0,\dotsm m_i+2\}\}$, and $V=\{v_i^j \mid i\in \{0,\dots, m-1\}, j\in \{0,\dotsm m_i-1\}\}$.
Let $Z$ be a hitting set of size $\lambda$ for $(\U,M)$.
Let $B$ be the \LTS\  that results from $A$ by removing exactly the events of $Z$ (and the corresponding edges), and nothing else.
Obviously, $B$ is a suitable event removal of $A$ that satisfies $\vert \E\vert \leq\lambda= \kappa$, and is reachable.

For a start, the following region $R_1=(sup_1,con_1,pro_1)$ solves $(s,e)$, and $(\iota,s)$ for all $s\in S\setminus \{\iota\}$, and all $e\in U\cup V$:
$sup_1(\iota)=1$, $con_1(e)=1$ for each $e\in U\cup V$, all the other images are $0$.

The following observation is crucial for our further argumentation:
Since $Z$ is a hitting set, and thus intersects with the event set of $T_i$, and $D_i$ for all $i\in \{0,\dots, m-1\}$, we observe that $B\setminus \{s\edge{e}s'\mid e\in U\cup V\}$ consists of simple directed paths, say
$P=s_0^0\Edge{e_1^0}\dots \Edge{e_{n_0}^0}s_{n_0}^0,\dots, P_y=s_0^y\Edge{e_1^y}\dots \Edge{e_{n_y}^y}s_{n_y}^y$ for some $y\in \N$,
on which every event of $B$ occurs at most once.

The following region $R_2=(sup_2, con_2, pro_2)$ solves all the remaining SSA of $B$:
let $sup(\iota)=0$, and,
for all $i\in \{0,\dots, y\}$, and all $j\in \{0,\dots, n_i\}$, let $sup(s^i_j)=\sum_{\ell=0}^{i-1}n_\ell + i + j$, and,
for all $e\in (\U\setminus Z)\cup\{k_0,\dots, k_{m-1}\}$, let $con(e)=0$, and $pro(e)=1$, and,
for all $e\in U\cup V$, if $\iota\edge{e}s$ is the unique e-labeled edge in $B$, then $con(e)=0$, and $pro(e)=sup(s)$.
(That is, we successively increment the support values of the states of $B$ by one, and the supports of the $s_0^i$'s are chosen in order to get different values for all the simple paths.)

\medskip
Let $a\in (\U\setminus Z)\cup\{k_0,\dots, k_{m-1}\}$ be arbitrary but fixed,
and let $P_{i_0},\dots, P_{i_\ell}$ be the paths in which $a$ occurs.
The following region $R_4=(sup_4,con_4,pro_4)$ solves $(a,s)$ for all states $s\in S\setminus (\bigcup_{j=0}^\ell S(P_{i_j}))$:
for all $s\in S$, if $s\in \bigcup_{j=0}^\ell S(P_{i_j})$, then $sup(s)=1$, otherwise $sup(s)=0$;
for all $e\in (\U\setminus Z)\cup\{k_0,\dots, k_{m-1}\}$, if $e=a$, then $(con(e),pro(e))=(1,1)$, otherwise $(con(e),pro(e))=(0,0)$;
for all $e\in U\cup V$, if $\iota\edge{e}s$ is the unique e-labeled edge in $B$, then $(con(e),pro(e))=(0,sup(s))$.

\medskip
It remains to argue that the events of $(\U\setminus Z)\cup\{k_0,\dots, k_{m-1}\}$ can be separated within the paths at which they occur.
This can be seen as follows:
Let $i\in \{0,\dots, y\}$, and $j\in \{1,\dots, n_i\}$ be arbitrary but fixed.
We observe that there are mappings $R_5=(sup_5,con_5,pro_5)$ such that $sup_5(s_{j-1}^i)=1$, and $sup_5(s)=0$ for all $s\in S(P_i)\setminus \{s_{j-1}^i\}$, and $(con_5(e_j^i), pro_5(e_j^i))=(1,0)$, and $(con_5(e_{j-1}^i), pro_5(e_{j-1}^i))=(0,1)$ if $j\geq 2$, and $(con_5(e), pro_5(e))=(0,0)$ for all $e\in E(P_i)\setminus \{e_{j-1}^i, e_j^i\}$.
Moreover, since every path of $P_0,\dots, P_{i-1},P_{i+1},\dots,P_y$ contains every event of $B$ at most once, and since the events of $U\cup V$ occur exactly once in $B$, it is easy to see that $R_5$ can be extended to a region of $B$, which, for all $s\in S(P_i)\setminus \{s_{j-1}^i\}$, solves $(e_j^i, s)$ by $con(e_j^i)>s$.
By the arbitrariness of $i$, this completes the proof.
\end{proof}

\section{The complexity of State Removal}\label{sec:state_removal}%

In this section, we are interested in finding implementable state removals of $A$ that remove only a restricted number of states.
Again justified by Lemma~\ref{lem:badouel}, this task corresponds to the following decision problems:\smallskip

\noindent
\fbox{\begin{minipage}[t][1.8\height][c]{0.97\textwidth}
\begin{decisionproblem}
  \problemtitle{\textsc{State Removal for Embedding}}
  \probleminput{An \LTS\  $A=(S,E,\delta,\iota)$, a natural number $\kappa$.}
  \problemquestion{Does there exist a state removal $B$ for $A$ by ${\mS}$ that has the SSP and satisfies $\vert{\mS}\vert\leq \kappa$?}
\end{decisionproblem}
\end{minipage}}
\smallskip

\noindent
\fbox{\begin{minipage}[t][1.8\height][c]{0.97\textwidth}
\begin{decisionproblem}
  \problemtitle{\textsc{State Removal for Language-Simulation}}
  \probleminput{An \LTS\  $A=(S,E,\delta,\iota)$, a natural number $\kappa$.}
  \problemquestion{Does there exist a state removal $B$ for $A$ by ${\mS}$ that has the ESSP and satisfies $\vert{\mS}\vert\leq \kappa$?}
\end{decisionproblem}
\end{minipage}}
\smallskip

\noindent
\fbox{\begin{minipage}[t][1.8\height][c]{0.97\textwidth}
\begin{decisionproblem}
  \problemtitle{\textsc{State Removal for Realization}}
  \probleminput{An \LTS\  $A=(S,E,\delta,\iota)$, a natural number $\kappa$.}
  \problemquestion{Does there exist a state removal $B$ for $A$ by ${\mS}$ that has the ESSP and the SSP and satisfies $\vert{\mS}\vert\leq \kappa$?}
\end{decisionproblem}
\end{minipage}}\smallskip
\smallskip

In the following Section~\ref{sec:state_removal_realization}, we show that state removal aiming at embedding or realization is NP-complete.
After that, we show that this is also true if we aim at language simulation in Section~\ref{sec:state_removal_language_simulation}.

\subsection{State Removal aiming at language-simulation or realization}\label{sec:state_removal_realization}%

\begin{theorem}\label{the:state_removal_realization}
{State Removal for Language-Simulation} and  {State Removal for Realization} are NP-complete.
\end{theorem}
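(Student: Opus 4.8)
As in the earlier sections, membership in NP is routine: a nondeterministic machine guesses the set ${\mS}$ (which has at most $\vert S\vert$ elements) in polynomial time, builds the state removal $B$, and --since $B$ is no larger than $A$-- decides the relevant separation property and, if it holds, computes a witness in polynomial time by Lemma~\ref{lem:badouel}. For the hardness part I would give a \emph{single} polynomial reduction from \textsc{Hitting Set} (NP-complete by Theorem~\ref{the:hitting_set_np}) that settles both problems at once, in the style of Section~\ref{sec:edge_removal}: an instance $({\U},M,\lambda)$ is turned into a pair $(A,\kappa)$ with $\kappa=\lambda$ such that (a) if some state removal $B$ of $A$ with $\vert{\mS}\vert\leq\kappa$ has the ESSP, then $({\U},M)$ has a hitting set of size $\leq\lambda$, and (b) if $({\U},M)$ has such a hitting set, then some state removal $B$ of $A$ with $\vert{\mS}\vert\leq\kappa$ has \emph{both} the SSP and the ESSP. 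As in Lemmas~\ref{lem:edge_removal_essp_implies_hs} and~\ref{lem:edge_removal_hs_implies_essp_and_ssp}, (a) covers the ``only if'' directions of both {State Removal for Language-Simulation} and {State Removal for Realization}, and (b) covers their ``if'' directions.

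The \LTS\ $A$ would have two kinds of gadgets. For every $i\in\{0,\dots,m-1\}$ and every $j\in\{0,\dots,\kappa\}$ there is a \emph{tester} path $T_{i,j}$ which, exactly as in Section~\ref{sec:edge_removal_realization}, spells out $X_{i_0}\,X_{i_1}\dots X_{i_{m_i-1}}\,X_{i_0}$ along $t_{i,j,0}\edge{X_{i_0}}t_{i,j,1}\edge{X_{i_1}}\cdots\edge{X_{i_{m_i-1}}}t_{i,j,m_i}\edge{X_{i_0}}t_{i,j,m_i+1}$, so that $X_{i_0}$ occurs twice and every other event of $M_i$ once; since there are $\kappa+1$ such copies and at most $\kappa$ states are deleted, at least one copy always survives entirely. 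For every variable $X_i\in{\U}$ there is a \emph{selector} $F_i$ formed by the edge $f_{i,0}\edge{X_i}f_{i,1}$ together with a length-two ``$a$-detour'' $f_{i,0}\edge{a}h_i\edge{a}f_{i,1}$, where $a$ is one global event; finally $\iota$ is attached to the gadgets by fresh events via $\iota\edge{u_i^j}t_{i,j,0}$ and $\iota\edge{w_i}f_{i,0}$. The crucial property of $F_i$ is that, when $F_i$ is completely present, comparing the two routes from $f_{i,0}$ to $f_{i,1}$ forces $\eff(X_i)=2\,\eff(a)$ in every region, whereas deleting the \emph{single} state $h_i$ destroys the detour (while $f_{i,1}$ stays reachable via the $X_i$-edge) and so lifts this constraint. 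Thus deleting $h_i$ plays the role that deleting the $X_i$-edge played in the edge-removal reduction, at the price of one deleted state. The resulting $A$ is reachable, deterministic and of polynomial size.

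For implication (a), assume $B$ has the ESSP and $\vert{\mS}\vert\leq\kappa$, and let $Z$ be the set of all $X_i\in{\U}$ for which $F_i$ is not completely present in $B$; as the $F_i$'s have disjoint state sets, $\vert Z\vert\leq\vert{\mS}\vert\leq\lambda$. Fix $i$. By pigeonhole some copy $T_{i,j^*}$ is completely present (its edge from $\iota$ is present since $\iota$ and $t_{i,j^*,0}$ are), so $(X_{i_0},t_{i,j^*,1})$ is an ESSA of $B$. Let a region $R$ solve it; arguing as in the proof of Lemma~\ref{lem:edge_removal_essp_implies_hs}, $con(X_{i_0})>sup(t_{i,j^*,1})$ together with $con(X_{i_0})\leq sup(t_{i,j^*,0})$ forces $\eff(X_{i_0})<0$, and, since $X_{i_0}$ occurs again at $t_{i,j^*,m_i}$, Lemma~\ref{lem:effect_on_a_path} gives $\sum_{\ell=1}^{m_i-1}\eff(X_{i_\ell})>0$, hence $\eff(X_{i_{\ell^*}})>0$ for some $\ell^*\in\{1,\dots,m_i-1\}$. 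Were both $F_{i_0}$ and $F_{i_{\ell^*}}$ completely present we would get $2\,\eff(a)=\eff(X_{i_0})<0$ and $2\,\eff(a)=\eff(X_{i_{\ell^*}})>0$, which is impossible; hence $X_{i_0}\in Z$ or $X_{i_{\ell^*}}\in Z$, so $M_i\cap Z\neq\emptyset$. As $i$ was arbitrary, $Z$ is a hitting set of size $\leq\lambda$.

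For implication (b), take a hitting set, enlarge it to size exactly $\lambda$, call it $Z$, and let $B$ be obtained from $A$ by deleting exactly the states $\{h_i\mid X_i\in Z\}$; this is a well-defined reachable state removal with $\vert{\mS}\vert=\lambda=\kappa$. It remains to build region witnesses for the SSP and the ESSP of $B$, following Lemma~\ref{lem:edge_removal_hs_implies_essp_and_ssp}. One ``large'' region with $\eff(a)=-1$ and $\eff(X_i)=-2$ for all $i$ (a consistent forced value when $F_i$ survives, a free choice when $h_i$ was deleted), the supports of the $u_i^j$- and $w_i$-targets chosen big enough, makes the support strictly decrease along every $T_{i,j}$ and along every surviving $F_i$, solving in one blow every SSA whose two states share a gadget; a few indicator-type regions separate states and events lying in distinct gadgets (or equal to $\iota$) and also dispatch all ESSAs that straddle distinct gadgets or sit inside an $F_i$. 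The only delicate ESSAs are the within-tester ones, $(X_{i_0},t_{i,j,k})$ and $(X_{i_\ell},t_{i,j,k})$; here, because $Z$ meets $M_i$, some $X_{i_\ell}\in M_i\cap Z$ has an effect unconstrained by any selector, and this furnishes exactly the slack needed to push the support below the relevant $con$-value, just as the regions $R_1,\dots,R_5$ do in Lemma~\ref{lem:edge_removal_hs_implies_essp_and_ssp}. Together with NP-membership, (a) and (b) give the theorem. The one real obstacle is this last verification: choosing one globally consistent pair $(con,pro)$ per event that works simultaneously across all gadgets --recall that $a$ and each $X_i$ occur in many of them-- and confirming that the event freed by the hitting set really does supply enough freedom to separate every within-tester ESSA; this is the state-removal analogue of the bookkeeping carried out in Lemma~\ref{lem:edge_removal_hs_implies_essp_and_ssp} and must be redone for the present gadgets.
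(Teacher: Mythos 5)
Your NP-membership argument and your overall strategy (a single reduction from \textsc{Hitting Set} with $\kappa=\lambda$, whose ``only if'' direction uses only the ESSP and whose ``if'' direction produces both the ESSP and the SSP, so that both problems are settled at once) coincide with the paper's. Where you genuinely diverge is in the gadgets and, as a consequence, in the sufficiency proof. The paper builds no new \LTS\ at all: it reuses the \LTS\ $A$ of Section~\ref{sec:edge_removal_realization} verbatim, normalizes the removal (without loss of generality $\mS\subseteq\{f_{j,1}\mid j\in\{0,\dots,n-1\}\}$, which keeps $B$ reachable), observes that deleting the single state $f_{i,1}$ has the same effect as deleting the edge $f_{i,0}\edge{X_i}f_{i,1}$ had in the edge-removal proof, and for the converse simply notes that every region constructed for Lemma~\ref{lem:edge_removal_hs_implies_essp_and_ssp} remains valid, since only edges disappear and every gadget keeps its entry state. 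You instead introduce a new selector with a single global event $a$ and a detour $f_{i,0}\edge{a}h_i\edge{a}f_{i,1}$, so that deleting $h_i$ lifts the forced relation $\eff(X_i)=2\,\eff(a)$. Your direction (a) is sound: the pigeonhole over the $\kappa+1$ intact testers, the derivation $\eff(X_{i_0})<0$ and $\eff(X_{i_{\ell^*}})>0$, the clash of the two forced relations, and the bound $\vert Z\vert\le\vert\mS\vert$ via the disjoint selector state sets are all correct, and you even avoid the paper's normalization step. What your route buys is a smaller alphabet (one event $a$ instead of $a_0,\dots,a_\kappa$); what it costs is exactly the part you leave open, namely direction (b).

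Because a surviving selector now forces $\eff(X_i)=2\,\eff(a)$ rather than $\eff(X_i)=\eff(a_h)$, the regions $R_0,\dots,R_5$ of Lemma~\ref{lem:edge_removal_hs_implies_essp_and_ssp} do not carry over as they stand: their values must be rescaled (e.g.\ $\eff(a)=-1$, surviving variables of effect $-2$, $con(X_{i_0})=2\ell$ in the analogue of $R_1$); the freed event $X_{i_\ell}\in Z$ still keeps its edge in $F_{i_\ell}$, so its large $con$ or $pro$ values have to be absorbed by choosing $pro(w_{i_\ell})$ and the affected supports large enough; and the detour creates separation atoms that simply do not exist in the paper's gadget, e.g.\ $(a,t)$ for every tester state $t$, $(a,\iota)$, $(X_i,h_i)$, $(X_i,f_{i,1})$ and, for $X_i\in Z$, $(a,f_{i,0})$. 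All of this can in fact be done (a region with $\eff(a)=0$, $con(a)=1$ and zero supports on the testers handles the atoms $(a,t)$ and $(a,\iota)$; one with $\eff(a)=-1$ handles $(X_i,h_i)$, $(X_i,f_{i,1})$ and $(a,f_{i,1})$), so your construction does go through; but as written the theorem is proved only modulo this bookkeeping, which you explicitly defer. The paper's decision to reuse the Section~\ref{sec:edge_removal_realization} \LTS\ is precisely what makes that verification unnecessary there.
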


First of all, the problems are in NP:
If there is a suitable state removal, then the set ${\mS}$ can be guessed non-deterministically in polynomial time;
after that, $B$ and witnesses for the corresponding separation properties can be computed deterministically in polynomial time by Lemma~\ref{lem:badouel}.
The proof of the hardness-part is based again on a reduction of \textsc{Hitting Set} that transforms the input $(\U, M, \lambda)$ into an instance $(A,\kappa)$.
In particular, we will reuse the reduction that has been used for the proof of Theorem~\ref{the:edge_removal_realization}.
So let $\kappa=\lambda$, and let $A$ be the \LTS\  that has been defined in Section~\ref{sec:edge_removal_realization}.

\begin{lemma}\label{lem:state_removal_essp_implies_hs}
If there is a state removal $B$ of $A$ by $\mS$ that has the ESSP and satisfies $\vert \mS\vert \leq \kappa$, then there is a hitting set of size at most $\lambda$ for $(\U,M)$.
\end{lemma}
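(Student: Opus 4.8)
The plan is to re-run the argument of Lemma~\ref{lem:edge_removal_essp_implies_hs} on the very same \LTS\ $A$, but with its ``reachability normalization'' of the removed set replaced by two structural observations about a reachable state removal $B=(S',E',\delta',\iota)$ of $A$. First, because $\iota$ is never removed, because each gadget $T_{i,j}$ and $F_i$ is attached to $\iota$ by a single edge, and because internally $T_{i,j}$ is a simple path and $F_i$ a pair of states, condition~(2) in the definition of state removal together with the reachability of $B$ forces: a gadget is entirely present in $B$ (including its edge from $\iota$) if and only if all of its states survive, and otherwise at least one of its states lies in $\mS$. Second, for a fixed $i\in\{0,\dots,m-1\}$ the $\kappa+1$ copies $T_{i,0},\dots,T_{i,\kappa}$ have pairwise disjoint state sets, so since $\vert\mS\vert\le\kappa$ at least one of them, say $T_{i,j_i}$, is entirely present in $B$; likewise the $F$-gadgets have pairwise disjoint state sets, so the set $Z=\{\,X_i\mid F_i\text{ is not entirely present in }B\,\}\subseteq\U$ satisfies $\vert Z\vert\le\vert\mS\vert\le\kappa=\lambda$.

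It then remains to show that $Z$ hits every $M_i$. Fix $i$ and assume, for a contradiction, that $M_i\cap Z=\emptyset$, i.e.\ all of $F_{i_0},\dots,F_{i_{m_i-1}}$ are entirely present in $B$. Since $B$ has the ESSP and $X_{i_0}$ does not occur at $t_{i,j_i,1}$, some region $R=(sup,con,pro)$ of $B$ solves the ESSA $(X_{i_0},t_{i,j_i,1})$. From $t_{i,j_i,0}\Edge{X_{i_0}}t_{i,j_i,1}$ and $con(X_{i_0})>sup(t_{i,j_i,1})$ one obtains $\eff(X_{i_0})<0$; since $f_{i_0,0}\edge{X_{i_0}}f_{i_0,1}$ and $f_{i_0,0}\edge{a_0}f_{i_0,1}$ are both present in $B$, this yields $\eff(a_0)<0$. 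On the other hand $con(X_{i_0})\le sup(t_{i,j_i,m_i})$ because $X_{i_0}$ occurs at $t_{i,j_i,m_i}$, and Lemma~\ref{lem:effect_on_a_path} applied along the subpath from $t_{i,j_i,1}$ to $t_{i,j_i,m_i}$ gives $\sum_{\ell=1}^{m_i-1}\eff(X_{i_\ell})>0$; hence $\eff(X_{i_\ell})>0$ for some $\ell\ge1$, so $i_\ell\ne i_0$, and using $f_{i_\ell,0}\edge{X_{i_\ell}}f_{i_\ell,1}$ and $f_{i_\ell,0}\edge{a_0}f_{i_\ell,1}$ in $B$ this yields $\eff(a_0)>0$, a contradiction. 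Thus $M_i\cap Z\ne\emptyset$ for every $i$, and $Z$ is a hitting set of size at most $\lambda$ for $(\U,M)$.

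The step I expect to be the crux is the first structural observation: a state removal never deletes an edge on its own, so one must argue that it cannot destroy any of the $a_\ell$-edges inside some $F_i$ without putting a state of $F_i$ into $\mS$, and symmetrically that killing a copy $T_{i,j}$ costs a removed state — this is exactly what converts the budget $\kappa$ into both ``a surviving copy $T_{i,j_i}$'' and ``$\vert Z\vert\le\lambda$''. Once these are in place, the region arithmetic is identical to that of Lemma~\ref{lem:edge_removal_essp_implies_hs}, now read off from the surviving copy $T_{i,j_i}$ and the entirely present $F$-gadgets.
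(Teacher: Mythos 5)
Your proof is correct and follows essentially the same route as the paper's: the same reduction, the same ESSA $(X_{i_0}, t_{i,j,1})$ taken on a surviving copy of the $T_{i,\cdot}$ gadgets, and the same effect-arithmetic contradiction propagated through the shared events $a_0$ in the $F$-gadgets. The only divergence is bookkeeping: where the paper first normalizes the removed set (arguing one may assume $\mS\subseteq\{f_{j,1}\mid j\in\{0,\dots,n-1\}\}$ without destroying the ESSP, which implicitly requires re-inserting states and extending regions) and then reads the hitting set off $\mS$, you leave $B$ untouched and instead exploit the pairwise disjointness of the gadgets' state sets, together with condition~(2) of the state-removal definition, to obtain an intact copy $T_{i,j_i}$ and the bound $\vert Z\vert\le\vert\mS\vert\le\lambda$ --- a slightly more self-contained variant of the same argument.
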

\begin{proof}
Let $B$ be a state removal of $A$ by $\mS$  that has the ESSP and satisfies $\vert \mS\vert \leq \kappa$.
Without loss of generality, we may assume that $\mS\subseteq\{f_{j,1}|j\in\{0,\ldots,n-1\}\}$,
so that in particular $B$ is reachable.
Indeed, for each $i\in\{0,\ldots,m-1\}$, there are $\kappa+1$ copies of $T_i$, hence at least one of them is entirely in $B$,
but then we may keep all of them without destroying the ESSP.
For each $j\in\{0,\ldots,n-1\}$, if $f_{j,0}$ and $f_{j,1}$ are both in $\mS$, we may suppress $f_{j,0}$ without destroying ESSP;
if $f_{j,0}$ is in $\mS$ but not $f_{j,1}$, we may replace the first one by the latter in $\mS$, still without destroying ESSP.

\medskip
In the following, we will show that
\[
Z=\{X_j\mid j\in \{0,\dots, n-1\} \text{ and } f_{j,1}\in\mS\}
\]
defines a suitable hitting set for $(\U,M)$, that is, $Z\cap M_i\not=\emptyset$ for each $i\in\{0,\ldots,m-1\}$.

\drop{
Let $i\in \{0,\dots, m-1\}$ be arbitrary but fixed.
By $\vert \mS\vert \leq \kappa$, there is a $j\in \{0,\dots, \kappa\}$ such that $T_{i,j}$ is completely present in $B$.
Since $B$ has the ESSP, there is a region $R=(sup, con pro)$, there is a region that solves the ESSA $(X_{i_0}, t_{i,j,1})$.
Similar to the proof of Lemma~\ref{lem:edge_removal_essp_implies_hs}, one argues that this implies that there has to be an $\ell\in \{1,\dots, m_i-1\}$, such that the edges $f_{i_\ell,0}\edge{X_{i_\ell}}f_{i_\ell,1}$, and the edges $f_{i_\ell,0}\edge{a_h}f_{i_\ell,1}$, where $h\in \{0,\dots, \kappa\}$, can not be simultaneously present in $B$.
Since $B$ originates from $A$ be the removal of states, this implies $\{f_{i_\ell,0}, f_{i_\ell,1}\}\cap \mS\not=\emptyset$.
By the arbitrariness of $i$, and by $\vert \mS\vert \leq \kappa$, we get $Z\cap M_i\not=\emptyset$ for all $i\in \{0,\dots, m-1\}$, and $\vert Z\vert \leq\kappa=\lambda$, respectively.
}

\medskip
Assume, for a contradiction, that for some $i\in\{0,\ldots,m-1\}$ and for all $\ell\in\{0,\ldots,m_i-1\}$, $F_{i_\ell}$ is entirely in $B$.
Since $B$ has the ESSP, there is a region that solves $\alpha=(X_{i_0}, t_{i,0,1})$.
Let $R=(sup, con, pro)$ be such a region.
We argue that there is some $\ell\in \{0,\dots, m_i-1\}$ such that the $X_{i_\ell}$-labeled edge $f_{i_\ell,0}\edge{X_{i_\ell}}f_{i_\ell,1}$ is not present in $B$.

Assume that this is not true and $f_{i_\ell,0}\Edge{X_{i_\ell}}f_{i_\ell,1}\in B$ for all $\ell\in \{0,\dots, m_i-1\}$.
By $t_{i,0,0}\Edge{X_{i_0}}$, we have $con(X_{i_0}) \leq sup(t_{i,0,0})$;
since $R$ solves $\alpha$, we have that $con(X_{i_0}) > sup(t_{i,0,1})$.
This implies $con(X_{i_0}) > pro(X_{i_1})$, and thus $sup(f_{i_0,0}) > sup(f_{i_0,1})$.
Moreover, by $t_{i,0,m_i}\Edge{X_{i_0}}$, we have that $con(X_{i_0}) \leq sup(t_{i,0,m_i})$, which, by $con(X_{i_0}) > sup(t_{i,0,1})$, implies $sup(t_{i,0,1}) <  sup(t_{i,0,m_i}) = sup(t_{i,0,1}) +\sum_{\ell=1}^{m_i-1}\eff(X_{i_\ell})$, and thus $\sum_{\ell=1}^{m_i-1}\eff(X_{i_\ell})> 0$.
In particular, there is an $\ell\in \{1,\dots, m_i-1\}$ (so that $i_\ell\neq i_0$) such that $con(X_{i_\ell}) < pro(X_{i_\ell})$, which implies $sup(f_{i_\ell,0}) < sup(f_{i_\ell,1})$.

Since the edges $f_{i_0,0}\edge{a_0}f_{i_0,1}$ and $f_{i_\ell,0}\edge{a_0}f_{i_\ell,1}$ are both present in $B$, by $sup(f_{i_0,0})>sup(f_{i_0,1})$, we obtain $con(a_0) > pro(a_0)$ and, by $sup(f_{i_\ell,0}) < sup(f_{i_\ell,1})$, we get $con(a_0) < pro(a_0)$, which is a contradiction.
Consequently, either $f_{i_0,1}\in \mS$ or $f_{i_\ell,1}\in\mS$ and thus $M_i\cap Z\not=\emptyset$.
Since $i$ was arbitrary, we have $M_i\cap Z\not=\emptyset$ for all $i\in \{0,\dots, m-1\}$ and $Z$ defines a hitting set for $(\U,M)$.
The claim follows.
\end{proof}

Conversely, the following lemma states that, if $(\U,M,\lambda)$ allows a positive decision, then so does $(A,\kappa)$:

\begin{lemma}\label{lem:state_removal_hs_implies_essp_and_ssp}
If there is a hitting set of size at most $\lambda$ for $(\U,M)$, then there is a state removal $B$ of $A$ by $\mS$ that has the ESSP and satisfies $\vert \mS\vert \leq \kappa$.
\end{lemma}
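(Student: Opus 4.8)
The plan is to reuse, unchanged, the \LTS\ $A$ of Section~\ref{sec:edge_removal_realization} together with the edge removal $B'$ built in the proof of Lemma~\ref{lem:edge_removal_hs_implies_essp_and_ssp}. Starting from a hitting set for $(\U,M)$ of size at most $\lambda$, I would pass to a hitting set $Z=\{X_{j_0},\dots,X_{j_{\lambda-1}}\}$ of exactly $\lambda$ elements (possible as $\lambda\le|\U|$), put $\mS=\{f_{j_0,1},\dots,f_{j_{\lambda-1},1}\}$, and let $B$ be the sub-\LTS\ of $A$ induced on $S\setminus\mS$ (keeping $E$ as the event set, except in the degenerate case noted below). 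Then $|\mS|=\lambda=\kappa$, and $B$ is reachable: in $A$ the state $f_{i,1}$ has no outgoing edge, and the only edges into it come from $f_{i,0}$, which is itself a direct $v_i$-successor of $\iota$; hence deleting the states $f_{j_\ell,1}$ destroys nothing but edges leaving the $f_{j_\ell,0}$.

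The key point I would use is that $B$ is obtained from $B'$ by deleting precisely the states in $\mS$. Indeed $B'$ arises from $A$ by removing only the edges $f_{j_\ell,0}\edge{X_{j_\ell}}f_{j_\ell,1}$, and removing the states $f_{j_\ell,1}$ from $B'$ additionally deletes only the edges $f_{j_\ell,0}\edge{a_h}f_{j_\ell,1}$ for $h\in\{0,\dots,\kappa\}$; the outcome is exactly $B$. Therefore every edge of $B$ is an edge of $B'$, so for every region $(sup,con,pro)$ of $B'$ the triple obtained by restricting $sup$ to $S\setminus\mS$ is a region of $B$. As $B'$ enjoys both the SSP and the ESSP (Lemma~\ref{lem:edge_removal_hs_implies_essp_and_ssp}), it follows at once that every SSA of $B$ is solvable -- the SSAs of $B$ being a subset of those of $B'$ -- and so is every ESSA of $B$ that is already an ESSA of $B'$. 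The only ESSAs left to handle are the ``new'' ones, that is, pairs $(e,s)$ such that $s$ has an outgoing $e$-edge in $B'$ but not in $B$; inspecting which edges vanished, these are exactly the atoms $(a_h,f_{j_\ell,0})$ with $\ell\in\{0,\dots,\lambda-1\}$ and $h\in\{0,\dots,\kappa\}$ (the $X_{j_\ell}$-edges are already missing in $B'$).

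A single region $R=(sup,con,pro)$ of $B$ suffices for all of these. Defining it implicitly (Remark~\ref{rem:implicit}) by $sup(\iota)=1$, with $\T_{1,1}^{R}$ consisting of all the $u_i^j$, all $v_i$ with $X_i\notin Z$, and $a_0,\dots,a_\kappa$, with $\T_{1,0}^{R}=\{v_i\mid X_i\in Z\}$, and with $\T_{0,0}^{R}=\U$, one verifies directly that $R$ is well defined on $B$ (every $t$-state, and every $f_{i,0}$ and $f_{i,1}$ with $X_i\notin Z$, receives support $1$, while every $f_{j_\ell,0}$ receives support $0$), and that $con(a_h)=1>0=sup(f_{j_\ell,0})$, so $R$ solves all new atoms simultaneously. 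In the degenerate case $Z=\U$ -- which can occur only if $\lambda=|\U|$ -- all gadgets $F_i$ collapse to the single state $f_{i,0}$, the events $a_0,\dots,a_\kappa$ occur nowhere in $B$, and one simply lets $E\setminus\{a_0,\dots,a_\kappa\}$ be the event set of $B$; then there are no new ESSAs and the restriction argument alone finishes the proof.

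Combining the two parts, $B$ has the ESSP and the SSP and satisfies $|\mS|\le\kappa$; the ESSP yields the NP-hardness of \textsc{State Removal for Language-Simulation} and the ESSP together with the SSP that of \textsc{State Removal for Realization}. I do not expect a real obstacle here: virtually everything is inherited from Lemma~\ref{lem:edge_removal_hs_implies_essp_and_ssp} by restricting regions, and the one spot that needs a little care is the bookkeeping showing that deleting the states $f_{j_\ell,1}$ creates no separation obligation other than the atoms $(a_h,f_{j_\ell,0})$ -- which the single region $R$ absorbs.
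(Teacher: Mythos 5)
Your proposal is correct and follows essentially the same route as the paper: reuse the LTS $A$ of Section~\ref{sec:edge_removal_realization}, remove exactly the states $f_{j_\ell,1}$ corresponding to a hitting set of $\lambda$ elements, and inherit the separation regions of Lemma~\ref{lem:edge_removal_hs_implies_essp_and_ssp} by restricting them to the remaining states. You are in fact more explicit than the paper on the one delicate point, namely the newly created atoms $(a_h,f_{j_\ell,0})$ and the degenerate case $Z=\U$, which the paper's proof glosses over with ``the claim follows similarly''; your additional region handles them correctly.
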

\begin{proof}
Let $Z=\{X_{\ell_0},\dots, X_{\ell_{\lambda-1}}\}$, where $\ell_0,\dots, \ell_{\lambda-1}\in \{0,\dots, n-1\}$, be a hitting set for $(\U,M)$, and let $B=(S',E,\delta',\iota)$ be the \LTS\  that originates from $A$ by removing, for all $j\in \{0,\dots, \lambda-1\}$, exactly the state $f_{\ell_j,1}$ (and the affected edges), and nothing else.
One easily verifies that $B$ is a state removal of $A$ by $\mS$ that satisfies $\vert \mS\vert \leq \kappa=\lambda$.
In particular, for every gadget of $A$, its initial state is present in $B$.
Hence, every region defined in for the proof of Lemma~\ref{lem:edge_removal_hs_implies_essp_and_ssp} is also valid for $B$.
Consequently, the claim follows similarly to the proof of this lemma.
\end{proof}

\subsection{State Removal aiming at embedding}\label{sec:state_removal_language_simulation}

Removing as few states of an \LTS\  as possible in order to make it implementable is also hard if we are aiming at embedding:
\begin{theorem}\label{the:state_removal_embedding}
{State Removal for Embedding} is NP-complete.
\end{theorem}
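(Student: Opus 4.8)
The plan is to follow exactly the strategy that worked for \textsc{State Removal for Language-Simulation} and \textsc{Realization}, only now the target property is the SSP alone, so a different gadget construction adapted to state separation is needed. First I would establish membership in NP, which is immediate and parallels Theorem~\ref{the:state_removal_realization}: guess $\mS$ with $|\mS|\le\kappa$ nondeterministically in polynomial time, compute $B$ (of size bounded by that of $A$) deterministically, and then by Lemma~\ref{lem:badouel}(4) check in polynomial time whether $B$ has the SSP. Hence the work is the hardness part, and I would again reduce from \textsc{Hitting Set}, setting $\kappa=\lambda$.

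For the reduction I would mimic the construction used for \textsc{Edge Removal for Embedding} in Section~\ref{sec:edge_removal_embedding}, since that one already targets the SSP. Concretely: for each $i\in\{0,\dots,m-1\}$ and each $j\in\{0,\dots,\kappa\}$ include a path gadget $T_{i,j}$ spelling out $k_i$ followed by the elements of $M_i$, together with a cycle gadget $D_{i,j}$ in which the event $a$ fires $m_i$ times and is then closed by $k_i$; for each $i\in\{0,\dots,n-1\}$ include the synchronizing gadget $F_i$ with two parallel edges labeled $a$ and $X_i$ between $f_{i,0}$ and $f_{i,1}$; and attach everything to a fresh initial state $\iota$ via uniquely labeled edges $u_i^j,v_i^j,w_i$ to keep $A$ reachable. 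The only adaptation for state removal is that the ``cut'' I want to force must be realized by deleting a vertex of $F_i$ rather than an edge; since deleting $f_{i,1}$ (or $f_{i,0}$) erases both the $a$- and the $X_i$-edge of that copy, and since each of $T_{i,j}$, $D_{i,j}$, $F_i$ has $\kappa+1$ copies or a $\kappa+1$-fold redundancy, a budget of $\kappa$ state deletions cannot touch the $T$'s, $D$'s, or the $\iota$-edges in an essential way — so without loss of generality one may assume $\mS\subseteq\{f_{i,0},f_{i,1}\mid i\}$, exactly as in Lemmas~\ref{lem:edge_removal_ssp_implies_hs} and~\ref{lem:state_removal_essp_implies_hs}.

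The two implications then go through as before. For the forward direction, given a state removal $B$ with the SSP and $|\mS|\le\kappa$, I would normalize $\mS$ as above, set $Z=\{X_i\mid \{f_{i,0},f_{i,1}\}\cap\mS\ne\emptyset\}$, and argue $Z\cap M_i\ne\emptyset$ for every $i$: pick a copy $T_{i,j}$ and $D_{i,j}$ entirely present in $B$, take a region solving the SSA $(t_{i,j,1},t_{i,j,m_i+1})$ (or the analogous pair of endpoints around the $M_i$-block), and observe via Lemma~\ref{lem:effect_on_a_path} that if all of $F_{i_0},\dots,F_{i_{m_i-1}}$ survive, then $\eff(X_{i_\ell})=\eff(a)$ for each $\ell$ while $\eff(k_i)=-m_i\cdot\eff(a)$ from the cycle $D_{i,j}$, forcing the two endpoints to have equal support — a contradiction. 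Hence some $F_{i_\ell}$ lost a vertex, i.e. $X_{i_\ell}\in Z$. For the converse, given a hitting set $Z$ of size $\lambda$, delete $f_{i,1}$ for each $X_i\in Z$ and exhibit a single region (as in Lemma~\ref{lem:edge_removal_hs_implies_ssp}) that separates all remaining SSAs: assign $sup(d_{i,j,0})=0$, $sup(t_{i,j,0})=m_i$, $sup(f_{i,0})=0$, with $(con(a),pro(a))=(0,1)$, $(con(k_i),pro(k_i))=(m_i,0)$, $(con(X),pro(X))=(0,1)$ for $X\notin Z$, and $(con(X),pro(X))=(0,\sum_i m_i)$ for $X\in Z$; the hitting-set property makes the support strictly monotone along each path and cycle, so all internal SSAs are solved, while SSAs across distinct gadgets or involving $\iota$ are trivially solved using the unique connector events. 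The main obstacle I anticipate is verifying that after removing a vertex $f_{i,1}$ the support assignment is still a well-defined region on the truncated $F_i$ and that the induced net indeed still realizes the SSP on every gadget — essentially a bookkeeping check that the monotonicity argument of Lemma~\ref{lem:edge_removal_hs_implies_ssp} is unaffected by deleting the target endpoints rather than the target edges — but this should be routine given the earlier lemmas.
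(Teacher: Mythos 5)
Your proposal is correct and takes essentially the same route as the paper: membership in NP as for the realization case, and hardness by reusing the SSP-targeting reduction of Section~\ref{sec:edge_removal_embedding} with $\kappa=\lambda$, normalizing $\mS$ into states of the $F_i$-gadgets, extracting a hitting set from a fully surviving copy of $T_{i,0}$ and $D_{i,0}$ via Lemma~\ref{lem:effect_on_a_path}, and conversely deleting $f_{\ell_j,1}$ for the hitting-set elements and reusing the regions of Lemma~\ref{lem:edge_removal_hs_implies_ssp}. The one slip to fix is the named separation atom: it must be $(t_{i,j,0},t_{i,j,m_i+1})$, spanning the $k_i$-edge (as your use of $\eff(k_i)=-m_i\cdot\eff(a)$ already presupposes), since the pair $(t_{i,j,1},t_{i,j,m_i+1})$ stays solvable even when all $F_{i_\ell}$ survive.
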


Similar to Theorem~\ref{the:state_removal_realization}, one argues that \textsc{State Removal for Language Simulation} is in NP.
We prove the hardness part by reusing the reduction of Section~\ref{sec:edge_removal_embedding}, that is, we define $\kappa=\lambda$, and let $A$ be defined as in Section~\ref{sec:edge_removal_embedding}.

\begin{lemma}\label{lem:state_removal_language_simulation_essp_implies_vc}
If there is a state removal $B$ for $A$ by $\mS$ that has the SSP and satisfies $\vert \mS\vert \leq \kappa$, then there is a hitting set with at most $\lambda=\kappa$ elements for $(\U,M)$.
\end{lemma}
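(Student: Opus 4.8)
The plan is to mirror the argument of Lemma~\ref{lem:edge_removal_ssp_implies_hs}, adapting the ``without loss of generality'' normalisation from the edge-removal setting to the state-removal setting, and then re-running the effect-counting contradiction. Concretely, suppose $B$ is a state removal of $A$ by $\mS$ with the SSP and $\vert\mS\vert\leq\kappa$. First I would argue that we may assume $\mS\subseteq\{f_{i,1}\mid i\in\{0,\dots,n-1\}\}$, so that $B$ remains reachable: since there are $\kappa+1$ copies $T_{i,0},\dots,T_{i,\kappa}$ of each $T_{i,\cdot}$-gadget and only $\leq\kappa$ states are removed, at least one copy $T_{i,j}$ survives entirely (including its entry edge from $\iota$), and likewise at least one copy $D_{i,j}$ survives entirely; removing a state from a spare copy only shrinks the set of SSAs, so we may put those spare copies back. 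If $\iota$ itself were in $\mS$ the LTS would be empty, so $\iota\in S'$. If some $t_{i,j,\cdot}$, $d_{i,j,\cdot}$, or $f_{i,0}$ is removed we can in each case move the removal onto a state of a spare gadget copy or onto the corresponding $f_{i,1}$ (for $F_i$, removing $f_{i,0}$ disconnects $F_i$ but we can instead remove $f_{i,1}$, which is at least as destructive for SSP purposes since it kills strictly fewer separation requirements); the $u_i^j, v_i^j, w_i$ edges and their source $\iota$ are then intact, so $B$ is reachable.

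Next I would set $Z=\{X_i\mid i\in\{0,\dots,n-1\}\text{ and }f_{i,1}\in\mS\}$ and show $Z$ is a hitting set of size $\leq\lambda=\kappa$ for $(\U,M)$. Fix $i\in\{0,\dots,m-1\}$. By the normalisation, $T_{i,0}$ and $D_{i,0}$ are entirely present in $B$. By the SSP there is a region $R=(sup,con,pro)$ solving the SSA $(t_{i,0,0},t_{i,0,m_i+1})$, i.e.\ $sup(t_{i,0,0})\neq sup(t_{i,0,m_i+1})$. Assume for contradiction that $Z\cap M_i=\emptyset$, so $f_{i_\ell,1}\notin\mS$ for all $\ell$, hence every $F_{i_\ell}$ is completely present in $B$. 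The cycle $D_{i,0}$ gives, via Lemma~\ref{lem:effect_on_a_path}, $sup(d_{i,0,0})=sup(d_{i,0,0})+m_i\cdot\eff(a)+\eff(k_i)$, so $\eff(k_i)=-m_i\cdot\eff(a)$; and each surviving $F_{i_\ell}$ (two parallel edges labelled $a$ and $X_{i_\ell}$ between the same pair of states) forces $\eff(X_{i_\ell})=\eff(a)$. Then, again by Lemma~\ref{lem:effect_on_a_path} along $T_{i,0}$,
\[
sup(t_{i,0,m_i+1})=sup(t_{i,0,0})+\eff(k_i)+\sum_{\ell=0}^{m_i-1}\eff(X_{i_\ell})=sup(t_{i,0,0})-m_i\cdot\eff(a)+m_i\cdot\eff(a)=sup(t_{i,0,0}),
\]
contradicting that $R$ solves the SSA. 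Hence $Z\cap M_i\neq\emptyset$; since $i$ was arbitrary, $Z$ hits every $M_i$, and $\vert Z\vert\leq\vert\mS\vert\leq\kappa=\lambda$.

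The only place that needs genuine care — and which I expect to be the main obstacle — is the ``without loss of generality'' step, because state removal is subtler than edge removal: deleting $f_{i,0}$ removes the edge $\iota\edge{w_i}f_{i,0}$ too, and deleting any gadget-internal state can break reachability of all later states of that gadget. The argument has to be that any removal of a ``wrong'' state can be traded for removal of an $f_{i,1}$ (or absorbed by a spare gadget copy) without increasing $\vert\mS\vert$ and without making fewer separation atoms solvable that matter for the reduction — in particular one must check that after the trade the remaining required SSAs (those witnessing the contradiction above, namely $(t_{i,0,0},t_{i,0,m_i+1})$) are still separation atoms of $B$ and are still solved. Since removing $f_{i,1}$ leaves all $t$- and $d$-states untouched, this is fine, but it should be spelled out. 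The forward direction is then routine and symmetric to Lemma~\ref{lem:edge_removal_hs_implies_ssp}: given a hitting set $Z$, remove the states $f_{\ell_j,1}$ for $X_{\ell_j}\in Z$; the region $R$ built in that earlier proof still separates all intra-gadget SSAs of the resulting $B$ (deleting $f_{i,1}$ only drops SSAs, and the supports assigned to $D_{i,j}$- and $T_{i,j}$-states are unaffected), so $B$ has the SSP. I would state that converse as the companion lemma and conclude NP-completeness of \textsc{State Removal for Embedding}.
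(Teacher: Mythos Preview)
Your approach is essentially the paper's: the same normalisation $\mS\subseteq\{f_{i,1}\mid i\}$, the same candidate $Z$, and the same effect-counting contradiction via the surviving $T_{i,0}$, $D_{i,0}$ and the parallel edges in $F_{i_\ell}$, exactly mirroring Lemma~\ref{lem:edge_removal_ssp_implies_hs}.

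One point to fix in your normalisation: the sentence ``we can instead remove $f_{i,1}$, which is at least as destructive for SSP purposes since it kills strictly fewer separation requirements'' has the logic inverted. Trading the removal of $f_{i,0}$ (which also makes $f_{i,1}$ unreachable) for the removal of $f_{i,1}$ alone leaves \emph{more} states in $B$, hence \emph{more} SSAs to solve, not fewer; so the trade is a priori harder, not easier. The reason the trade is nonetheless valid is that the only new SSAs are of the form $(f_{i,0},s)$, and these are solvable because the event $w_i$ labels a unique edge $\iota\edge{w_i}f_{i,0}$: any region that already solves the old SSAs can have $pro(w_i)$ adjusted freely to give $f_{i,0}$ a fresh support value. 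The paper handles this step by appeal to the analogous normalisation in Lemma~\ref{lem:state_removal_essp_implies_hs}; you should replace your ``fewer SSAs'' justification with this uniqueness-of-$w_i$ argument.
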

\begin{proof}
Let $B$ be a state removal of $A$ that has the SSP and satisfies $\vert \mS\vert \leq \kappa$.
Again, without loss of generality, we may assume that $\mS\subseteq\{f_{j,1}|j\in\{0,\ldots,n-1\}\}$, so that in particular $B$ is reachable.
The proof is similar to the one of Lemma~\ref{lem:state_removal_essp_implies_hs}.

\medskip
In the following, we argue that
\[
Z=\{X_i\mid i\in \{0,\dots, n-1\} \text{ and }  f_{i,1}\in\mS\}
\]
defines a suitable hitting set for $(\U,M)$.

\medskip
Let $i\in \{0,\dots, m-1\}$ be arbitrary but fixed.
$T_{i,0}$, and $D_{i,0}$ are completely present in $B$.
Since $B$ has the SSP, there is a region $R=(sup, con, pro)$ that solves the SSA $(t_{i,0,0}, t_{i, 0, m_i+1})$.
Similar to the proof of Lemma~\ref{lem:edge_removal_ssp_implies_hs}, one argues that this implies that there is an $\ell\in \{1,\dots, m_i-1\}$, such that the edge $f_{i_\ell,0}\edge{X_{i_\ell}}f_{i_\ell,1}$ is not  in $B$, i.e., $X_{i_\ell}\in Z$.
By the arbitrariness of $i$, we get $Z\cap M_i\not=\emptyset$ for all $i\in \{0,\dots, m-1\}$, and $\vert Z\vert \leq\kappa=\lambda$.
\end{proof}

Conversely, we show:

\begin{lemma}\label{lem:state_removal_language_simulation_essp_implies_vc2}
If there is a hitting set with at most $\lambda$ elements for $(\U,M)$, then there is a state removal $B$ for $A$ by $\mS$ that has the SSP and satisfies $\vert \mS\vert \leq \kappa$,
\end{lemma}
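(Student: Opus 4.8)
The plan is to reuse, almost verbatim, the trick that already worked for Lemma~\ref{lem:state_removal_hs_implies_essp_and_ssp}: take the witness constructed in the corresponding edge-removal lemma -- here Lemma~\ref{lem:edge_removal_hs_implies_ssp} -- and observe that deleting the states $f_{i,1}$ instead of the edges $f_{i,0}\edge{X_i}f_{i,1}$ only shrinks the set of separation atoms, while the regions stay valid.

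First I would fix a hitting set $Z=\{X_{\ell_0},\dots, X_{\ell_{\lambda-1}}\}$ of size exactly $\lambda$ (any smaller one extends to such a set) and let $B=(S',E,\delta',\iota)$ be the \LTS\ obtained from $A$ by deleting the states $f_{\ell_0,1},\dots, f_{\ell_{\lambda-1},1}$ together with all edges incident to them, and nothing else. Then $S'=S\setminus\{f_{\ell_0,1},\dots,f_{\ell_{\lambda-1},1}\}$, so $\vert\mS\vert=\lambda=\kappa$, and $B$ is a well-defined state removal of $A$: condition~(1) of the definition is immediate since nothing else changes, and condition~(2) holds because every missing edge has a removed target. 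Moreover $B$ is reachable: each $f_{i,0}$ is still reached via $\iota\edge{w_i}f_{i,0}$, and all $T_{i,j}$, all $D_{i,j}$ and the surviving parts of the $F_i$ are untouched. I would also note, as in the companion lemmas, that $E'=E$, since every variable event $X_i$ still occurs inside some gadget $T_{i',j}$ with $X_i\in M_{i'}$.

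Next I would compare $B$ with the \LTS\ $B'$ produced in the proof of Lemma~\ref{lem:edge_removal_hs_implies_ssp}, which arises from $A$ by deleting the edges $f_{\ell_j,0}\edge{X_{\ell_j}}f_{\ell_j,1}$ while keeping every state. The states of $B$ are precisely those of $B'$ minus the $f_{\ell_j,1}$, and the edges of $B$ are precisely those of $B'$ minus the $a$-labeled edges $f_{\ell_j,0}\edge{a}f_{\ell_j,1}$; no state, edge, or event is added. Hence $B$ is a reachable state removal of $B'$, and every region $(sup,con,pro)$ of $B'$ restricts to a region of $B$ (keep $con$ and $pro$, restrict $sup$ to $S'$): the region axiom survives because it held for every edge of $B'$ and $B$ has no new edges. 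Since every SSA of $B$ is also an SSA of $B'$, and $B'$ has the SSP by Lemma~\ref{lem:edge_removal_hs_implies_ssp}, the restricted witness of $B'$ witnesses the SSP of $B$, which completes the argument.

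There is no genuine obstacle here -- the reduction was designed so that the edge-removal analysis transfers. The only point requiring a little care is that deleting $f_{\ell_j,1}$ also kills the $a$-labeled edge of $F_{\ell_j}$, not just the $X_{\ell_j}$-labeled one; this is exactly why I route the argument through $B'$ rather than quoting the single region $R$ of the earlier proof directly. That region was built with $sup(f_{i,0})=0$ and $sup(f_{i,1})=1$ so that both edges of every surviving $F_i$ remain consistent with it, whereas for a deleted $f_{\ell_j,1}$ there is simply no constraint left to honour -- so the restriction of $R$ (together with the cross-gadget regions referenced at the start of the proof of Lemma~\ref{lem:edge_removal_hs_implies_ssp}) is still a region of $B$, and one could equally well phrase the proof that way.
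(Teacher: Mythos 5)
Your proposal is correct and follows essentially the same route as the paper: remove exactly the states $f_{\ell_j,1}$ given by the hitting set and observe that the SSP witness constructed for Lemma~\ref{lem:edge_removal_hs_implies_ssp} remains valid, since the state removal only shrinks the set of SSAs and every region restricts to the smaller \LTS. Your detour through the edge-removal \LTS\ $B'$ merely makes explicit the restriction argument that the paper states in one line (``the regions are also valid here''), so the two proofs coincide in substance.
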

\begin{proof}
Let $Z=\{X_{\ell_0}, \dots, X_{\ell_{\lambda-1}}\}$, where $\ell_0,\dots, \ell_{\lambda-1}\in \{0,\dots, n-1\}$ be a hitting set for $(\U,M)$, and let $B$ be the \LTS\  that originates from $A$ by the removal of the state $f_{\ell_j,1}$ (and the corresponding edges) for all $j\in \{0,\dots, \lambda-1\}$, and nothing else.
Obviously, $B$ is $A$ with at most $\kappa=\lambda$ state removals.
Moreover, for every gadget of $A$, its initial states is present in $B$.
Hence, the regions defined for the proof of Lemma~\ref{lem:edge_removal_hs_implies_ssp} are also valid here.
In particular, these regions witness the SSP of $B$.
The claim follows.
\end{proof}

\section{Parameterized complexity}\label{sec:para_complex}%

In the previous sections, we have shown that all the decision problems studied in this paper are not efficiently solvable (unless P=NP), i.e., there is no solution algorithm for these problems whose running time depends polynomially on the length of the input.
However, the input length alone is a fairly crude measure of the complexity of a problem and may make it appear more difficult than it actually~is.

In parameterized complexity, we examine the complexity of a problem not only in terms of the input length $n$, but also consider an additional (meaningful) parameter $\kappa$.
The question now is whether the problem is solvable by an algorithm whose ``non-polynomial behavior'' is functionally bounded by the parameter $\kappa$.
Such investigations are of great importance from both practical and theoretical points of view:
In the case of a positive decision, we usually obtain algorithms that can be considered efficient if $\kappa$ is small compared to the input length.
In the case of a negative decision, we know that we have to look for other solution strategies.
This prevents us from spending hours searching for algorithms that provably cannot exist.
In any case, from a theoretical standpoint, we get a better understanding of complexity at a more fine-grained level.

In the following, we introduce some necessary (basic) notions:

\begin{definition}\label{def:para_problem}
Let $\Sigma$ be an alphabet.
A \emph{parameterized problem} is a language $L\subseteq \Sigma^*\times \N$.
For an instance $(\omega, \kappa)\in \Sigma^*\times \N$, we call $\kappa$ the parameter.
\end{definition}

For all our decision problems, we consider $\kappa$ as the parameter, that is, $(A,\kappa)$ is an input of our original problems if and only if $((A,\kappa), \kappa)$ is an input of the parameterized versions.

\begin{definition}\label{def:fpt}
A parameterized problem $L\subseteq \Sigma^*\times \N$ is called \emph{fixed-parameter-tractable} (fpt for short),
if there is a constant $c$,
a polynomially computable function $f:\N\rightarrow \N$,
and an algorithm that correctly decides, for every $(\omega, \kappa)\in \Sigma^*\times \N$, in time $f(\kappa)\vert \omega\vert^c$, whether $(\omega,\kappa)\in L$.
The class of all fpt problems is called \emph{FPT}.
\end{definition}

Usually, before one tries to investigates whether a problem belongs to FPT, one proves that if $\kappa$ is fixed, then the problem is solvable in polynomial time, that is, it belongs the complexity class \emph{XP} (for \emph{slice-wise polynomial}).
It is easy to see, that if $\kappa$ is fixed, then all the synthesis up to removal problems we have considered here are solvable in polynomial time;
indeed, we only have to check for at most $\mathcal{O}((\vert S\vert+\vert E\vert)^\kappa)$ different modifications of $A$ whether they are implementable, and each check can be done in polynomial time in the size of $A$~\cite{tapsoft/BadouelBD95,txtcs/BadouelBD15}.

\medskip
In classical complexity, we prove, usually by means of a polynomial time reduction, that a problem is NP-hard to show that it (most likely) cannot be solved efficiently.
Similarly, we proceed in parameterized complexity theory to argue that a parameterized problem is not fpt.
For every $i\in \mathbb{N}^+$, $\w[i]$ defines a class of parameterized problems and the following inclusions hold:
\[
\text{FPT}\subseteq \w[1]\subseteq \w[2]\subseteq \w[3]\dots \subseteq \text{XP}
\]

The precise definition of the class $\w[i]$, $i\in \mathbb{N}^+$, is beyond the scope of this paper.
We therefore refer the reader to~\cite{sp/CyganFKLMPPS15}.
It can be proved that $\text{FPT}\not=\text{XP}$.
Moreover, according to the generally accepted working hypothesis, even $\text{FPT}\not=\w[1]$ and $\w[i]\not=\w[i+1]$ holds for all $i\in \mathbb{N}^+$.
Consequently, for all $i\geq 1$, we assume that any $\w[i]$-hard problem is not fpt.
Thereby, we call a parameterized problem $L_2$ $\w[i]$-hard if every parameterized problem $L_1$ from $\w[i]$ can be reduced to $L_2$ by means of a \emph{parameterized reduction}:

\begin{definition}\label{def:para_reduction}
Let $L_1,L_2\subseteq \Sigma^*\times\mathbb{N}$ be two parameterized problems.
We say $L_1$ can be \emph{reduced} to $L_2$ by a \emph{parameterized reduction}, if there are a constant $c$,
polynomially computable functions   $g,h:\mathbb{N}\rightarrow \mathbb{N}$ and a function $f:\Sigma^*\times\mathbb{N}\rightarrow \Sigma^*\times\mathbb{N}$, such that, for all $(\omega,\kappa)\in  \Sigma^*\times\mathbb{N}$, the following conditions are true:
\begin{enumerate}
\itemsep=0.95pt
\item $(\omega,\kappa)\in L_1$ if and only if $f(\omega,\kappa)=(\omega',\kappa')\in L_2$;
\item
$f$ is computable in time  $g(\kappa)\cdot \vert \omega\vert^c$;
\item
$\kappa'\leq h(\kappa)$.
\end{enumerate}
\end{definition}

\noindent
Moreover, if the problem also belongs to $W[i]$, then we say it is $\w[i]$-complete.
To show that a parameterized problem $L_2$ is not fpt, a classical technique is to show for a known $W[i]$-hard problem $L_1$, where $i\in \mathbb{N}^+$, that it can be reduced to $L_2$ by a parameterized reduction.
Because $L_1$ is $W[i]$-hard and hence (most likely) not fpt, it holds according to the following lemma that $L_2$ is also not fpt:
\begin{lemma}[\cite{sp/CyganFKLMPPS15}]\label{lem:para_reduction}
If $L_1$ and $L_2$ are two parameterized problems such that $L_1$ is reducible to $L_2$ by a parameterized reduction and $L_2\in \text{FPT}$, then $L_1\in \text{FPT}$.
\end{lemma}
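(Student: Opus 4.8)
The plan is to unfold both definitions and compose the two algorithms, tracking running times carefully. Assume $L_1$ reduces to $L_2$ by a parameterized reduction: by Definition~\ref{def:para_reduction} we obtain a constant $c$, polynomially computable functions $g,h:\N\rightarrow\N$, and a function $f:\Sigma^*\times\N\rightarrow\Sigma^*\times\N$ such that $(\omega,\kappa)\in L_1$ iff $f(\omega,\kappa)=(\omega',\kappa')\in L_2$, with $f$ computable in time $g(\kappa)\cdot|\omega|^c$ and $\kappa'\leq h(\kappa)$. From $L_2\in\text{FPT}$ and Definition~\ref{def:fpt} we obtain a constant $d$, a polynomially computable function $q:\N\rightarrow\N$, and an algorithm $\mathcal{A}$ deciding $(\omega',\kappa')\in L_2$ in time $q(\kappa')\cdot|\omega'|^d$.

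First I would describe the obvious composed algorithm $\mathcal{B}$ for $L_1$: on input $(\omega,\kappa)$, compute $f(\omega,\kappa)=(\omega',\kappa')$, then run $\mathcal{A}$ on $(\omega',\kappa')$, and accept iff $\mathcal{A}$ accepts. Correctness is immediate from condition~1 of the parameterized reduction. The second step is to bound the running time of $\mathcal{B}$. Computing $f$ costs $g(\kappa)\cdot|\omega|^c$ steps; since an algorithm that runs for $t$ steps can write at most $t$ symbols, we get $|\omega'|\leq g(\kappa)\cdot|\omega|^c$ (we may assume $g(\kappa)\geq 1$ and $|\omega|\geq 1$). Hence running $\mathcal{A}$ costs at most $q(\kappa')\cdot|\omega'|^d\leq q(h(\kappa))\cdot\bigl(g(\kappa)\cdot|\omega|^c\bigr)^d$, using $\kappa'\leq h(\kappa)$ (and, if needed, replacing $q$ by a nondecreasing upper bound). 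Adding the two contributions and setting $F(\kappa):=g(\kappa)+q(h(\kappa))\cdot g(\kappa)^d$ and $e:=c\cdot d$ (assuming $d\geq 1$), the total running time of $\mathcal{B}$ is bounded by $F(\kappa)\cdot|\omega|^e$.

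Finally I would verify that $F$ is polynomially computable: it is built from the polynomially computable functions $g$, $h$, $q$ using composition, a constant-exponent power, multiplication, and addition, all of which preserve polynomial computability. Together with the constant $e$ and the algorithm $\mathcal{B}$, this witnesses $L_1\in\text{FPT}$, which is the claim.

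The only real subtlety — and the step I would present most carefully — is the bound $|\omega'|\leq g(\kappa)\cdot|\omega|^c$: it is precisely the observation that the size of the reduced instance is controlled by the running time of the reduction, so that substituting $|\omega'|$ into the FPT time bound for $L_2$ keeps the dependence on $|\omega|$ polynomial (with a possibly larger but still constant exponent $c\cdot d$) while the $\kappa$-dependence stays polynomially computable. Everything else is routine composition bookkeeping.
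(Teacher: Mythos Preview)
Your argument is correct and is the standard proof of this well-known transitivity property of parameterized reductions. Note, however, that the paper does not actually supply a proof of this lemma: it is stated with a citation to \cite{sp/CyganFKLMPPS15} and used as a black box. So there is nothing in the paper to compare your proof against; you have simply filled in the (routine) details that the authors delegated to the reference.

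One minor remark on presentation: when you invoke $q(\kappa')\leq q(h(\kappa))$ you correctly flag that this requires $q$ to be nondecreasing and note that one may replace $q$ by a nondecreasing upper bound. It is worth making explicit that $\tilde q(n):=\max_{m\leq n} q(m)$ does the job and remains polynomially computable whenever $q$ is, so that the replacement is harmless with respect to Definition~\ref{def:fpt}. Otherwise the bookkeeping (bounding $|\omega'|$ by the running time of the reduction, and hence keeping the exponent on $|\omega|$ constant) is exactly right.
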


By the following theorem, the problem \textsc{Hitting Set} is most likely not fpt:

\begin{theorem}[\cite{sp/CyganFKLMPPS15}]\label{the:hs_para}
{Hitting Set} parameterized by $\lambda$ is $W[2]$-complete.
\end{theorem}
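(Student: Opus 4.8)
The plan is to prove the two directions of $W[2]$-completeness separately, both by appealing to the standard machinery for the $W$-hierarchy; since \textsc{Hitting Set} is essentially a canonical problem at the second level, neither direction is deep, and in the paper the statement is simply imported from \cite{sp/CyganFKLMPPS15}. First I would fix notation: write $\U=\{X_0,\dots,X_{n-1}\}$ and $M=\{M_0,\dots,M_{m-1}\}$, recalling that by our normalization assumptions $\lambda\le|\U|$ and $1<|M_i|<|\U|$, so that in particular any hitting set of size $<\lambda$ extends to one of size exactly $\lambda$.

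For membership in $W[2]$, I would introduce one Boolean variable $x_j$ per element $X_j$ and form the formula $\Phi=\bigwedge_{i=0}^{m-1}\bigvee_{X_j\in M_i}x_j$. A weight-$\lambda$ satisfying assignment of $\Phi$ (one setting exactly $\lambda$ variables to true) corresponds precisely to a hitting set of size $\lambda$ for $(\U,M)$, and conversely. Now $\Phi$ is a monotone CNF formula; viewed as a Boolean circuit it consists of a single layer of unbounded fan-in OR-gates feeding one unbounded fan-in AND-gate, hence has weft $2$ and constant depth. By the Normalization Theorem, weighted satisfiability of such circuits, parameterized by the target weight, is exactly $W[2]$ (see~\cite{sp/CyganFKLMPPS15}). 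The map $(\U,M,\lambda)\mapsto(\Phi,\lambda)$ is computable in polynomial time and leaves the parameter untouched, so it is a parameterized reduction witnessing \textsc{Hitting Set}$\,\in W[2]$.

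For $W[2]$-hardness I would give a parameterized reduction from \textsc{Dominating Set} parameterized by solution size, which is $W[2]$-hard by the foundational results of Downey and Fellows (also reproduced in~\cite{sp/CyganFKLMPPS15}). Given a graph $G=(V,E)$ and a parameter $k$, set $\U:=V$, let $M:=\{\,N_G[v]\mid v\in V\,\}$ be the family of closed neighborhoods, and put $\lambda:=k$. A set $Z\subseteq V$ meets $N_G[v]$ for every $v\in V$ exactly when every vertex of $G$ lies in $Z$ or has a neighbor in $Z$, i.e.\ when $Z$ is a dominating set; hence $(G,k)$ is a yes-instance of \textsc{Dominating Set} if and only if $(\U,M,k)$ is a yes-instance of \textsc{Hitting Set}. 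The construction is polynomial and preserves the parameter exactly ($\lambda=k$), so it is a parameterized reduction, and \textsc{Hitting Set} parameterized by $\lambda$ is $W[2]$-hard. Combining the two parts yields $W[2]$-completeness.

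The only non-routine ingredient is the membership direction, since it rests on the technical normal-form characterization of $W[2]$ via weighted weft-$2$ circuit satisfiability; I would invoke it from~\cite{sp/CyganFKLMPPS15} rather than reproving it, which is exactly why the result appears here as a cited theorem with no separate proof. Everything else — the translation to weighted monotone CNF satisfiability and the closed-neighborhood reduction from \textsc{Dominating Set} — is elementary and parameter-preserving.
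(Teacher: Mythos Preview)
Your proposal is correct, and you yourself note the key point: the paper does not prove Theorem~\ref{the:hs_para} at all but simply imports it from~\cite{sp/CyganFKLMPPS15}, so there is no ``paper's proof'' to compare against. The sketch you give is the standard textbook argument (weighted monotone CNF satisfiability for membership, the closed-neighborhood reduction from \textsc{Dominating Set} for hardness), and both reductions are parameter-preserving as required. One small remark: in the membership direction your circuit is a big AND of big ORs, which is the canonical $W[2]$-complete shape; calling it ``weft~$2$'' is fine, but it is perhaps cleaner to say directly that weighted (monotone) CNF satisfiability is the defining $W[2]$-complete problem in~\cite{sp/CyganFKLMPPS15}, which makes the appeal to the Normalization Theorem unnecessary.
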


Obviously, the reductions from the previous sections are parameterized reductions, since $\kappa=\lambda$.
Hence, by Lemma~\ref{lem:para_reduction}, and Theorem~\ref{the:hs_para}, we obtain the following result:

\begin{theorem}\label{the:para}
For every of the  edge-,  event- and state removal modifications,
and for every of the implementations embedding, language-simulation and realization,
the parameterized problem of deciding, for a given LTS $A$, and a natural number $\kappa$, whether $A$ can be made implementable by removing at most $\kappa$ components belongs to the complexity class $W[P]$, and is $W[2]$-hard, where $\kappa$ is considered as the parameter.
\end{theorem}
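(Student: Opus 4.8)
The statement has two essentially independent halves, membership in $W[P]$ and $W[2]$-hardness, and both mainly harvest material already in place; I would treat them in turn.

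For $W[P]$-membership, the plan is to invoke the standard machine characterization of $W[P]$ (see, e.g., \cite{sp/CyganFKLMPPS15}): a parameterized problem lies in $W[P]$ as soon as it is decided by a nondeterministic algorithm that runs in fpt time and performs at most $f(\kappa)\cdot\log\vert\omega\vert$ nondeterministic steps. For each of the nine problems such an algorithm is immediate. Given an instance $(A,\kappa)$ with $n=\vert S\vert+\vert E\vert$, I would nondeterministically guess the set of components to be removed -- a subset of size at most $\kappa$ of the edge set (respectively the event set, respectively the state set) of $A$; since each such set has at most $n$ elements, this guess is specified by at most $\kappa$ indices of $\lceil\log n\rceil$ bits, i.e.\ by $O(\kappa\log n)$ nondeterministic bits. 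Then, deterministically and in polynomial time, build the corresponding modification $B$ by deleting the guessed components together with all edges they affect and restricting to the reachable part, so that $B$ is a legitimate reachable removal of $A$ in the sense of the definitions of Section~\ref{sec:ts}. Finally, by Lemma~\ref{lem:badouel}(4), decide in time polynomial in $\vert B\vert\leq\vert A\vert$ whether $B$ has the separation property attached to the chosen implementation -- the SSP for embedding, the ESSP for language-simulation, both for realization. The whole computation takes fpt time and uses only $O(\kappa\log n)$ nondeterministic steps, which places the problem in $W[P]$.

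For the $W[2]$-hardness I would simply observe that the reductions already given in Sections~\ref{sec:edge_removal},~\ref{sec:event_removal} and~\ref{sec:state_removal} are parameterized reductions from \textsc{Hitting Set}. Each of them maps an input $(\U,M,\lambda)$ to an instance $(A,\kappa)$ with $\kappa=\lambda$ and runs in polynomial time; correctness (condition~(1) of Definition~\ref{def:para_reduction}) is precisely the pair of lemmas proved in the respective subsection, condition~(2) holds because the reduction is polynomial (take $g\equiv 1$), and condition~(3) holds with $h$ the identity since $\kappa=\lambda$. Since \textsc{Hitting Set} parameterized by $\lambda$ is $W[2]$-hard by Theorem~\ref{the:hs_para}, and $W[2]$-hardness is preserved under parameterized reductions, each of the nine removal problems is $W[2]$-hard; equivalently, were one of them in FPT, Lemma~\ref{lem:para_reduction} would force \textsc{Hitting Set} into FPT, contradicting Theorem~\ref{the:hs_para}.

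The point that requires care -- rather than a genuine obstacle -- is the accounting of nondeterminism in the $W[P]$ part: one must notice that ``remove at most $\kappa$ components'' costs only $O(\kappa\log n)$ nondeterministic bits (guessing $\kappa$ pointers into a list of length $n$) and not $\Theta(n)$ bits (guessing the whole surviving $B$), since the naive approach would only give $\mathrm{NP}$-membership, not $W[P]$-membership. One should also keep in mind that deleting the guessed components can destroy reachability, so the verification algorithm must first pass to the reachable part of the result; this is harmless, being deterministic and polynomial, and it is exactly the convention adopted throughout the paper. Everything else is routine bookkeeping: in particular, all the earlier reductions were written with $\kappa=\lambda$, so no parameter blow-up occurs and the three conditions of Definition~\ref{def:para_reduction} are met verbatim.
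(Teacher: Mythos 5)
Your treatment of the $W[2]$-hardness half coincides with the paper's own (very short) proof: the paper simply observes that the reductions of Sections~\ref{sec:edge_removal}--\ref{sec:state_removal} are parameterized reductions because $\kappa=\lambda$, and then invokes Lemma~\ref{lem:para_reduction} and Theorem~\ref{the:hs_para}; your verification of the three conditions of Definition~\ref{def:para_reduction} is exactly this argument spelled out. For the $W[P]$-membership half the paper gives no explicit argument at all (it only argues XP-membership before the theorem), so your machine-characterization proof -- guess at most $\kappa$ component indices with $O(\kappa\log n)$ nondeterministic bits, then verify deterministically in polynomial time via Lemma~\ref{lem:badouel}(4) -- is a welcome addition and is the standard way to establish it.

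One detail in that added argument needs adjusting. After deleting the guessed components you propose to ``restrict to the reachable part'' and then test the separation property. But the removed-component sets $\K$, $\E$, $\mS$ count \emph{everything} missing from $B$, so discarding an unreachable portion on top of the guessed $\kappa$ components can push the number of removed components above $\kappa$; an accepting run of your algorithm would then not certify a removal of size at most $\kappa$, and soundness could fail (removing one edge near $\iota$ can ``freely'' delete a large downstream part). The fix is immediate and does not affect the nondeterminism budget: construct $B$ by removing exactly the guessed components and \emph{reject} if the result is not reachable; completeness is preserved because, by the paper's convention, only reachable removals are admitted as witnesses in the first place.
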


\section{Inapproximability}\label{sec:inapproximability}%

For each modification technique \emph{edge removal}, \emph{event removal} and \emph{state removal}, and for each implementation \emph{embedding}, \emph{language-simulation} and \emph{realization}, our goal is to modify $A$ as little as possible according to the chosen modification kind, so that the result is an implementable \LTS.
Since a lower bound of this optimization problem is defined by the complexity of its decision version, there is little hope that an implementable modification can be found efficiently according to the results of the previous sections.
Moreover, by Theorem~\ref{the:para}, they are most likely not fpt for the natural parameter~$\kappa$.

On the other hand, instead of an exact solution, an approximate solution could be satisfactory,
at least if a fixed approximation ratio $c$ can be guaranteed.
This leads to the search for a so-called $c$-approximation algorithm, i.e., an algorithm that, for a given \LTS\  $A$, yields in polynomial time an implementable  modification $B$ of $A$ with at most $c\cdot\kappa$ removed components  when the optimal solution is $\kappa$, where $c\geq 1$ is a fixed constant.
In this section we will argue that no such algorithm exists for any of our problems.

For a start, we provide some necessary basic definitions.

\begin{definition}[Optimization Problem~\cite{siamcomp/CrescenziKST99}]\label{def:npo_problem}
An \emph{NP optimization problem} $Q$ is a 4-tuple $(I,sol,m,type)$ such that the following hold:
\begin{enumerate}
\itemsep=0.95pt
\item
$I$ is the set of \emph{instances} of $Q$, and it is recognizable in polynomial time.
\item
Given an instance $x\in I$, $sol(x)$ denotes the set of \emph{feasible solutions} of $x$.
The solutions of $x$ are short, that is, there exists a polynomial $p$ such that, for every $y\in sol(x)$, it holds $\vert y\vert \leq p(\vert x\vert)$ and, 
for every $x$ and every $y$ with $\vert y\vert \leq p(\vert x\vert)$, it is decidable in polynomial time whether $y\in sol(x)$.
\item
For $x\in I$, and $y\in sol(x)$, $m(x,y)$ denotes the positive integer \emph{measure} of $y$.
The function $m$ is computable in polynomial time.
\item
$type\in \{min,max\}$.
\end{enumerate}
\end{definition}

The \emph{class} NPO is the set of all NP optimization problems, and \minnpo\ is the set of all minimization NPO problems.
In this paper, we deal only with problems of \minnpo.
The goal of a \minnpo\ problem $Q=(I,sol, m, min)$ with respect to an instance $x$ is to find an \emph{minimum solution}, that is, an $y^*\in sol(x)$, such that $m(x,y^*)={min}\{m(x,y)\mid y\in sol(x)\}$.
In the following, $\opt$ will denote the function that maps an instance $x$ of $Q$ to the measure of a minimum solution.

\begin{definition}[Ratio]\label{def:ratio}
Let $Q=(I,sol, m, min)$ be an Min-NPO problem.
For every $x\in I$ and $y\in sol(x)$, by  \[R(x,y)=\frac{m(x,y)}{\opt(x)}\] we define the \emph{performance ratio} $R$ of $y$ with respect to $x$.
\end{definition}

\begin{definition}[c-Approximation Algorithm]
Given an \minnpo\ problem $Q$, and an arbitrary constant $c\geq 1$, we say that an algorithm $T$ is a $c$-approximation algorithm for $Q$, if it runs in polynomial time and, for every instance $x$ of $Q$ with $sol(x)\not=\emptyset$, outputs a solution $T(x)\in sol(x)$ such that $R(x,T(x))\leq c$.
\end{definition}

The following notion of \emph{L-reducibility} can be used to show that, for any constant $c\geq 1$, a \minnpo\ problem does not have a $c$-approximation algorithm:

\begin{definition}[L-reduction~\cite{siamcomp/CrescenziKST99} ]\label{def:l_reduction}
Let $Q_1=(I_{Q_1}, sol_{Q_1}, m_{Q_1}, min)$ and $Q_2=(I_{Q_2}, sol_{Q_2}, m_{Q_2}, min)$ be two \minnpo\ problems.
$Q_1$ is said to be \emph{L-reducible} to $Q_2$, in symbols $Q_1\leq Q_2$, if there exist two functions $g,f$ and two positive constants $\alpha, \beta$ such that:
\begin{enumerate}
\itemsep=0.9pt
\item
For every $x\in I_{Q_1}$, $f(x)\in I_{Q_2}$ is computable in polynomial time.
\item
For every $x\in I_{Q_1}$ and every $y\in sol_{Q_2}(f(x))$, $\;g(x,y)\in sol_{Q_1}(x)$ is computable in polynomial time.
\item
For every $x\in I_{Q_1}$, $\opt_{Q_2}(f(x))\leq \alpha\cdot \opt_{Q_1}(x)$
\item
For every $x\in I_{Q_1}$ and every $y\in sol_{Q_2}(f(x))$,
$$
\vert \opt_{Q_1}(x)-m_{Q_1}(x, g(x,y))\vert \leq \beta\cdot \vert \opt_{Q_2}(f(x))-m_{Q_2}(f(x),y)\vert
$$
\end{enumerate}
The 4-tuble $(f,g,\alpha,\beta)$ is called an \emph{L-reduction} from $Q_1$ to $Q_2$.
\end{definition}

It is known from~\cite[p.~1765]{siamcomp/CrescenziKST99} that if $Q_1\leq_L Q_2$ and there exists a $c$-approximation algorithm for $Q_2$, where $c\geq 1$ is a constant, then there exists a $c'$-algorithm for $Q_1$, where $c'=1+\alpha\cdot\beta\cdot(c-1)$.

By Corollary~1.5 of~\cite{corr/abs-1305-1979},  for every $c\geq 1$, there does not exist a $c$-approximation algorithm of the problem \textsc{Minimum Set Cover}, unless $\text{P}=\text{NP}$:

\noindent
\fbox{\begin{minipage}[t][1.8\height][c]{0.97\textwidth}
\begin{optproblem}
  \oproblemtitle{\textsc{Minimum Set Cover}}
  \oprobleminput{A finite set $\U$, and a system $M$ of subsets of $\U$.}
  \oproblemsolution{$Y\subseteq M$ such that  $\bigcup_{y\in Y}y=\U$.}
  \oproblemmeasure{$\vert Y\vert$}
\end{optproblem}
\end{minipage}}\smallskip

\noindent
Hence, by the well-known equivalence of \textsc{Minimum Set Cover} and \textsc{Minimum Hitting Set}, the same is true for the latter problem:

\noindent
\fbox{\begin{minipage}[t][1.8\height][c]{0.97\textwidth}
\begin{optproblem}
  \oproblemtitle{\textsc{Minimum Hitting Set}}
  \oprobleminput{A finite set $\U$, and a system $M$ of subsets of $\U$.}
  \oproblemsolution{A hitting set $Z\subset \U$ for $(\U,M)$.}
  \oproblemmeasure{$\vert Z\vert$}
\end{optproblem}
\end{minipage}}\smallskip

In the following, we argue that, for every component \textsc{X} of \textsc{Edge, Event} or \textsc{State} and for every implementation \textsc{Y} of \textsc{Embedding, Language Simulation} or \textsc{ Realization}, there is an L-reduction from \textsc{Minimum Hitting Set} to the following \minnpo\ problem \textsc{X-Removal for Y}, and thus prove that the latter does not allow a $c$-approximation algorithm for any constant $c\geq 1$:

\noindent
\fbox{\begin{minipage}[t][1.8\height][c]{0.97\textwidth}
\begin{optproblem}
  \oproblemtitle{\textsc{Minimum  X-Removal for Y}}
  \oprobleminput{An \LTS\  $A=(S,E,\delta,\iota)$.}
  \oproblemsolution{A \textsc{Y}-implementable \textsc{X-Removal} $B$ of $A$.}
  \oproblemmeasure{Number of components removed (from $A$ to obtain $B$ according  to \textsc{X}).}
\end{optproblem}
\end{minipage}}\smallskip

Notice that the measure is actually a function that depends on $A$ and $B$ since, for example, $m(A,B)=\vert E\setminus E'\vert $ if $B$ is an event removal of $A$.

\smallskip
The proof of the following lemma is based on the observation that the polynomial time reductions of the previous sections can be extended to suitable L-reductions:

\begin{lemma}\label{lem:l_reduction}
For every component \textsc{X} of $\{\textsc{Edge, Event, State}\}$ and for every implementation \textsc{Y} of $\{\textsc{Embedding, Language Simulation, Realization}\}$, there is an L-reduction from \textsc{Minimum Hitting Set} to \textsc{X-Removal for Y}.
\end{lemma}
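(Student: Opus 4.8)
The plan is, for each of the nine pairs $(\textsc{X},\textsc{Y})$ with $\textsc{X}\in\{\textsc{Edge},\textsc{Event},\textsc{State}\}$ and $\textsc{Y}\in\{\textsc{Embedding},\textsc{Language Simulation},\textsc{Realization}\}$, to reuse the polynomial reduction already built for the corresponding decision problem in Section~\ref{sec:edge_removal}, Section~\ref{sec:event_removal} or Section~\ref{sec:state_removal}, and to verify that it lifts to an L-reduction with constants $\alpha=\beta=1$. First I would observe that \textsc{X-Removal for Y} is a genuine $\minnpo$ problem: a feasible solution is a removal $B$ of the appropriate kind that is $\textsc{Y}$-implementable, its size is bounded by that of $A$, so the shortness requirement and the test ``$B\in sol(A)$'' are polynomial by Lemma~\ref{lem:badouel}, the measure (the number of removed components) is polynomially computable, and $type=min$.

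For the reduction function $f$ I would take $(\U,M)\mapsto A$, where $A$ is exactly the \LTS\ constructed in the relevant subsection: the \LTS\ of Section~\ref{sec:edge_removal_realization} for $\textsc{X}=\textsc{Edge}$ with $\textsc{Y}\in\{\textsc{Language Simulation},\textsc{Realization}\}$ and for $\textsc{X}=\textsc{State}$ with $\textsc{Y}\in\{\textsc{Language Simulation},\textsc{Realization}\}$; the \LTS\ of Section~\ref{sec:edge_removal_embedding} for $\textsc{X}=\textsc{Edge}$ or $\textsc{X}=\textsc{State}$ with $\textsc{Y}=\textsc{Embedding}$; and the \LTS\ of Section~\ref{sec:event_removal} for $\textsc{X}=\textsc{Event}$ (which already covers all three implementations via Lemma~\ref{lem:event_removal_essp_or_ssp_implies_hs}). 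The one adjustment needed is that those constructions take the hitting-set bound $\lambda$ as a parameter --- the number of gadget copies is $\lambda+1$ --- while a \textsc{Minimum Hitting Set} instance carries no bound, so I would hard-wire $\kappa:=\vert\U\vert$ in the construction. This is legitimate since $\opt(\U,M)\le\vert\U\vert$ (because $\vert M_i\vert>1$, so $\U$ itself is a hitting set) and every hitting set, and every ``reasonable'' removal appearing in the correctness proofs, has at most $\vert\U\vert$ elements; thus $f$ is polynomial-time computable, and $sol(f(\U,M))\neq\emptyset$ because the ``hitting set $\Rightarrow$ removal'' construction applied to $Z=\U$ produces a feasible solution.

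Next I would define the solution map $g$ and pin down the optima. Given a feasible $B$ deleting $k$ components, set $g(A,B):=\U$ when $k>\vert\U\vert$, and otherwise let $g(A,B)$ be the hitting set produced by the matching ``$\dots$ implies hitting set'' lemma (Lemma~\ref{lem:edge_removal_essp_implies_hs}, \ref{lem:edge_removal_ssp_implies_hs}, \ref{lem:event_removal_essp_or_ssp_implies_hs}, \ref{lem:state_removal_essp_implies_hs} or \ref{lem:state_removal_language_simulation_essp_implies_vc}): that argument is polynomial, it first normalises $B$ without ever increasing the number of removed components, and then reads off a hitting set $Z$ from the remaining removed components with $\vert Z\vert\le k$, so $g$ is polynomial and always returns a solution of measure at most $k$. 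In the other direction the ``hitting set $\Rightarrow$ removal'' lemmas (Lemma~\ref{lem:edge_removal_hs_implies_essp_and_ssp}, \ref{lem:edge_removal_hs_implies_ssp}, \ref{lem:event_removal_hs_implies_essp_and_ssp}, \ref{lem:state_removal_hs_implies_essp_and_ssp} or \ref{lem:state_removal_language_simulation_essp_implies_vc2}) turn a hitting set of size $t$ into a $\textsc{Y}$-implementable $\textsc{X}$-removal deleting exactly $t$ components. Putting the two together gives $\opt_{\textsc{X-Removal for Y}}(f(\U,M))\le\opt(\U,M)\le\vert\U\vert$; since an optimal removal then has size at most $\vert\U\vert=\kappa$, feeding it to $g$ gives the reverse inequality, so the two optima coincide --- this is condition~3 of Definition~\ref{def:l_reduction} with $\alpha=1$. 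For condition~4, writing $o$ for this common optimum and $Z=g(f(\U,M),B)$ for a feasible $B$ deleting $k$ components, one has $o\le\vert Z\vert\le k$, hence
\[
\bigl\vert\, o-\vert Z\vert \,\bigr\vert \;=\; \vert Z\vert-o \;\le\; k-o \;=\; \bigl\vert\, o-k \,\bigr\vert ,
\]
so $\beta=1$ works and $(f,g,1,1)$ is the desired L-reduction.

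The hard part is not any new computation: all the separation-property bookkeeping has already been carried out in Sections~\ref{sec:edge_removal}--\ref{sec:state_removal}. The point demanding care is administrative --- removing $\lambda$ from the input by fixing $\kappa=\vert\U\vert$, re-checking that the gadget-copy counting arguments in the decision-version proofs still go through verbatim under this choice, and making $g$ total (and size-controlled enough for condition~4) by handling the feasible removals that delete more than $\vert\U\vert$ components while preserving the estimate $\vert Z\vert\le k$ on which that condition rests.
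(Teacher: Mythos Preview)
Your proposal is correct and follows essentially the same strategy as the paper: reuse the decision-problem reductions as $f$, extract a hitting set from a feasible removal as $g$, and verify conditions~3 and~4 of Definition~\ref{def:l_reduction} with $\alpha=\beta=1$ via the chain $\opt\le\vert Z\vert\le k$. You are, however, more careful than the paper on a point it glosses over: the decision-version constructions depend on the bound $\lambda$ (it governs the number $\kappa+1$ of gadget copies), which is absent from a \textsc{Minimum Hitting Set} instance. Your fix of hard-wiring $\kappa:=\vert\U\vert$, and of making $g$ total by returning $\U$ whenever a feasible removal deletes more than $\vert\U\vert$ components, keeps the estimate $\vert Z\vert\le k$ intact and the reduction well-defined; the paper's proof simply writes that ``$f$ maps $(\U,M)$ to the \LTS\ $A$ in accordance with the reduction of Section~\ref{sec:edge_removal_realization}'' without saying which value of $\kappa$ is used, so your version is the cleaner argument.
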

\begin{proof}
For a start, we argue for the problem \textsc{Minimum Edge Removal for Realization}; the argument will be similar for the other cases.

We obtain $(f,g,\alpha,\beta)$ as follows:
$f$ is the function that maps an input $(\U,M)$ to the \LTS\ $A$ in accordance to the reduction of Section~\ref{sec:edge_removal_realization};
for every solution $B$ of $A$, $g(A,B)=\{X\in \U\mid \exists s\edge{X}s' \in \K\}$; finally, $\alpha=\beta=1$.
Obviously, $f$ and $g$ can be computed in polynomial time and, by the proof arguments of Lemma~\ref{lem:edge_removal_essp_implies_hs}, $g(A,B)$ defines actually a hitting set of $(\U,M)$.
We argue that the optimum values of both problems are equal:

Let $Z$ be an optimum hitting set for $(\U,M)$, that is, $\vert Z\vert =\opt((\U,M))$, and let $B=(S,E',\delta',\iota)$ be the edge removal of $A$ obtained in accordance to the proof arguments of Lemma~\ref{lem:edge_removal_hs_implies_essp_and_ssp} with the set of removed edges $\K$.
By the arguments of the same lemma, $B$ is a feasible solution for $A$, which satisfies $\vert \K\vert =\vert Z\vert$.
Assume that there is a realizable solution $B'=(S, E'',\delta'',\iota)$ of $A$ with set of removed edges $\K'$ such that $\vert \K'\vert < \vert \K\vert$, i.e., $B'$ removes strictly less edges than $B$.
By the proof of Lemma~\ref{lem:edge_removal_essp_implies_hs}, the set $Z'=\{X\in \U\mid s\edge{X}s'\in \K'\}$ defines a hitting set for $(\U,M)$.
This implies $\vert Z'\vert\leq \vert \K'\vert < \vert \K\vert =\vert Z\vert$ and thus contradicts the choice of $Z$.
Hence, $\opt((\U,M))=\opt(A)$, which obviously implies $\opt(A)\leq \alpha\cdot \opt((\U,M))$.
In particular, $(f,g,\alpha,\beta)$ satisfies Condition~3 of Definition~\ref{def:l_reduction}.

\medskip
Moreover, if $B'=(S, E'',\delta'',\iota)$ is an arbitrary but fixed implementable solution of $A$ with a set of removed edges $\K'$, then $g(A,B')=Z'=\{X\in \U\mid s\edge{X}s'\in \K'\}$ and we get:
\[
\vert \opt((\U,M))-\vert Z'\vert\vert = \vert \opt(A)-\vert Z'\vert \vert \leq \vert \opt(A)-\vert \K'\vert \vert = \beta\cdot \vert \opt(A)-\vert \K'\vert \vert \]
since $0 \leq \opt((\U,M))=\opt(A)\leq \vert Z'\vert \leq \vert \K'\vert$.
Hence, $(f,g,\alpha,\beta)$ satisfies Condition~4 of Definition~\ref{def:l_reduction}, and thus is a valid L-reduction.\\

The arguments for the other cases are similar:
For every remaining combination of component X and implementation Y, one finds out that there is an L-reduction $(f,g,1,1)$, such that
\begin{itemize}
\itemsep=0.9pt
\item
$f$ is defined in accordance to the corresponding reduction that maps $(\U,M)$ to \LTS\ $A$, and
\item
for any Y-implementable X-removal $B$ of $A$, $g$ maps $A$, and $B$ to the set $Z=g(A,B)\subseteq \U$ of elements that is implied by the set $\R$ of removed components, and
\item
$\opt((\U,M))=\opt(A)$, and $\vert \opt((\U,M))-\vert Z\vert\vert \leq \vert \opt(A)-\vert \R\vert \vert$.
\end{itemize}

\vspace*{-7mm}
\end{proof}

Altogether, we get the main result of this section:

\begin{theorem}\label{the:inapproximability}
For every component \textsc{X} of \textsc{Edge, Event, State}, for every implementation \textsc{Y} of \textsc{Embedding, Language Simulation, Realization} and for every constant $c\geq 1$, the problem \textsc{X-Removal for Y} does not allow a $c$-approximation algorithm, unless $\text{P}=\text{NP}$.
\end{theorem}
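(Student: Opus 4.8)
The plan is to derive the theorem from Lemma~\ref{lem:l_reduction} together with the known inapproximability of \textsc{Minimum Hitting Set} and the standard fact that $L$-reductions transfer constant-factor approximability. First I would fix an arbitrary component \textsc{X} among \textsc{Edge}, \textsc{Event} and \textsc{State}, an arbitrary implementation \textsc{Y} among \textsc{Embedding}, \textsc{Language Simulation} and \textsc{Realization}, and an arbitrary constant $c\geq 1$, and argue by contradiction: assume that \textsc{X-Removal for Y} admits a $c$-approximation algorithm $T$.

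Next I would recall that, by Lemma~\ref{lem:l_reduction}, there is an $L$-reduction $(f,g,\alpha,\beta)$ from \textsc{Minimum Hitting Set} to \textsc{X-Removal for Y} with $\alpha=\beta=1$, and then invoke the result of~\cite[p.~1765]{siamcomp/CrescenziKST99} quoted above: composing such an $L$-reduction with a $c$-approximation algorithm for the target problem yields a $c'$-approximation algorithm for the source problem with $c'=1+\alpha\cdot\beta\cdot(c-1)=c$. To spell this out, the composed algorithm, on input $(\U,M)$, computes $A=f(\U,M)$ in polynomial time, runs $T$ on $A$ to obtain a \textsc{Y}-implementable \textsc{X}-removal $B$ of $A$ with set of removed components $\R$, and returns $Z=g(A,B)$. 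This call to $T$ is legitimate because $sol(A)\neq\emptyset$: since each $M_i$ is nonempty, $\U$ itself is a hitting set for $(\U,M)$, and the forward direction of the relevant lemma of Sections~\ref{sec:edge_removal}--\ref{sec:state_removal} then produces a feasible modification of $A$. By Lemma~\ref{lem:l_reduction}, $Z$ is a hitting set for $(\U,M)$ with $\vert Z\vert\leq\vert\R\vert$, and $\opt((\U,M))=\opt(A)$; combined with the $c$-approximation guarantee $\vert\R\vert=m(A,B)\leq c\cdot\opt(A)$ this gives $\vert Z\vert\leq c\cdot\opt((\U,M))$, i.e.\ $R((\U,M),Z)\leq c$. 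Since $f$, $g$ and $T$ all run in polynomial time, the composed algorithm is a polynomial-time $c$-approximation algorithm for \textsc{Minimum Hitting Set}.

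To finish, I would invoke Corollary~1.5 of~\cite{corr/abs-1305-1979} and the equivalence of \textsc{Minimum Set Cover} with \textsc{Minimum Hitting Set}: for no constant $c\geq 1$ does \textsc{Minimum Hitting Set} admit a $c$-approximation algorithm unless $\text{P}=\text{NP}$. This contradicts the existence of $T$, so \textsc{X-Removal for Y} has no $c$-approximation algorithm unless $\text{P}=\text{NP}$; as \textsc{X}, \textsc{Y} and $c$ were arbitrary, the theorem follows. I do not expect a genuine obstacle here: essentially all of the substance has already been packed into Lemma~\ref{lem:l_reduction}, whose proof exploits that the polynomial-time reductions of Sections~\ref{sec:edge_removal}--\ref{sec:state_removal} were engineered so that $\opt((\U,M))=\opt(A)$ and so that removed components map back to hitting-set elements without increasing the count. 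The only points that need a line of care are the bookkeeping across the nine pairs $(\textsc{X},\textsc{Y})$ and the verification that the reduction instances $A=f(\U,M)$ are always feasible, both of which are routine.
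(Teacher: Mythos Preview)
Your proposal is correct and follows exactly the route the paper takes: the paper's proof of Theorem~\ref{the:inapproximability} amounts to the one word ``Altogether'', i.e., combine Lemma~\ref{lem:l_reduction} with the cited transfer property of $L$-reductions from~\cite{siamcomp/CrescenziKST99} and the inapproximability of \textsc{Minimum Hitting Set} via~\cite{corr/abs-1305-1979}. You have simply spelled out this combination in more detail (including the harmless check that $sol(A)\neq\emptyset$), which is fine.
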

%

\section{Conclusion}\label{sec:conclusion}%

In this paper, we showed that converting an unimplementable \LTS\ into an implementable one by removing
as few of its states, events or edges as possible, is intractable from the point of view of classical complexity theory.
This solves a problem that was left open in~\cite{corr/abs-2002-04841}.
Notice that the reductions for edge- and event removal also work if these modifications are defined in a way that require all original states to be preserved.
However, in general, they could then not always produce an implementable LTS, since there are unimplementable trees.
Moreover, we also show that these problems are also intractable from the point of view of  parameterized complexity, and approximability as well.
However, a complete characterization of the parameterized complexity of the problems is still open, and future work should address finding the exact upper bounds for them.
Future work could also address other techniques of modifications that were suggested in the literature such as, for example, state- or event-refinement, state fusion or edge addition (like in~\cite{dnb/Schlachter18}), or a mixture of these modifications.

\subsubsection*{Acknowledgements.}%

We would like to thank the anonymous reviewers of the original version of this paper, and this extended version for their detailed comments and valuable suggestions.



\begin{thebibliography}{10}
\providecommand{\url}[1]{\texttt{#1}}
\providecommand{\urlprefix}{URL }
\expandafter\ifx\csname urlstyle\endcsname\relax
  \providecommand{\doi}[1]{doi:\discretionary{}{}{}#1}\else
  \providecommand{\doi}{doi:\discretionary{}{}{}\begingroup
  \urlstyle{rm}\Url}\fi
\providecommand{\eprint}[2][]{\url{#2}}

\bibitem{mcc:2017}
Kordon F, Garavel H, Hillah LM, Hulin-Hubard F, Berthomieu B, Ciardo G, Colange
  M, {Dal Zilio} S, Amparore E, Beccuti M, Liebke T, Meijer J, Miner A, Rohr C,
  Srba J, Thierry-Mieg Y, van~de Pol J, Wolf K.
\newblock {Complete Results for the 2017 Edition of the Model Checking
  Contest}.
\newblock {http://mcc.lip6.fr/2017/results.php}, 2017.

\bibitem{daglib/0027363}
van~der Aalst WMP.
\newblock Process Mining - Discovery, Conformance and Enhancement of Business
  Processes.
\newblock Springer, 2011.
\newblock \doi{10.1007/978-3-642-19345-3}.

\bibitem{sac/PedroC16}
de~San~Pedro J, Cortadella J.
\newblock Mining structured {P}etri nets for the visualization of process
  behavior.
\newblock In: Ossowski S (ed.), Proceedings of the 31st Annual {ACM} Symposium
  on Applied Computing, Pisa, Italy, April 4-8, 2016. {ACM}, 2016 pp. 839--846.
\newblock \doi{10.1145/2851613.2851645}.

\bibitem{644602}
{Cortadella} J, {Kishinevsky} M, {Kondratyev} A, {Lavagno} L, {Yakovlev} A.
\newblock A region-based theory for state assignment in speed-independent
  circuits.
\newblock \emph{IEEE Transactions on Computer-Aided Design of Integrated
  Circuits and Systems}, 1997.
\newblock \textbf{16}(8):793--812.
\newblock \doi{10.1109/43.644602}.

\bibitem{10.5555/2587933}
Cortadella J, Kishinevsky M, Kondratyev A, Lavagno L, Yakovlev A.
\newblock Logic Synthesis for Asynchronous Controllers and Interfaces.
\newblock Springer Publishing Company, Incorporated, 2013.
\newblock ISBN:3642627765.

\bibitem{fac/BadouelCD02}
Badouel E, Caillaud B, Darondeau P.
\newblock Distributing Finite Automata Through {P}etri Net Synthesis.
\newblock \emph{Formal Asp. Comput.}, 2002.
\newblock \textbf{13}(6):447--470.
\newblock \doi{10.1007/s001650200022}.

\bibitem{ershov/BestD11}
Best E, Darondeau P.
\newblock {P}etri Net Distributability.
\newblock In: Ershov Memorial Conference, volume 7162 of \emph{Lecture Notes in
  Computer Science}. Springer, 2011 pp. 1--18.
\newblock \doi{10.1007/978-3-642-29709-0\_1}.

\bibitem{tapsoft/BadouelBD95}
Badouel E, Bernardinello L, Darondeau P.
\newblock Polynomial Algorithms for the Synthesis of Bounded Nets.
\newblock In: {TAPSOFT}, volume 915 of \emph{Lecture Notes in Computer
  Science}. Springer, 1995 pp. 364--378.
\newblock \doi{10.1007/3-540-59293-8\_207}.

\bibitem{acta/BestD15}
Best E, Devillers RR.
\newblock Synthesis and reengineering of persistent systems.
\newblock \emph{Acta Inf.}, 2015.
\newblock \textbf{52}(1):35--60.
\newblock \doi{10.1007/s00236-014-0209-7}.

\bibitem{BestS15}
Best E, Schlachter U.
\newblock Analysis of {P}etri Nets and Transition Systems.
\newblock In: Proceedings 8th Interaction and Concurrency Experience, {ICE}
  2015, Grenoble, France, 4-5th June 2015. 2015 pp. 53--67.
\newblock \doi{10.4204/EPTCS.189.6}.

\bibitem{apt13}
Borde D, Dierkes S, Ferrari R, Gieseking M, G{ö}bel V, Grunwald R, von~der
  Linde B, L{ü}ckehe D, Schlachter U, Schierholz C, Schwammberger M, Spreckels
  V.
\newblock Projektgruppe {APT}: {A}nalyse von {P}etri-{N}etzen und
  {T}ransitionssystemen. {F}inal report of a project assignment, 2013.
\newblock
  \urlprefix\url{https://github.com/CvO-Theory/apt/blob/master/doc/APT.pdf}.

\bibitem{synet}
Caillaud B.
\newblock Synet : A Synthesizer of Distributable Bounded Petri-Nets from Finite
  Automata, 1999.
\newblock \urlprefix\url{https://www.irisa.fr/s4/tools/synet/}.

\bibitem{CarmonaCK09}
Carmona J, Cortadella J, Kishinevsky M.
\newblock Genet: {A} Tool for the Synthesis and Mining of Petri Nets.
\newblock In: Ninth International Conference on Application of Concurrency to
  System Design, {ACSD} 2009, Augsburg, Germany, 1-3 July 2009. 2009 pp.
  181--185.
\newblock \doi{10.1109/ACSD.2009.6}.

\bibitem{petrify}
Cortadella J, Kishinevsky M, Kondratyev A, Lavagno L, Yakovlev A.
\newblock Petrify: A Tool for Manipulating Concurrent Specifications and
  Synthesis of Asynchronous Controllers, 1997.

\bibitem{txtcs/BadouelBD15}
Badouel E, Bernardinello L, Darondeau P.
\newblock {P}etri Net Synthesis.
\newblock Texts in Theoretical Computer Science. An {EATCS} Series. Springer,
  2015.
\newblock \doi{10.1007/978-3-662-47967-4}.

\bibitem{Verbeek:715007}
Verbeek HMW, Pretorius AJ, van~der Aalst WMP, van Wijk JJ.
\newblock {V}isualizing {S}tate {S}paces with {P}etri {N}ets.
\newblock Eindhoven University of Technology, Eindhoven, The Netherlands, 2007.
\newblock \urlprefix\url{https://publications.rwth-aachen.de/record/715007}.

\bibitem{topnoc/Carmona12}
Carmona J.
\newblock The Label Splitting Problem.
\newblock \emph{Trans. {P}etri Nets Other Model. Concurr.}, 2012.
\newblock \textbf{6}:1--23.
\newblock \doi{10.1007/978-3-642-35179-2\_1}.

\bibitem{topnoc/SchlachterW19}
Schlachter U, Wimmel H.
\newblock Relabelling {LTS} for {P}etri Net Synthesis via Solving Separation
  Problems.
\newblock \emph{Trans. {P}etri Nets Other Model. Concurr.}, 2019.
\newblock \textbf{14}:222--254.
\newblock \doi{10.1007/978-3-662-60651-3\_9}.

\bibitem{corr/abs-2002-04841}
Schlachter U, Wimmel H.
\newblock Optimal Label Splitting for Embedding an {LTS} into an arbitrary
  {P}etri Net Reachability Graph is {NP}-complete.
\newblock \emph{CoRR}, 2020.
\newblock \textbf{abs/2002.04841}.
\newblock \eprint{2002.04841},
  \urlprefix\url{https://arxiv.org/abs/2002.04841}.

\bibitem{ictcs/Tredup20}
Tredup R.
\newblock Finding an Optimal Label-Splitting to Make a Transition System
  {P}etri Net Implementable: a Complete Complexity Characterization.
\newblock In: {ICTCS}, volume 2756 of \emph{{CEUR} Workshop Proceedings}.
  CEUR-WS.org, 2020 pp. 131--144.

\bibitem{apn/Tredup21}
Tredup R.
\newblock Edge, Event and State Removal: The Complexity of Some Basic
  Techniques that Make Transition Systems Petri Net Implementable.
\newblock In: Application and Theory of Petri Nets and Concurrency - 42nd
  International Conference, {PETRI} {NETS} 2021, Virtual Event, June 23-25,
  2021, Proceedings. 2021 pp. 253--273.
\newblock \doi{10.1007/978-3-030-76983-3\_13}.

\bibitem{coco/Karp72}
Karp RM.
\newblock Reducibility Among Combinatorial Problems.
\newblock In: Miller RE, Thatcher JW (eds.), Proceedings of a symposium on the
  Complexity of Computer Computations, held March 20-22, 1972, at the {IBM}
  Thomas J. Watson Research Center, Yorktown Heights, New York, {USA}, The
  {IBM} Research Symposia Series. Plenum Press, New York, 1972 pp. 85--103.
\newblock \doi{10.1007/978-1-4684-2001-2\_9}.

\bibitem{sp/CyganFKLMPPS15}
Cygan M, Fomin FV, Kowalik L, Lokshtanov D, Marx D, Pilipczuk M, Pilipczuk M,
  Saurabh S.
\newblock Parameterized Algorithms.
\newblock Springer, 2015.
\newblock \doi{10.1007/978-3-319-21275-3}.

\bibitem{siamcomp/CrescenziKST99}
Crescenzi P, Kann V, Silvestri R, Trevisan L.
\newblock Structure in Approximation Classes.
\newblock \emph{{SIAM} J. Comput.}, 1999.
\newblock \textbf{28}(5):1759--1782.
\newblock \doi{10.1137/S0097539796304220}.

\bibitem{corr/abs-1305-1979}
Dinur I, Steurer D.
\newblock Analytical Approach to Parallel Repetition.
\newblock \emph{CoRR}, 2013.
\newblock \textbf{abs/1305.1979}.
\newblock \eprint{1305.1979}, \urlprefix\url{http://arxiv.org/abs/1305.1979}.

\bibitem{dnb/Schlachter18}
Schlachter U.
\newblock {P}etri net synthesis and modal specifications.
\newblock Ph.D. thesis, University of Oldenburg, Germany, 2018.
\end{thebibliography}

\end{document}